\crefname{figure}{Figure}{Figure}
\let\doendproof\endproof
\renewcommand\endproof{~\hfill\qed\doendproof}
\spnewtheorem{claim}{Claim}{\bfseries}{\rmfamily}
\spnewtheorem{fact}{Fact}{\bfseries}{\rmfamily}
\DeclareMathAlphabet{\mathpzc}{OT1}{pzc}{m}{it}
\newcommand{\Nats}{\mathbb{N}}
\newcommand{\Natsinf}{\mathbb{N}\cup\{\infty\}}
\newcommand{\Ints}{\mathbb{Z}}
\newcommand{\Intsminus}{\mathbb{Z}_{\leq 0}\cup\{-\infty\}}
\newcommand{\Intsplusminus}{\mathbb{Z}\cup\{-\infty,\infty\}}
\newcommand{\Level}{\mathsf{Lv}}
\newcommand{\LD}{\mathsf{LD}}
\newcommand{\ov}{\overline}
\newcommand{\Distance}{d}
\newcommand{\Weight}{\mathsf{wt}}
\newcommand{\Value}{\overline{\Weight}}
\newcommand{\argmin}{\mathsf{argmin}}
\newcommand{\Bag}{B}
\newcommand{\Tree}{\mathrm{Tree}}
\newcommand{\restr}{\upharpoonright}
\newcommand{\Ushape}{\mathsf{U}}
\newcommand{\Preprocessalgo}{\mathsf{Preprocess}}
\newcommand{\Updatealgo}{\mathsf{Update}}
\newcommand{\Comp}{\mathcal{C}}
\newcommand{\True}{\mathsf{True}}
\newcommand{\Cyclealgo}{\mathsf{MinCycle}}
\newcommand{\Energynodesalgo}{\mathsf{ZeroEnergyNodes}}
\newcommand{\Energynodesalgotw}{\mathsf{ZeroEnergyNodesTW}}
\newcommand{\Decisionenergyalgo}{\mathsf{DecisionEnergy}}
\newcommand{\Killcyclealgo}{\mathsf{KillCycle}}
\newcommand{\Time}{\mathcal{T}}
\newcommand{\Space}{\mathcal{S}}
\newcommand{\Energy}{\mathsf{E}}
\newcommand{\Timef}{\Weight^{\prime}}
\newcommand{\Intstriplets}{\mathcal{I}}
\newcommand{\Min}{\bm{\min}}
\newcommand{\Plus}{\bm{+}}
\newcommand{\Path}{\rightsquigarrow}
\newcommand{\Nodeset}{\mathpzc{V}}
\newcommand{\Edgeset}{\mathpzc{E}}
\newcommand{\Weightmap}{{\mathpzc{wt}}}
\newcommand{\Graphstruct}{\mathpzc{G}}
\author{Krishnendu Chatterjee$^\dag$
\qquad Rasmus Ibsen-Jensen$^\dag$ 
\qquad Andreas Pavlogiannis$^\dag$ \\[5pt]
}
\institute{$\strut^\dag$ IST Austria}
\title{Faster Algorithms for Quantitative Verification in Constant Treewidth Graphs
\thanks{The research was partly supported by Austrian Science Fund (FWF) Grant No P23499- N23, 
FWF NFN Grant No S11407-N23 (RiSE/SHiNE), ERC Start grant (279307: Graph Games), and 
Microsoft faculty fellows award.
}}
\begin{document}

\maketitle

\begin{abstract}
We consider the core algorithmic problems related to verification of systems with respect
to three classical quantitative properties, namely, the mean-payoff property, the ratio 
property, and the minimum initial credit for energy property. 
The algorithmic problem given a graph and a quantitative property asks to compute the 
optimal value (the infimum value over all traces) from every node of the graph. 
We consider graphs with constant treewidth, and it is well-known that the control-flow graphs 
of most programs have constant treewidth.
Let $n$ denote the number of nodes of a graph, $m$ the number of edges (for constant treewidth
graphs $m=O(n)$) and $W$ the largest absolute value of the weights.
Our main theoretical results are as follows.
First, for constant treewidth graphs we present an algorithm that approximates the mean-payoff value
within a multiplicative factor of $\epsilon$ in time $O(n \cdot \log (n/\epsilon))$ 
and linear space, as compared to the classical algorithms that require quadratic time.
Second, for the ratio property we present an algorithm that for constant treewidth graphs
works in time $O(n \cdot \log (|a\cdot b|))=O(n\cdot\log (n\cdot W))$, when the output is $\frac{a}{b}$, as compared to the previously best known algorithm
with running time $O(n^2 \cdot \log (n\cdot W))$.
Third, for the minimum initial credit problem we show that (i)~for general graphs the problem can be solved in $O(n^2\cdot m)$ time and the associated decision problem
can be solved in $O(n\cdot m)$ time, improving the previous known $O(n^3\cdot m\cdot \log (n\cdot W))$ and $O(n^2 \cdot m)$ bounds, respectively; and 
(ii)~for constant treewidth graphs we present an algorithm that requires $O(n\cdot \log n)$ time, improving the previous known 
$O(n^4 \cdot \log (n \cdot W))$ bound.
We have implemented some of our algorithms and show that they present a significant speedup 
on standard benchmarks. 
\end{abstract}

\section{Introduction}
\noindent{\bf Boolean vs quantitative verification.}
The traditional view of verification has been \emph{qualitative (Boolean)} 
that classifies traces of a system as ``correct" vs ``incorrect". 
In the recent years, motivated by applications to analyze resource-constrained
systems (such as embedded systems), there has been a huge interest to 
study \emph{quantitative} properties of systems. 
A quantitative property assigns to each trace of a system a real-number that
quantifies how good or bad the trace is, instead of classifying it as correct
vs incorrect. 
For example, a Boolean property may require that every request is eventually
granted, whereas a quantitative property for each trace can measure the 
average waiting time between requests and corresponding grants.

\noindent{\bf Variety of results.} 
Given the importance of quantitative verification, the traditional qualitative 
view of verification has been extended in several ways, such as,
quantitative languages and quantitative automata for specification 
languages~\cite{WeightedHandbook,CDH10,CDH11,CHO14,CDHRR10,Vel12,Droste}; 
quantitative logics for specification languages~\cite{BCHK11,Patricia,Orna};
quantitative synthesis for robust reactive systems~\cite{BCHJ10,BGHJ11,CHJS15};
a framework for quantitative abstraction refinement~\cite{CHR13}; 
quantitative analysis of infinite-state systems~\cite{CV12,Chatterjee15}; and
model measuring (that extends model checking)~\cite{HO14}, to name a few. 
The core algorithmic question for many of the above studies is a graph 
algorithmic problem that requires to analyze a graph wrt a quantitative
property.


\noindent{\bf Important quantitative properties.}
The three quantitative properties that have been studied for their relevance
in analysis of reactive systems are as follows.
First, the \emph{mean-payoff} property consists of a weight function that assigns 
to every transition an integer-valued weight and assigns to each trace the long-run
average of the weights of the transitions of the trace.
Second, the \emph{ratio} property consists of two weight functions (one of which is a 
positive weight function) and assigns to each trace the ratio of the two mean-payoff
properties (the denominator is wrt the positive function). 
The \emph{minimum initial credit for energy} property consists of a weight function 
(like in the mean-payoff property) and assigns to each trace the minimum number to
be added such that the partial sum of the weights for every prefix of the trace
is non-negative.
For example, the mean-payoff property is used for average waiting time, 
worst-case execution time analysis~\cite{CDH10,CHR13,Chatterjee15};
the ratio property is used in robustness analysis of systems~\cite{BGHJ11};
and the minimum initial credit for energy for measuring resource 
consumptions~\cite{Bouyer08}.

\noindent{\bf Algorithmic problems.}
Given a graph and a quantitative property, the value of a node is the infimum 
value of all traces that start at the respective node. 
The algorithmic problem (namely, the \emph{value} problem) for analysis of quantitative properties consists of a 
graph and a quantitative property, and asks to compute either the exact value 
or an approximation of the value for every node in the graph.
The algorithmic problems are at the heart of many applications, such as 
automata emptiness, model measuring, quantitative abstraction refinement, etc.

\noindent{\bf Treewidth of graphs.}
A very well-known concept in graph theory is the notion of {\em treewidth} of 
a graph, which is a measure of how similar a graph is to a tree 
(a graph has treewidth~1 precisely if it is a tree)~\cite{Robertson84}. 
The treewidth of a graph is defined based on a {\em tree decomposition} of 
the graph~\cite{Halin76}, see~\cref{sec:definitions} for a formal definition. 
Beyond the mathematical elegance of the treewidth property for graphs,
there are many classes of graphs which arise in practice and have constant treewidth. 
The most important example is that the control flow graphs of goto-free programs 
for many programming languages are of constant treewidth~\cite{Thorup98},
and it was also shown in~\cite{Gustedt02} that typically all Java programs have 
constant treewidth. 
For many other applications see the surveys~\cite{Bodlaender93,Bodlaender05}.
The constant treewidth property of graphs has also played an important role in 
logic and verification; for example, MSO (Monadic Second Order logic) queries
can be solved in polynomial time~\cite{C90} (also in log-space~\cite{Elberfeld10})
for constant-treewidth graphs;
parity games on graphs with constant treewidth can be solved in polynomial time
\cite{Obdrzalek03}; and there exist faster algorithms for probabilistic models 
(like Markov decision processes)~\cite{CL13}.
Moreover, recently it has been shown that the constant treewidth property is 
also useful for interprocedural analysis~\cite{Chatterjee15}.


\begin{table}
{\small
\renewcommand{\arraystretch}{1.2}
\centerline{
\begin{tabular}{|c|c|c|c|c|c|}
\hline
\multicolumn{3}{|c|}{Minimum mean-cycle value}  & \multicolumn{3}{c|}{Minimum ratio-cycle value} \\
\hline
\hline
Orlin \& Ahuja~\cite{OA92} & Karp~\cite{Karp78} & 
\parbox[t]{2.7cm}{\bf \centering Our result [Thm~\ref{them:min_mean_approx}] \\ ($\epsilon$-approximate)  }& Burns~\cite{B91} & Lawler~\cite{L76} &\parbox[t]{2.6cm}{\bf \centering Our result [Cor~\ref{cor:min_mean_exact}]}
\\
\hline
$O(n^{1.5}\cdot \log (n\cdot W))$ & $O(n^2)$ & $\mathbf{O(n\cdot \log (n/\epsilon))}$ & 
$O(n^3)$ & $O(n^2\cdot \log (n\cdot W))$ & $\mathbf{O(n\cdot \log (|a\cdot b|))}$
\\
\hline
\end{tabular}
}
}
\caption{Time complexity of existing and our solutions for the minimum mean-cycle value and ratio-cycle value problem in constant treewidth weighted graphs with $n$ nodes and largest absolute weight $W$, when the output is the (irreducible) fraction $\frac{a}{b}\neq 0$. 
}\label{tab:ratio}
\end{table}

\begin{table}
\renewcommand{\arraystretch}{1.2}
{\small
\centerline{
\begin{tabular}{|c|c|c|c|}
\hline
& Bouyer et. al.~\cite{Bouyer08} & \parbox[t]{2.7cm}{\bf \centering Our result  \\ {[Thm~\ref{them:energy_decision}, Cor~\ref{cor:energy}]}} & \parbox[t]{2.9cm}{\bf \centering Our result  [Thm~\ref{them:energy_tw}]  \\ (constant treewidth)}\\
\hline
\hline
Time (decision) & $O(n^2\cdot m)$ & $\mathbf{O(n\cdot m)}$ & $\mathbf{O(n\cdot \log n)}$\\
\hline
Time & $O(n^3\cdot m \cdot \log(n\cdot W))$ & $\mathbf{O(n^2 \cdot m)}$ & $\mathbf{O(n\cdot \log n)}$\\
\hline
Space & $O(n)$ & $\mathbf{O(n)}$ & $\mathbf{O(n)}$\\
\hline
\end{tabular}
}
}
\caption{Complexity of the existing and our solution for the minimum initial credit problem
on weighted graphs of $n$ nodes, $m$ edges, and largest absolute weight $W$.}\label{tab:energy}
\end{table}

\noindent{\bf Previous results and our contributions.}
In this work we consider general graphs and graphs with constant treewidth,
and the algorithmic problems to compute the exact value or an approximation
of the value for every node wrt to quantitative properties given as 
the mean-payoff, the ratio, or the minimum initial credit for energy.
We first present the relevant previous results, and then our contributions.

\noindent{\em Previous results.}
We consider graphs with $n$ nodes, $m$ edges, and let $W$ denote the largest 
absolute value of the weights. 
The running time of the algorithms is characterized by the number of arithmetic
operations (i.e., each operation takes constant time); and the space is 
characterized by the maximum number of integers the algorithm stores.
The classical algorithm for graphs with mean-payoff properties is the minimum
mean-cycle problem of Karp~\cite{Karp78}, and the algorithm 
requires $O(n\cdot m)$ running time and $O(n^2)$ space. 
A different algorithm was proposed in~\cite{Mad02} that requires $O(n\cdot m)$ running 
time and $O(n)$ space. 
Orlin and Ahuja~\cite{OA92} gave an algorithm running in time $O(\sqrt{n}\cdot m\cdot \log (n\cdot W))$.
For some special cases there exist faster approximation algorithms~\cite{CHKLR14}.
There is a straightforward reduction of the ratio problem to the 
mean-payoff problem.
For computing the exact minimum ratio, the fastest known strongly polynomial time algorithm is Burns' algorithm~\cite{B91} running in time $O(n^2\cdot m)$. Also, there is an algorithm by Lawler~\cite{L76} that uses  $O(n\cdot m\cdot \log (n\cdot W))$ time.
Many pseudopolynomial algorithms are known for the problem, with polynomial dependency on the numbers appearing in the weight function, see~\cite{DIG98}.
For the  minimum initial credit for energy problem, the decision problem (i.e., is the energy required for node $v$ at most $c$?) can be solved in $O(n^2\cdot m)$ time, leading to an $O(n^3\cdot m\cdot \log (n\cdot W))$ time algorithm for the minimum initial credit for energy problem~\cite{Bouyer08}.
All the above algorithms are for general graphs (without the constant-treewidth restriction).

\noindent{\em Our contributions.} Our main contributions are as follows.
\begin{compactenum}
\item \emph{Finding the mean-payoff and ratio values in constant-treewidth graphs.} 
We present two results for constant treewidth graphs.
First, for the exact computation we present an algorithm that requires $O(n\cdot \log (|a\cdot b|))$ time 
and $O(n)$ space, where $\frac{a}{b}\neq 0$ is the (irreducible) ratio/mean-payoff of the output. 
If $\frac{a}{b}=0$ then the algorithm uses $O(n)$ time. Note that $\log (|a\cdot b|)\leq 2\log (n\cdot W)$. 
We also present a space-efficient version of the algorithm that requires only $O(\log n)$ space. 
Second, we present an algorithm for finding an $\epsilon$-factor approximation that requires $O(n\cdot\log (n/\epsilon))$ time and 
$O(n)$ space, as compared to the $O(n^{1.5}\cdot \log (n\cdot W))$ time solution of Orlin \& Ahuja,
and the $O(n^2)$ time solution of Karp (see \cref{tab:ratio}).

\item \emph{Finding the minimum initial credit in graphs.} 
We present two results.
First, we consider the exact computation for general graphs, and 
present (i)~an $O(n \cdot m)$ time algorithm for the decision problem 
(improving the previous known  $O(n^2\cdot m)$ bound), and 
(ii)~an $O(n^2\cdot m)$ time algorithm to compute value of all nodes
(improving the previous known $O(n^3\cdot m \cdot \log(n\cdot W))$ bound).
Finally, we consider the computation of the exact value for graphs with constant 
treewidth and present an algorithm that requires $O(n\cdot \log n)$ time 
(improving the previous known  $O(n^4 \cdot \log(n\cdot W))$ bound) 
(see \cref{tab:energy}).

\item \emph{Experimental results.}
We have implemented our algorithms for the minimum mean cycle and minimum initial credit
problems and ran them on standard benchmarks 
(DaCapo suit~\cite{Blackburn06} for the minimum mean cycle problem, and
DIMACS challenges~\cite{dimacs} for the minimum initial credit problem).
For the minimum mean cycle problem, our results show that our algorithm
has lower running time than all the classical polynomial-time algorithms.
For the minimum initial credit problem, our algorithm provides a significant 
speedup 
over the existing method. 
Both improvements are demonstrated even on graphs of small/medium size.
Note that our theoretical improvements (better asymptotic bounds)
imply improvements for large graphs, and our improvements on medium size
graphs indicate that our algorithms have practical applicability with 
small constants.

\end{compactenum}

\noindent{\bf Technical contributions.}
The key technical contributions of our work are as follows:

\begin{compactenum}
\item \emph{Mean-payoff and ratio values in constant-treewidth graphs.}
Given a graph with constant treewidth, let $c^{\ast}$ be the smallest weight of a simple cycle.
First, we present a linear-time algorithm that computes $c^{\ast}$ exactly 
(if $c^{\ast}\geq 0$) or approximate within a polynomial factor (if $c^{\ast}<0$).
Then, we show that if the minimum ratio value $\nu^{\ast}$ is the irreducible fraction 
$\frac{a}{b}$, then $\nu^{\ast}$ can be computed by evaluating $O(\log (|a\cdot b|))$ inequalities 
of the form $\nu^{\ast} \geq \nu$. Each such inequality is evaluated by computing the smallest weight 
of a simple cycle in a modified graph. 
Finally, for $\epsilon$-approximating the value $\nu^{\ast}$, we show that 
$O(\log (n/\epsilon))$ such inequalities suffice.

\item \emph{Minimum initial credit problem.}
We show that for general graphs, the decision problem can be solved 
with two applications of Bellman-Ford-type algorithms, and the value problem
reduces to finding non-positive cycles in the graph, followed by one instance of the 
single-source shortest path problem.
We then show how the invariants of the algorithm for the value problem on general graphs can be
maintained by a particular graph traversal of the tree-decomposition for constant-treewidth graphs.
\end{compactenum}

\section{Definitions}\label{sec:definitions}
\noindent{\bf Weighted graphs.}
We consider \emph{finite weighted directed graphs} $G=(V,E,\Weight, \Timef)$ where $V$ is the set of $n$ \emph{nodes},
$E\subseteq V\times V$ is the edge relation of $m$ \emph{edges}, $\Weight:E \rightarrow \Ints$
is a \emph{weight function} that assigns an integer weight $\Weight(e)$ to each edge $e\in E$,
and $\Timef:E\rightarrow \Nats^{+}$ is a weight function that assigns strictly positive integer weights.
For technical simplicity, we assume that there exists at least one outgoing edge from every node.
In certain cases where the function $\Timef$ is irrelevant, we
will consider weighted graphs $G=(V,E,\Weight)$, i.e., without the function $\Timef$.

\noindent{\bf Finite and infinite paths.}
A \emph{finite path} $P=(u_1,\dots, u_j)$, is a sequence of nodes $u_i\in V$ such that
for all $1\leq i<j$ we have $(u_i,u_{i+1})\in E$. The \emph{length} of $P$ is $|P|=j-1$.
A single-node path has length $0$.
The path $P$ is \emph{simple} if there is no node repeated in $P$, and it is a \emph{cycle} if $j>1$ and $u_1=u_j$.
The path $P$ is a \emph{simple cycle} if $P$ is a cycle and the sequence $(u_2,\dots u_j)$
is a simple path. 
The functions $\Weight $ and $\Timef$ naturally extend to paths, so that the weight of
a path $P$ with $|P|>0$ wrt the weight functions $\Weight$ and $\Timef$ is
$\Weight(P)=\sum_{1\leq i<j} \Weight(u_i,u_{i+1})$ and
$\Timef(P)=\sum_{1\leq i<j} \Timef(u_i,u_{i+1})$.
The \emph{value} of $P$ is defined to be $\Value(P)=\frac{\Weight(P)}{\Timef(P)}$.
For the case where $|P|=0$, we define $\Weight(P)=0$, and $\Value(P)$ is undefined.
An \emph{infinite path}  $\mathcal{P}=(u_1, u_2, \dots)$ of $G$ is an infinite sequence of nodes such that
every finite prefix $P$ of $\mathcal{P}$ is a finite path of $G$.
The functions $\Weight$ and $\Timef$ assign to $\mathcal{P}$
a value in $\Intsplusminus$: we have
$\Weight(\mathcal{P})=\sum_i \Weight(u_i,u_{i+1})$ and
$\Timef(\mathcal{P})=\infty$.
For a (possibly infinite) path $P$, we use the notation $u\in P$ 
to denote that a node $u$ appears in $P$,
and $e\in P$ to denote that an edge $e$ appears in $P$.
Given a set $\Bag\subseteq V$, we denote with $P\cap \Bag$ the set 
of nodes of $\Bag$ that appear in $P$.
Given a finite path $P_1$ and a possibly infinite path $P_2$,
we denote with $P_1\circ P_2$ the path resulting from the concatenation
of $P_1$ and $P_2$.

\noindent{\bf Distances and witness paths.}
For nodes $u,v\in V$, we denote with $\Distance(u,v)=\inf_{P:u\Path v} \Weight(P)$
the \emph{distance} from $u$ to $v$. 
A finite path $P:u\Path v$ is a \emph{witness} of the distance $\Distance(u,v)$
if $\Weight(P)=\Distance(u,v)$. 
An infinite path $\mathcal{P}$ is a witness of the distance $\Distance(u,v)$ if
the  following conditions hold:
\begin{compactenum}
\item $\Distance(u,v)=\Weight(\mathcal{P})=-\infty$, and
\item $\mathcal{P}$ starts from $u$, and $v$ is reachable from every node of $\mathcal{P}$.
\end{compactenum}
Note that $\Distance(u,v)=\infty$ is not witnessed by any path.

\noindent{\bf Tree decompositions.}
A \emph{tree-decomposition} $\Tree(G)=(V_T, E_T)$ of $G$ is a tree such that
the following conditions hold:
\begin{compactenum}
\item $V_T=\{\Bag_0,\dots, \Bag_{n'-1}: \forall i~\Bag_i\subseteq V\}$ and $\bigcup_{\Bag_i\in V_T}\Bag_i=V$ (every node is covered).
\item For all $(u,v)\in E$ there exists $\Bag_i\in V_T$ such that $u,v\in \Bag_i$ (every edge is covered).
\item For all $i,j,k$ such that there is a bag $\Bag_k$ that appears in the simple path $\Bag_i\Path \Bag_j$ in $\Tree(G)$, 
we have $\Bag_i\cap \Bag_j\subseteq \Bag_k$ (every node appears in a contiguous subtree of $\Tree(G)$).
\end{compactenum}
The sets $\Bag_i$ which are nodes in $V_T$ are called \emph{bags}.
Conventionally, we call $\Bag_0$ the root of $\Tree(G)$, and denote with $\Level(\Bag_i)$
the level of $\Bag_i$ in $\Tree(G)$, with $\Level(\Bag_0)=0$.
We say that $\Tree(G)$ is \emph{balanced} if the maximum level is $\max_{\Bag_i}\Level(\Bag_i)=O(\log n')$,
and it is \emph{binary} if every bag has at most two children bags.
A bag $\Bag$ is called the \emph{root bag} of a node $u$ if $\Bag$ is the smallest-level
bag that contains $u$, and we often use $\Bag_u$ to refer to the root bag of $u$.
The \emph{width} of a tree-decomposition $\Tree(G)$ is the size of the largest bag minus $1$.
The treewidth of $G$ is the smallest width among the widths of all possible tree decompositions of $G$.
The following lemma gives a fundamental structural property of tree-decompositions.

\begin{lemma}\label{lem:separator_property}
Consider a graph $G=(V,E)$, a binary tree-decomposition $T=\Tree(G)$ and a bag $\Bag$ of $T$.
Denote with $(\Comp_i)_{1\leq i \leq 3}$ the components of $T$ created by removing $\Bag$ from $T$,
and let $V_i$ be the set of nodes that appear in bags of component $\Comp_i$.
For every $i\neq j$, nodes $u\in V_i$, $v\in V_j$ and $P:u\Path v$,
we have that $P\cap\Bag\neq\emptyset$  
(i.e., all paths between $u$ and $v$ go through some node in $\Bag$).
\end{lemma}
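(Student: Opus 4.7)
The plan is to argue by contradiction, exploiting the fact that the third property of a tree decomposition (the contiguous-subtree property) forces the subtree of bags containing any fixed node to be connected in $T$. The main idea is that whenever a path $P$ crosses from $V_i$ to $V_j$ with $i\neq j$, the ``component index'' of consecutive nodes along $P$ must change, and the bag that witnesses the corresponding edge must lie in both components simultaneously, which is only possible if it is the separator bag $\Bag$ itself.

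First I would establish the following auxiliary observation: for any node $w\in V\setminus\Bag$, the set of components $\Comp_m$ in which $w$ appears is a singleton. Indeed, if $w$ appeared in bags $\Bag'\in\Comp_m$ and $\Bag''\in\Comp_{m'}$ with $m\neq m'$, then the unique $\Bag'\Path \Bag''$ path in $T$ must pass through $\Bag$; by the third tree-decomposition property every bag on this path, including $\Bag$, would contain $w$, contradicting $w\notin\Bag$. Hence there is a well-defined map $\phi\colon V\setminus\Bag\to\{1,2,3\}$ with $w\in V_{\phi(w)}$.

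Next, let $P=(u_1,\dots,u_k)$ be a path from $u\in V_i$ to $v\in V_j$ with $i\neq j$, and suppose toward contradiction that $P\cap \Bag=\emptyset$. Then $\phi(u_\ell)$ is defined for every $\ell$, $\phi(u_1)=i$, and $\phi(u_k)=j$. Since $i\neq j$, there exists an index $\ell$ such that $\phi(u_\ell)\neq \phi(u_{\ell+1})$. By the second tree-decomposition property, the edge $(u_\ell,u_{\ell+1})\in E$ is covered by some bag $\Bag'$ containing both endpoints. Because $u_\ell\notin \Bag$, we have $\Bag'\neq \Bag$, so $\Bag'$ lies in some component $\Comp_m$. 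But then both $u_\ell$ and $u_{\ell+1}$ belong to $V_m$, so the singleton-component observation forces $\phi(u_\ell)=m=\phi(u_{\ell+1})$, contradicting the choice of $\ell$.

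The only mildly delicate step is the auxiliary observation about $\phi$ being well-defined; once that is in place, the rest is a short discrete ``intermediate value'' argument along the path. I do not anticipate a real obstacle: the proof uses only the three axioms of tree decompositions and the basic fact that the components $\Comp_i$ are, by construction, the connected pieces of $T\setminus\Bag$. Note that the hypothesis that $T$ is binary (so that there are at most three components) is used only to fix the range $1\leq i\leq 3$; the argument would go through verbatim for any number of components.
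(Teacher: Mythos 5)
Your proof is correct. The paper does not actually prove \cref{lem:separator_property}; it states it as a well-known structural (separator) property of tree decompositions, and your argument is precisely the standard one it implicitly relies on: the contiguity axiom makes the component index $\phi(w)$ well defined for every $w\notin\Bag$, and the edge-coverage axiom forces $\phi$ to be constant along any path avoiding $\Bag$, yielding the contradiction. Note also that the degenerate case of a single-node path $u=v$ is already handled by your well-definedness observation (it would force $i=j$), and your remark that the binary hypothesis only bounds the number of components by three is accurate.
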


\begin{theorem}\label{them:tree_decomp}
For every graph $G$ with $n$ nodes and constant treewidth, a balanced binary tree-decomposition $\Tree(G)$ of constant width and $O(n)$ bags can be constructed in
(1)~$O(n)$ time and space~\cite{Bodlaender95},
(2)~deterministic logspace (and hence polynomial time)~\cite{Elberfeld10}.
\end{theorem}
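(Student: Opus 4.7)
The plan is to derive the statement by composing known constructions rather than by a single self-contained argument. For part~(1), the pipeline has three phases: (i) apply Bodlaender's linear-time algorithm~\cite{Bodlaender95} to obtain \emph{some} tree-decomposition $\Tree'(G)$ of width equal to the treewidth of $G$ with $O(n)$ bags in $O(n)$ time and space; (ii) binarize $\Tree'(G)$; and (iii) balance the binarized decomposition to depth $O(\log n)$.

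The binarization step is routine: at any bag with $k>2$ children, I would replace it with a left-leaning chain of $k-1$ copies of the same bag, distributing the children along the chain. This preserves the width and all three defining properties of a tree-decomposition, and multiplies the number of bags by at most a constant factor. The balancing step is where the real content lies. I would carry it out by repeatedly locating a \emph{centroid} bag, making it the new root of the current sub-decomposition, and recursing on each subtree obtained after removing it; since each subtree then contains at most half the bags, the final depth is $O(\log n)$. Some care is required so that re-rooting does not violate the running-intersection property: whenever a centroid $\Bag$ is promoted, I would augment each child subtree by inserting a constant-sized ``interface'' at its boundary that re-introduces those nodes of $\Bag$ that also appear deeper in the subtree. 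Because the width is a constant, each such augmentation adds only constantly many nodes per bag, so the total number of bags stays $O(n)$ and the width remains constant.

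The main obstacle I expect is making the balancing run in \emph{linear} rather than $O(n \log n)$ time. A naive centroid decomposition would pay an extra logarithmic factor; a linear-time implementation needs amortized constant work per bag, for instance by precomputing subtree sizes once, maintaining them under centroid removal via an Euler-tour representation, and charging the interface insertions to the bags they touch. Once these bookkeeping invariants are in place, a straightforward induction on the recursion shows that the output has constant width, $O(n)$ bags, depth $O(\log n)$, and is binary, while total time and space remain $O(n)$.

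For part~(2), I would not redo the construction from scratch but appeal to the deterministic logspace tree-decomposition algorithm of Elberfeld, Jakoby, and Tantau~\cite{Elberfeld10}, which on graphs of constant treewidth already outputs a tree-decomposition of width equal to the treewidth. The binarization and centroid-balancing transformations described above are both simple local tree rewrites and can be implemented in logspace by recomputing the needed subtree sizes on demand; since logspace is closed under composition, the overall construction stays in deterministic logspace, and hence in polynomial time.
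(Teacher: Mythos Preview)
The paper does not prove this theorem at all: it is stated as a known fact and delegated entirely to the two citations, with no argument given in the text. Your proposal therefore goes well beyond what the paper provides.

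Your outline is broadly sound, and the three-phase pipeline (Bodlaender $\to$ binarize $\to$ balance) is indeed the standard way one assembles this statement from the literature. A couple of remarks on the details. First, you are right that the linear-time balancing is the only non-routine step; your centroid-plus-Euler-tour sketch is plausible but not quite complete, and in practice one usually cites a ready-made result (e.g., the tree-contraction based balancing of Bodlaender and Hagerup, which yields depth $O(\log n)$ with width blown up by only an additive constant, in linear time) rather than re-deriving it. Second, your interface-bag idea is the correct mechanism for preserving the running-intersection property under re-rooting, but note that naive centroid decomposition can increase the width multiplicatively (roughly tripling it) rather than additively; since the paper only claims ``constant width'' this is harmless, but it is worth being aware of. Third, for part~(2), the result of Elberfeld, Jakoby, and Tantau already encompasses the balanced form needed here, so appealing to it directly (as the paper does) is sufficient without layering your own balancing on top; your observation that the rewrites are logspace and that logspace composes is nevertheless correct.

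In short: the paper treats the theorem as a black box, while you have sketched a reasonable white-box derivation whose only soft spot is the linear-time bound for the balancing phase.
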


In the sequel we consider only balanced and binary tree-decompositions of constant width and $n'=O(n)$ bags
(and hence of height $O(\log n)$).
Additionally, we consider that every bag is the root bag of at most one node.
Obtaining this last property is straightforward, simply by replacing each bag $\Bag$ which is the root of $k>1$ nodes $x_1,\dots x_k$ with a chain of bags $\Bag_1, \dots, \Bag_k=\Bag$, where each $\Bag_i$ is the parent of $\Bag_{i+1}$,
and $\Bag_{i+1}=\Bag_i\cup\{x_{i+1}\}$. Note that this keeps the tree binary and increases its height by at most a constant factor, hence the resulting tree is also balanced.

Throughout the paper, we follow the convention that the maximum and minimum of the empty set
is $-\infty$ and $\infty$ respectively, i.e., $\max(\emptyset)=-\infty$ and $\min(\emptyset)=\infty$.
Time complexity is measured in number of arithmetic and logical operations, 
and space complexity is measured in number of machine words.
Given a graph $G$, we denote with $\Time(G)$ and $\Space(G)$ the time and space required
for constructing a balanced, binary tree-decomposition $\Tree(G)$.
We are interested in the following problems.

\noindent{\bf The minimum mean cycle problem~\cite{Karp78}.}
Given a weighted directed graph $G=(V,E,\Weight)$, the minimum mean cycle problem asks to determine
for each node $u$ the \emph{mean value} $\mu^{\ast}(u)=\min_{C\in \mathcal{C}_u}\frac{\Weight(C)}{|C|}$, 
where $\mathcal{C}_u$ is the set of simple cycles reachable from $u$ in $G$. 
A cycle $C$ with $\frac{\Weight(C)}{|C|}=\mu^{\ast}(u)$ is called a minimum mean cycle of $u$.
For $0<\epsilon<1$, we say that a value $\mu$ is an $\epsilon$-approximation of
the mean value $\mu^{\ast}(u)$ if $|\mu-\mu^{\ast}(u)|\leq \epsilon\cdot |\mu^{\ast}(u)|$.

\noindent{\bf The minimum ratio cycle problem~\cite{Hartmann93}.}
Given a weighted directed graph $G=(V,E,\Weight, \Timef)$,
the minimum ratio cycle problem asks to determine for each node $u$ the \emph{ratio value}
$\nu^{\ast}(u)=\min_{C\in\mathcal{C}_u}\Value(C)$,
where $\Value(C)=\frac{\Weight(C)}{\Timef(C)}$ and  $\mathcal{C}_u$ is the set of simple cycles reachable from $u$ in $G$. 
A cycle $C$ with $\Value(C)=\nu^{\ast}_u$ is called a minimum ratio cycle of $u$.
The minimum mean cycle problem follows as a special case of the minimum ratio cycle problem
for $\Timef(e)=1$ for each edge $e\in E$.

\noindent{\bf The minimum initial credit problem~\cite{Bouyer08}.}
Given a weighted directed graph $G=(V,E,\Weight)$, the minimum initial credit value problem
asks to determine for each node $u$ the smallest energy value $\Energy(u)\in \Natsinf$
with the following property: there exists an infinite path $\mathcal{P}=(u_1, u_2\dots)$ with $u=u_1$,
such that for every finite prefix $P$ of $\mathcal{P}$ we have $\Energy(u)+\Weight(P)\geq 0$.
Conventionally, we let $\Energy(u)=\infty$ if no finite value exists.
The associated decision problem asks given a node $u$ and an initial credit $c\in \Nats$
whether $\Energy(u)\leq c$.

\section{Minimum Cycle}\label{sec:min_cycle}



In the current section we deal with a related graph problem, namely the detection of a minimum-weight simple cycle of a graph.
In~\cref{sec:mean_cycle} we use solutions to the minimum cycle problem to obtain the minimum ratio and minimum mean values
of a graph.

\noindent{\bf The minimum cycle problem.}
Given a weighted graph $G=(V,E, \Weight)$, the minimum cycle problem
asks to determine the weight $c^{\ast}$ of a minimum-weight simple cycle in $G$,
i.e., $c^{\ast}=\min_{C\in \mathcal{C}} \Weight(C)$, where $\mathcal{C}$ is the set of simple cycles in $G$.

We describe the algorithm $\Cyclealgo$ that operates
on a tree-decomposition $\Tree(G)$ of an input graph $G$, and has the following properties.

\begin{compactenum}
\item If $G$ has no negative cycles, then $\Cyclealgo$ returns the weight  $c^{\ast}$ of a minimum-weight cycle in $G$.
\item If $G$ has negative cycles, then $\Cyclealgo$ returns a value that is at most a polynomial (in $n$) factor
smaller than $c^{\ast}$.
\end{compactenum}

\noindent{\bf $\Ushape$-shaped paths.}
Following the recent work of~\cite{Chatterjee15}, we define the important notion of
$\Ushape$-shaped paths in a tree-decomposition $\Tree(G)$.
Given a bag $\Bag$ and nodes $u,v\in\Bag$, we say that a path $P:u\Path v$
is \emph{$\Ushape$-shaped} in $\Bag$, if one of the following conditions hold:
\begin{compactenum}
\item Either $|P|>1$ and for all intermediate nodes $w\in P$, we have $\Bag$ is an ancestor of $\Bag_w$,
\item or $|P|\leq 1$ and $\Bag$ is $\Bag_u$ or $\Bag_v$ (i.e., $\Bag$ is the root bag of either $u$ or $v$).
\end{compactenum}
Informally, given a bag $\Bag$, a $\Ushape$-shaped path in $\Bag$
is a path that traverses intermediate nodes that exist only in the subtree of $\Tree(G)$ rooted in $\Bag$. 
The following remark follows from the definition of tree-decompositions, and
states that every simple cycle $C$ can be seen as a $\Ushape$-shaped path 
$P$ from the smallest-level node of $C$ to itself.
Consequently, we can determine the value $c^{\ast}$  by only considering
$\Ushape$-shaped paths in $\Tree(G)$.

\begin{remark}\label{rem:ushape_cycle}
Let $C=(u_1,\dots, u_k)$ be a simple cycle in $G$, and $u_j=\arg\min_{u_i\in C}\Level(u_i)$.
Then $P=(u_j, u_{j+1},\dots u_k, u_1,\dots, u_j)$ is a $\Ushape$-shaped path in $\Bag_{u_j}$,
and $\Weight(P)=\Weight(C)$.
\end{remark}

\noindent{\bf Informal description of $\Cyclealgo$.}
Based on $\Ushape$-shaped paths, the work of~\cite{Chatterjee15} presented a method for 
computing algebraic path properties on tree-decompositions with constant width,
where the weights of the edges come from a general semiring. 
Note that integer-valued weights are a special case of the tropical semiring.
Our algorithm $\Cyclealgo$ is similar to the algorithm $\Preprocessalgo$ from~\cite{Chatterjee15}.
It consists of a depth-first traversal of $\Tree(G)$, and for each examined bag $\Bag$
computes a \emph{local distance} map $\LD_{\Bag}:\Bag\times \Bag\rightarrow \Ints\cup\{\infty\}$ 
such that for each $u,v\in \Bag$, we have (i)~$\LD_{\Bag}(u,v)=\Weight(P)$ for some path $P:u\Path v$, 
and (ii)~$\LD_{\Bag}\leq \min_{P}\Weight(P)$, where $P$ are taken to be simple $u\Path v$ paths (or simple cycles)
that are $\Ushape$-shaped in $\Bag$.
This is achieved by traversing $\Tree(G)$ in post-order, and for each
root bag $\Bag_x$ of a node $x$, we update $\LD_{\Bag_x}(u,v)$
with $\LD_{\Bag_x}(u,x) + \LD_{\Bag_x}(x,v)$
(i.e., we do path-shortening from node $u$ to node $v$, by considering paths that go through $x$).
See \cref{fig:fig_ushape} for an illustration.

In the end, $\Cyclealgo$ returns $\min_{x}\LD_{\Bag_x}(x,x)$, i.e.,
the weight of the smallest-weight $\Ushape$-shaped (not necessarily simple) cycle it has discovered.
Algorithm \ref{algo:mincycle} gives $\Cyclealgo$ in pseudocode.
For brevity, in line~\ref{line:children} we consider that if $\{u,v\}\not\in E$ or 
$\{u,v\}\not \subseteq \Bag_i$  for some child $\Bag_i$ of $\Bag$, then $\LD_{\Bag_i}(u,v)=\infty$.

\begin{algorithm}
\small
\DontPrintSemicolon
\caption{$\Cyclealgo$}\label{algo:mincycle}
\KwIn{A weighted graph $G=(V,E,\Weight)$ and a balanced binary tree-decomposition $\Tree(G)$}
\KwOut{A value $c$}
\BlankLine
Assign $c\leftarrow \infty$\\
Apply a post-order traversal on $\Tree(G)$, and examine each bag $\Bag$ with children $\Bag_1, \Bag_2$\label{line:dfs}\\
\Begin{
\ForEach{$u, v\in \Bag$}{
Assign $\LD_{\Bag}(u,v)\leftarrow\min(\LD_{\Bag_1}(u,v), \LD_{\Bag_2}(u,v), \Weight(u,v))$\label{line:children}\\
}
Discard $\LD_{\Bag_1}, \LD_{\Bag_2}$\\
\uIf{$\Bag$ is the root bag of a node $x$}{
\ForEach{$u, v\in \Bag$}{
Assign $\LD'_{\Bag}(u,v)\leftarrow \min(\LD_{\Bag}(u,v), \LD_{\Bag}(u,x)+\LD_{\Bag}(x,v))$\label{line:update}\\
}
Assign $\LD_{\Bag}\leftarrow \LD'_{\Bag}$\\
Assign $c\leftarrow \min(c, \LD_{\Bag}(x,x))$\label{line:c}\\
}
}
\Return $c$
\end{algorithm}

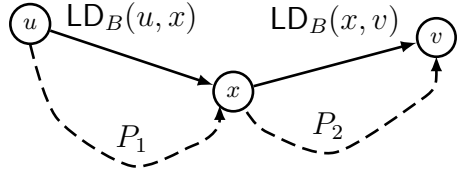
\begin{figure}
\centering
\begin{tikzpicture}[scale=0.9,
thick, >=latex,
pre/.style={<-,shorten >= 1pt, shorten <=1pt, thick},
post/.style={->,shorten >= 1pt, shorten <=1pt,  thick},
und/.style={very thick, draw=gray},
bag/.style={ellipse, minimum height=9mm,minimum width=18mm,draw=gray!80, line width=1pt},
node/.style={circle, minimum size=2mm, draw=black!100, line width=1pt, inner sep=3},
rootbag/.style={ellipse, minimum height=7mm,minimum width=14mm,draw=black!80, line width=2.5pt},
virt/.style={circle,draw=black!50,fill=black!20, opacity=0},
scale=1,every node/.style={scale=1}]

\node	[node]		(u)	at	(-3, 0)	{$u$};
\node	[virt]		(u2)	at	(-3, -0.3)	{};
\node	[node]		(x)	at	(0, -1)	{$x$};
\node	[virt]		(x3)	at	(-0.2, -1.6)	{};
\node	[virt]		(x4)	at	(0.2, -1.3)	{};
\node	[node]		(v)	at	(3, -0.2)	{$v$};
\node	[virt]		(v2)	at	(3, -0.88)	{};

\node	[virt]		(y3)	at	(-0.05, -1.3)	{};
\node	[virt]		(y4)	at	(0.05, -1.6)	{};

\node	[]		(P1)	at	(-1.5, -1.7)	{\large $P_1$};
\node	[]		(P2)	at	(1.4, -1.5)	{\large $P_2$};

\node	[]		(a1)	at	(-2.6, -1.4)	{};
\node	[]		(a2)	at	(-0.6,-1.8)	{};
\node	[]		(a3)	at	(-1.5,-2.1)	{};

\node	[]		(b2)	at	(2.8, -1.1)	{};
\node	[]		(b1)	at	(0.5,-1.6)	{};
\node	[]		(b3)	at	(1.5,-1.9)	{};


\draw [draw=black, line width =1, ->, dashed, dash pattern=on 2mm off 1mm] plot [smooth, ] coordinates {(u2) (a1) (a3) (a2) (x3) };
\draw [draw=black, line width =1,dashed, dash pattern=on 2mm off 1mm, ->] plot [smooth, ] coordinates {(x4) (b1) (b3) (b2) (v2) };

\draw [draw=black, line width =1, ->] (u) to node[above=0.2] {\large $\LD_{\Bag}(u,x)$} (x);
\draw [draw=black, line width =1, ->] (x) to node[above=0.25]{\large $\LD_{\Bag}(x,v)$} (v);

\end{tikzpicture}
\caption{Path shortening in line~\ref{line:update} of $\Cyclealgo$.
When $\Bag_x$ is examined, $\LD_{\Bag_x}(u,v)$ is updated with the weight of the $\Ushape$-shaped path $P=P_1\circ P_2$.
The paths $P_1$ and $P_2$ are $\Ushape$-shaped paths in the children bags $\Bag_1$ and $\Bag_2$, and we have $\LD_{\Bag_i}(u,x)=\Weight(P_i)$.}\label{fig:fig_ushape}
\end{figure}

In essence, $\Cyclealgo$ performs repeated summarizations of paths in $G$.
The following lemma follows easily from~\cite[Lemma~2]{Chatterjee15},
and states that $\LD_{\Bag}(u,v)$ is upper bounded by the smallest weight of a
$\Ushape$-shaped simple $u\Path v$ path in $\Bag$.

\begin{lemma}[{\cite[Lemma~2]{Chatterjee15}}]\label{lem:ushape}
For every examined bag $\Bag$ and nodes $u,v\in \Bag$, we have
\begin{compactenum}
\item  $\LD_{\Bag}(u,v)=\Weight(P)$ for some path $P:u\Path v$ (and $\LD_{\Bag}(u,v)=\infty$ if no such $P$ exists),
\item  $\LD_{\Bag}(u,v)\leq \min_{P:u\Path v}\Weight(P)$ where $P$ ranges over $\Ushape$-shaped simple paths and simple cycles in $\Bag$.
\end{compactenum}
\end{lemma}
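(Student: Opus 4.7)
The plan is to prove both statements simultaneously by induction on the order in which bags are examined during the post-order traversal of $\Tree(G)$. Statement~(1) is a structural invariant saying that $\LD_{\Bag}(u,v)$ is always the weight of some actual $u\Path v$ path; statement~(2) is the main content and bounds $\LD_{\Bag}(u,v)$ against every $\Ushape$-shaped simple path or simple cycle in $\Bag$.

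Statement~(1) requires only an inspection of $\Cyclealgo$: the value $\LD_{\Bag}(u,v)$ is assigned either $\Weight(u,v)$ (for an actual edge), an inherited value $\LD_{\Bag_i}(u,v)$, or the sum $\LD_{\Bag}(u,x)+\LD_{\Bag}(x,v)$ in line~\ref{line:update}. By the inductive hypothesis each of the first two options realizes an actual $u\Path v$ path, and the sum is realized by the concatenation of two such paths (which need not be simple, but the statement does not require simplicity). The default value $\infty$ is maintained whenever no such path has been witnessed yet.

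For statement~(2), I would fix a minimum-weight $\Ushape$-shaped simple path or simple cycle $P:u\Path v$ in $\Bag$ and case-split on $P$. If $|P|\leq 1$, the edge $(u,v)$ (or the trivial path) is directly accounted for in line~\ref{line:children}. If $|P|>1$, every intermediate node $w$ of $P$ satisfies that $\Bag$ is an ancestor of $\Bag_w$, and I would split further on whether $\Bag$ is the root bag of some node $x$ and whether $x$ itself appears in $P$. In the sub-case where no intermediate of $P$ lies in $\Bag$, I would argue via \cref{lem:separator_property} that all intermediates of $P$ are contained in a single child subtree, say of $\Bag_i$, and that $u,v\in\Bag_i$ by the contiguous-subtree property of tree-decompositions. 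Then $P$ is $\Ushape$-shaped in $\Bag_i$, the inductive hypothesis yields $\LD_{\Bag_i}(u,v)\leq\Weight(P)$, and line~\ref{line:children} propagates this bound to $\LD_{\Bag}(u,v)$. In the remaining sub-case, $\Bag=\Bag_x$ and $x$ appears in $P$ as an intermediate; splitting $P=P_1\circ P_2$ at $x$, each $P_i$ is handled by the previous sub-case against the appropriate child, after which line~\ref{line:update} combines the two bounds via $\LD_{\Bag}(u,x)+\LD_{\Bag}(x,v)\leq\Weight(P_1)+\Weight(P_2)=\Weight(P)$.

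The main obstacle will be the separator-based containment argument in the easy sub-case: I need to rule out an edge of $P$ crossing from the subtree of $\Bag_1$ to that of $\Bag_2$. Such a crossing edge would force its endpoints to co-appear in a bag, and by contiguity of node occurrences in $\Tree(G)$ that shared bag would have to be $\Bag$ itself; but then one endpoint would be a node of $\Bag$ sitting strictly between $u$ and $v$ in $P$, and its root bag would either equal $\Bag$ (forcing it to be $x$, contrary to assumption) or lie above $\Bag$ (contradicting the $\Ushape$-shape condition on intermediates). A symmetric argument handles the endpoints $u,v$ themselves via the neighbor in $P$ closest to each of them, yielding $u,v\in\Bag_i$. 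Once this containment argument is in place, the inductive step closes cleanly, and the overall development parallels Lemma~2 of~\cite{Chatterjee15}.
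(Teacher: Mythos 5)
Your argument is sound, and it is worth noting that the paper itself does not prove \cref{lem:ushape} at all: it imports the statement from~\cite[Lemma~2]{Chatterjee15}, so what you have written is a reconstruction of the cited proof rather than a variant of anything in this paper. Your reconstruction is the right one: induction over the post-order traversal, with statement~(1) by inspection of the three ways a finite value can be assigned, and statement~(2) by splitting a $\Ushape$-shaped $P$ according to whether an intermediate node of $P$ lies in $\Bag$ (which, by the contiguity property, forces that node to be $x$ with $\Bag=\Bag_x$), handling the latter case by cutting $P$ at $x$ and letting line~\ref{line:update} add the two bounds, and the former by pushing $P$ down to a single child via \cref{lem:separator_property} and line~\ref{line:children}. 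The only imprecision is in your crossing-edge argument: a bag covering an edge between the two child subtrees need not be $\Bag$ itself; what contiguity actually gives is that at least one endpoint of such an edge must occur in $\Bag$ (its occurrence subtree would otherwise have to cross $\Bag$ without containing it), after which your dichotomy (root bag equal to $\Bag$, hence the node is $x$, versus root bag a strict ancestor, contradicting $\Ushape$-shapedness) goes through unchanged; the same remark applies to your "shared bag" phrasing when placing $u$ and $v$ in the child bag. Also, the parenthetical appeal to the trivial path in the $|P|\leq 1$ case is unnecessary, since the minimum in statement~(2) only ranges over simple paths between distinct $u,v$ and over simple cycles, all of which have length at least one. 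With these small repairs the induction closes, and the development indeed parallels Lemma~2 of~\cite{Chatterjee15}.
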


At the end of the computation, the returned value $c$ is the weight of a
(generally non-simple) cycle $C$, captured as a $\Ushape$-shaped path on its smallest-level node.
The cycle $C$ can be recovered by tracing 
backwards the updates of line~\ref{line:update} performed by the algorithm,
starting from the node $x$ that performed the last update in line~\ref{line:c}.
Hence, if $C$ traverses $k$ distinct edges, we can write

\begin{equation}\label{eq:c}
c=\Weight(C)=\sum_{i=1}^{k} k_i\cdot \Weight(e_i)
\end{equation}

where each $e_i$ is a distinct edge, and $k_i$ is the number of times it appears in $C$.

\begin{lemma}\label{lem:edge_multiplicity}
Let $h$ be the height of $\Tree(G)$. For every $k_i$ in~\cref{eq:c}, we have $k_i\leq 2^h$.
\end{lemma}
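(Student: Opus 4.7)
I plan to proceed by induction on the height of the subtree of $\Tree(G)$ rooted at each bag $\Bag$. The invariant I would establish is that, after $\Bag$ has been processed by $\Cyclealgo$, every path underlying a computed value $\LD_{\Bag}(u,v)$ has each edge of $G$ appearing at most $2^{h_\Bag-1}$ times, where $h_\Bag$ denotes the height of the subtree of $\Tree(G)$ rooted at $\Bag$ (with leaf bags having height $1$). Since the recovered cycle $C$ is the path underlying $\LD_{\Bag_x}(x,x)$ for the root bag $\Bag_x$ at which line~\ref{line:c} achieved the final minimum, and $h_{\Bag_x}\leq h$, this immediately gives $k_i\leq 2^{h-1}\leq 2^h$.

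For the base case, if $\Bag$ is a leaf of $\Tree(G)$ then line~\ref{line:children} reduces to $\LD_{\Bag}(u,v)=\Weight(u,v)$ (or $\infty$), because the children $\LD$-tables are vacuously $\infty$. If $\Bag$ additionally happens to be the root bag of some $x$, line~\ref{line:update} may replace this by $\Weight(u,x)+\Weight(x,v)$, but in every case the underlying path consists of at most two distinct edges, so each edge of $G$ appears at most once.

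For the inductive step, let $\Bag$ have children $\Bag_1,\Bag_2$. After line~\ref{line:children}, $\LD_\Bag(u,v)$ equals one of $\LD_{\Bag_1}(u,v),\,\LD_{\Bag_2}(u,v),\,\Weight(u,v)$, so by the inductive hypothesis applied to $\Bag_1,\Bag_2$, the multiplicity of every edge in the underlying path is bounded by $\max(2^{h_{\Bag_1}-1},2^{h_{\Bag_2}-1},1)\leq 2^{h_\Bag-2}$. If $\Bag$ is the root bag of some $x$, line~\ref{line:update} may then replace $\LD_\Bag(u,v)$ by the concatenation of the paths underlying the \emph{pre-update} values $\LD_\Bag(u,x)$ and $\LD_\Bag(x,v)$; both sub-paths individually satisfy the $2^{h_\Bag-2}$ bound, so after concatenation each edge of $G$ appears at most $2\cdot 2^{h_\Bag-2}=2^{h_\Bag-1}$ times, closing the induction.

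The main subtlety — and essentially the only place the argument can go wrong — is the insistence that both arguments of line~\ref{line:update} are \emph{pre-update} values of $\LD_\Bag$. This is ensured by the algorithm's use of the separate table $\LD'_\Bag$, which receives the new values while $\LD_\Bag$ retains the pre-update ones until the reassignment at the end of the pass over $\Bag$; otherwise, a bag could iteratively feed its updated entries back into itself and square rather than double the multiplicities, destroying the bound. Given this invariant, at most one doubling occurs per bag along any chain of dependencies in the recovered cycle, and since such a chain descends strictly through $\Tree(G)$, the total number of doublings is at most the height $h$, which yields the claimed $2^h$ bound.
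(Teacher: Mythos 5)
Your proof is correct and follows essentially the same argument as the paper: each bag contributes at most one doubling via line~\ref{line:update} (because both summands are pre-update entries of $\LD_{\Bag}$), and there are at most $h$ such bags up the tree, giving $k_i\leq 2^h$. You merely package this as an induction on subtree height with a uniform invariant on all $\LD_{\Bag}$ entries, rather than tracking a single edge as the algorithm backtracks from its first bag to the root, and your explicit treatment of the $\LD'_{\Bag}$/pre-update subtlety is a point the paper leaves implicit.
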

\begin{proof}
Note that the edge $e_i=(u_i,v_i)$ is first considered by $\Cyclealgo$ in the root bag $\Bag_i$ of
node $x_i$, where $x_i=\arg\max_{y_i\in\{u_i,v_i\}}\Level(y_i)$ (line~\ref{line:update}). 
As $\Cyclealgo$ backtracks from $\Bag_i$ to the root of $\Tree(G)$,
the edge $e_i$ can be traversed at most twice as many times in each step (because of line~\ref{line:update}, once for each term
of the sum $\LD_{\Bag}(u,x)+\LD_{\Bag}(x,v)$). 
Hence, this doubling will occur at most $h$ times, and $k_i\leq 2^h$.
\end{proof}

\begin{lemma}\label{lem:cyclealgo_correctness}
Let $c$ be the value returned by $\Cyclealgo$, $h$ be the height of $\Tree(G)$,
and $c^{\ast}=\min_C \Weight(C)$ over all simple cycles $C$ in $G$.
The following assertions hold:
\begin{compactenum}
\item \label{item:no_neg_cycle} If $G$ has no negative cycles, then $c=c^{\ast}$.
\item If $G$ has a negative cycle, then
\begin{compactenum}
\item \label{item:neg_cycle1} $c\leq c^{\ast}$.
\item \label{item:neg_cycle2}  $|c|=O\left(|c^{\ast}|\cdot n\cdot 2^h\right)$.
\end{compactenum}
\end{compactenum}
\end{lemma}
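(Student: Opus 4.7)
The plan is to prove the three sub-claims in the order (2a), (1), (2b), since (2a) provides the upper bound $c \le c^\ast$ that feeds into the other two.

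\emph{Step 1 (part \ref{item:neg_cycle1}).} I would first establish $c\le c^\ast$ unconditionally. Fix any simple cycle $C^\circ$ of $G$ and let $x$ be its smallest-level node. By \cref{rem:ushape_cycle}, $C^\circ$ can be viewed as a $\Ushape$-shaped path from $x$ to $x$ in $\Bag_x$ with weight $\Weight(C^\circ)$. By \cref{lem:ushape}(2), immediately after $\Bag_x$ is processed we have $\LD_{\Bag_x}(x,x)\le \Weight(C^\circ)$. Since line~\ref{line:c} updates $c$ with $\LD_{\Bag_x}(x,x)$ for every root bag $\Bag_x$, taking the minimum over $C^\circ$ yields $c\le c^\ast$.

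\emph{Step 2 (part \ref{item:no_neg_cycle}).} Assume $G$ has no negative cycle, so $c^\ast\ge 0$. By \cref{lem:ushape}(1), the value attained in line~\ref{line:c} is $\Weight(W)$ for some closed walk $W:x\Path x$ of positive length. Decompose $W$ into simple cycles $C_1,\dots,C_s$ (repeatedly remove a simple cycle formed between two occurrences of the same vertex). Then $c = \Weight(W) = \sum_{j=1}^s \Weight(C_j)$, and each $\Weight(C_j)\ge c^\ast\ge 0$, so $c \ge s\cdot c^\ast \ge c^\ast$ (using $s\ge 1$). Combined with Step~1, $c=c^\ast$.

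\emph{Step 3 (part \ref{item:neg_cycle2}).} Suppose $G$ has a negative cycle, so $c^\ast<0$ and, by Step~1, $c\le c^\ast<0$. Reusing the decomposition $c = \sum_{j=1}^s \Weight(C_j)$ with $\Weight(C_j)\ge c^\ast$, we get
\[
|c| \;=\; -c \;=\; \sum_{j=1}^s (-\Weight(C_j)) \;\le\; s \cdot (-c^\ast) \;=\; s\cdot |c^\ast|.
\]
It remains to bound $s$. The walk $W$ recovered from \cref{eq:c} has edge-length $|W|=\sum_{i=1}^k k_i$, where $k$ is the number of distinct edges in $W$ and, by \cref{lem:edge_multiplicity}, each $k_i\le 2^h$. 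Since $k\le m=O(n)$ under constant treewidth, $|W|=O(n\cdot 2^h)$. Any simple-cycle decomposition of a closed walk uses at most $|W|$ cycles (each simple cycle contributes at least one edge), so $s=O(n\cdot 2^h)$ and therefore $|c|=O(|c^\ast|\cdot n\cdot 2^h)$, as claimed.

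The only nontrivial point is the simple-cycle decomposition of $W$ and the bound $s\le |W|$; both are standard but need to be invoked carefully, because \cref{lem:ushape} only guarantees that $\LD_{\Bag_x}(x,x)$ is the weight of \emph{some} (possibly non-simple) closed walk, not a single cycle. Once that is phrased cleanly, the three parts of the lemma all fall out of \cref{rem:ushape_cycle}, \cref{lem:ushape}, and \cref{lem:edge_multiplicity}.
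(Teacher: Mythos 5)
Your proposal is correct and follows essentially the same route as the paper: both rest on \cref{rem:ushape_cycle}, \cref{lem:ushape}, and \cref{lem:edge_multiplicity}, and both bound $|c|$ by decomposing the recovered closed walk of \cref{eq:c} into simple cycles, each of weight at least $c^{\ast}$, with the cycle count bounded by $O(n\cdot 2^h)$. The only (cosmetic) differences are that you keep all cycles in the decomposition and bound their number by the walk length, where the paper discards the non-negative cycles and counts only negative-cycle multiplicities, and your derivation of item~\ref{item:no_neg_cycle} makes explicit the same decomposition argument the paper states tersely.
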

\begin{proof}
By \cref{rem:ushape_cycle}, we have that $c^{\ast}=\Weight(P)$ for a $\Ushape$-shaped path $P:x\Path x$.
By \cref{lem:ushape}, after $\Cyclealgo$ examines $\Bag_x$, it will be $c\leq \LD_{\Bag_x}(x,x)\leq c^{\ast}$,
with the equalities holding if there are no negative cycles in $G$ (by the definition of $c^{\ast}$, as then $\LD_{\Bag_x}(x,x)$ is witnessed by a simple cycle).
By line~\ref{line:update}, $c$ can only decrease afterwards, and again by the definition of $c^{\ast}$ this can
only happen if there are negative cycles in $G$.
This proves items~\ref{item:no_neg_cycle} and \ref{item:neg_cycle1},
and the remaining of the proof focuses on showing that $|c|=O\left(|c^{\ast}|\cdot n\cdot 2^h\right)$.

By rearranging the sum of \cref{eq:c}, we can decompose the obtained cycle
$C$ into a set of $k'^{+}$ non-negative cycles $C^{+}_i$, and a set of $k'^{-}$ negative cycles $C^{-}_i$,
and each cycle $C^{+}_i$ and $C^{-}_i$ appears with multiplicity $k^{+}_i$ and $k^{-}_i$ respectively.
Then we have

\begin{align}
|c|&=|\Weight(C)| = \left|\sum_{i=1}^{k'^{+}}k^{+}_i\cdot \Weight(C^{+}_i) + \sum_{i=1}^{k'^{-}}k^{-}_i\cdot \Weight(C^{-}_i)\right| \leq  \left|\sum_{i=1}^{k'^{-}}k^{-}_i\cdot \Weight(C^{-}_i)\right| \notag\\
&\leq \sum_{i=1}^{k^{-}}k^{-}_i\cdot |\Weight(C^{-}_i)|\leq |c^{\ast}|\cdot \sum_{i=1}^{k'^{-}}k^{-}_i \leq |c^{\ast}|\cdot \sum_{i=1}^{k}k^{-}_i =O\left(|c^{\ast}|\cdot n\cdot 2^h\right)
\end{align}
The first inequality follows from $c<0$, the third inequality holds by the definition of $c^{\ast}$,
and the last inequality holds since $k'^{-}\leq k$.
Finally, we have $\sum_{i=1}^{k} k^{-}_i=O\left(n\cdot 2^h\right)$,
since $k=O(n)$, and by \cref{lem:edge_multiplicity} we have $k^{-}_i\leq 2^h$.
\end{proof}

Next we discuss the time and space complexity of $\Cyclealgo$.

\begin{lemma}\label{lem:cyclealgo_complexity}
Let $h$ be the height of $\Tree(G)$.
$\Cyclealgo$ accesses  each bag of $\Tree(G)$ a constant number of times,
and uses $O(h)$ additional space.
\end{lemma}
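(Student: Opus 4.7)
The plan is to establish both claims by directly inspecting the control flow of $\Cyclealgo$ on the tree-decomposition $\Tree(G)$.

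For the access bound, I would first observe that line~\ref{line:dfs} is a post-order DFS traversal of $\Tree(G)$, so each bag $\Bag$ is entered once from its parent (on the way down) and once at its own post-order visit (when its children have been processed). At the post-order visit the algorithm executes a constant-sized block of code on $\Bag$: it initializes $\LD_\Bag$ by combining $\LD_{\Bag_1}$ and $\LD_{\Bag_2}$ on the constantly many pairs $(u,v) \in \Bag \times \Bag$ (line~\ref{line:children}), optionally performs the path-shortening through the root node $x$ (line~\ref{line:update}), and updates $c$ (line~\ref{line:c}). Since $|\Bag|$ is bounded by the width plus one, which is constant, each such block touches $\Bag$ and its stored data a constant number of times. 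After the post-order visit of $\Bag$, the parent of $\Bag$ consults $\LD_\Bag$ exactly once (in its own line~\ref{line:children}) and then discards it. In total, each bag of $\Tree(G)$ is accessed a constant number of times.

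For the space bound, I would argue that at any point during the execution, the only local distance maps kept in memory are those attached to bags on the current DFS stack together with already-processed siblings of ancestors whose results are still awaiting use by their parent. Since $\Tree(G)$ is binary, each ancestor on the stack has at most one such sibling, so the number of $\LD$-maps simultaneously stored is at most $2h + O(1)$. Each such map is defined on $\Bag \times \Bag$ with $|\Bag|$ bounded by the constant width, so it occupies $O(1)$ machine words. The explicit ``discard'' step in the algorithm after line~\ref{line:children} is what enforces this, so that maps of already-completed subtrees never persist. Together with the DFS stack itself (which holds $O(h)$ pointers) and the scalar $c$, the total additional space is $O(h)$.

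The only mildly subtle point will be to make precise that after the discard, the children's maps really are freed before the next recursive call into a sibling subtree; this is what prevents the space from blowing up to $O(n)$. I expect this to be a short bookkeeping argument rather than a real obstacle, since it follows directly from the placement of the discard step inside the post-order block, before any further recursion happens. The rest of the proof is a straightforward traversal-and-counting argument.
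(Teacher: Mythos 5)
Your proposal is correct and follows essentially the same argument as the paper: the post-order traversal gives a constant number of accesses per bag, and since each $\LD_{\Bag}$ map has constant size (constant width) and only the maps along the current traversal path (plus processed siblings awaiting the parent, thanks to the discard step) are retained, the additional space is $O(h)$. Your accounting of at most $2h+O(1)$ simultaneously stored maps is just a slightly more explicit version of the paper's statement that the number of remembered maps is bounded by the height.
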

\begin{proof}
$\Cyclealgo$ accesses each bag a constant number of times, as it performs a 
post-order traversal  on $\Tree(G)$ (line~\ref{line:dfs}).
Because it computes the local distances in a postorder manner,
the number of local distance maps $\LD_{\Bag}$ it remembers is bounded
by the height $h$ of $\Tree(G)$. 
Since $\Tree(G)$ has constant width, $\LD_{\Bag}$ requires a constant number of words
for storing a constant number of nodes and weights in each $\Bag$.
Hence the total space usage is $O(h)$, and the result follows.
\end{proof}

The following theorem summarizes the results of this section.

\begin{theorem}\label{them:min_cycle}
Let $G=(V,E,\Weight)$ be a weighted graph of $n$ nodes with constant treewidth,
and a balanced, binary tree-decomposition $\Tree(G)$ of $G$ be given.
Let $c^{\ast}$, be the smallest weight of a simple cycle in $G$.
Algorithm $\Cyclealgo$ uses $O(n)$ time and $O(\log n)$ additional space, and returns a value $c$ such that:
\begin{compactenum}
\item \label{item:them_no_neg_cycle} If $G$ has no negative cycles, then $c=c^{\ast}$.
\item If $G$ has a negative cycle, then
\begin{compactenum}
\item \label{item:them_neg_cycle1} $c\leq c^{\ast}$.
\item \label{item:them_neg_cycle2} $|c|=|c^{\ast}|\cdot n^{O(1)}$.
\end{compactenum}
\end{compactenum}
\end{theorem}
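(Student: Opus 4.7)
The plan is to observe that \cref{them:min_cycle} is essentially a compilation of the two preceding lemmas (\cref{lem:cyclealgo_correctness} and \cref{lem:cyclealgo_complexity}), specialized to the setting where the tree-decomposition $\Tree(G)$ is balanced and binary, and hence has height $h = O(\log n)$ by \cref{them:tree_decomp}. So the approach is to instantiate $h$ in the earlier lemmas and verify that the stated bounds follow.

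First, I would handle the complexity claim. By \cref{lem:cyclealgo_complexity}, $\Cyclealgo$ visits each bag of $\Tree(G)$ a constant number of times. Since the tree-decomposition has $O(n)$ bags (by the assumption stated after \cref{them:tree_decomp}) and the width is constant, the work done at each bag is $O(1)$: lines that iterate over pairs $u,v \in \Bag$ and over children $\Bag_1,\Bag_2$ involve only a constant number of entries. This gives $O(n)$ total time. For space, \cref{lem:cyclealgo_complexity} already gives $O(h)$ additional space, and since $h = O(\log n)$, we get $O(\log n)$ additional space.

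Next I would address the correctness statements. Items~\ref{item:them_no_neg_cycle} and \ref{item:them_neg_cycle1} of the theorem are restatements of items~\ref{item:no_neg_cycle} and \ref{item:neg_cycle1} of \cref{lem:cyclealgo_correctness}, so they transfer verbatim. For item~\ref{item:them_neg_cycle2}, \cref{lem:cyclealgo_correctness} gives $|c| = O(|c^{\ast}|\cdot n\cdot 2^h)$; plugging in $h = O(\log n)$ yields $2^h = n^{O(1)}$, hence $|c| = |c^{\ast}| \cdot n^{O(1)}$, which is exactly the stated bound.

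There is no real obstacle here — the theorem is a packaging step. The only small care needed is to confirm that the $O(n)$ time bound really holds once the per-bag work is charged; this is routine since the width is constant and the local distance maps $\LD_\Bag$ have constant size, so each of lines~\ref{line:children} and \ref{line:update} performs $O(1)$ arithmetic operations. Everything else follows by direct quotation of the prior lemmas combined with the logarithmic height of the balanced binary tree-decomposition.
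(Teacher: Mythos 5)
Your proposal is correct and matches the paper's (implicit) argument: the theorem is presented there as a summary of \cref{lem:cyclealgo_correctness} and \cref{lem:cyclealgo_complexity}, instantiated with height $h=O(\log n)$ and $O(n)$ bags of constant width, exactly as you do. The packaging of $2^h=n^{O(1)}$ and the per-bag $O(1)$ work giving $O(n)$ time is precisely what is intended.
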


\section{The Minimum Ratio and Mean Cycle Problems}\label{sec:mean_cycle}
In the current section we present algorithms for solving the minimum ratio and mean cycle problems
for weighted graphs $G=(V,E,\Weight, \Timef)$ of constant treewidth.

\begin{remark}\label{rem:scc}
If $G$ is not strongly connected we can compute its strongly connected components
in linear time~\cite{Tarjan72}, and use the algorithms of this section
to compute the minimum cycle ratio $\nu^{\ast}_i$ in every component $\mathcal{G}_i$ separately.
Afterwards, we compute $\nu^{\ast}(u)$ for every node $u$
by iteratively (i)~finding the nodes $u$ that can reach the component $\mathcal{G}_j$ where $j=\argmin_i \nu^{\ast}_i$,
(ii)~assigning $\nu^{\ast}(u) = \nu^{\ast}_j$, and (iii)~removing $\mathcal{G}_j$ and repeating.
Since these operations require linear time, they do not impact the time complexity.
\end{remark}

In light of \cref{rem:scc}, we consider graphs that are strongly
connected, and hence it follows that $\nu^{\ast}(u)$ is the same for every node 
$u$, and thus we will speak about the minimum ratio $\nu^{\ast}$ and mean $\mu^{\ast}$
values of $G$.

\begin{claim}\label{claim:value}
Let $\nu^{\ast}$ be the ratio value of $G$. 
Then $\nu^{\ast}\geq \nu$ iff for every cycle $C$ of $G$
we have $\Weight_{\nu}(C)\geq 0$, 
where $\Weight_{\nu}(e)=\Weight(e)-\Timef(e)\cdot \nu$ for each edge $e\in E$.
\end{claim}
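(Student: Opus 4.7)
The plan is to prove the claim by directly translating the inequality $\Weight_\nu(C) \geq 0$ into a statement about $\Value(C)$ and then invoking the definition of $\nu^\ast$ as a minimum over simple cycles.

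First, I would establish the key identity that for any cycle $C$,
\[
\Weight_\nu(C) = \sum_{e \in C} \bigl(\Weight(e) - \Timef(e)\cdot \nu\bigr) = \Weight(C) - \nu \cdot \Timef(C),
\]
which follows from linearity of the sum. Since $\Timef$ takes strictly positive integer values, for any cycle $C$ we have $\Timef(C) > 0$, so dividing is safe. Hence $\Weight_\nu(C) \geq 0$ is equivalent to $\Value(C) = \Weight(C)/\Timef(C) \geq \nu$.

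Next I would handle the two directions of the biconditional. For the forward direction, assume $\nu^\ast \geq \nu$. By definition $\nu^\ast = \min_{C \in \mathcal{C}} \Value(C)$ over simple cycles, so every simple cycle $C$ satisfies $\Value(C) \geq \nu$, and therefore $\Weight_\nu(C) \geq 0$. For an arbitrary (possibly non-simple) cycle $C$, one can decompose $C$ into a collection of simple cycles $C_1, \dots, C_k$ so that $\Weight_\nu(C) = \sum_{i=1}^k \Weight_\nu(C_i) \geq 0$. For the reverse direction, assume every cycle has $\Weight_\nu(C) \geq 0$. Then in particular every simple cycle $C$ satisfies $\Value(C) \geq \nu$, hence $\nu^\ast = \min_C \Value(C) \geq \nu$.

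The argument is mechanical; the only point that requires a small observation is the decomposition of a non-simple cycle into simple cycles in the forward direction, which is a standard fact (repeatedly peel off a simple sub-cycle from the first repeated vertex). Since the claim as stated quantifies over all cycles rather than just simple ones, this decomposition step is the one mildly non-obvious ingredient; everything else is algebraic rearrangement using the positivity of $\Timef$.
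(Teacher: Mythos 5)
Your proof is correct and follows essentially the same route as the paper: the same algebraic equivalence $\Weight_{\nu}(C)\geq 0 \iff \Value(C)\geq \nu$ (using $\Timef(C)>0$), combined with the definition of $\nu^{\ast}$ as a minimum over simple cycles. The only difference is that you explicitly justify the direction for non-simple cycles by decomposing them into simple cycles and summing $\Weight_{\nu}$, a step the paper leaves implicit when it asserts $\Value(C)\geq \nu^{\ast}$ for every cycle.
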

\begin{proof}
Indeed, for any cycle $C$ we have $\Value(C)\geq \nu^{\ast} \geq \nu$. Then
\begin{align*}
&\Value(C)\geq \nu \iff \Value(C)-\nu \geq 0 \iff \frac{\Weight(C) - \nu\cdot \Timef(C) }{\Timef(C)}\geq 0\\
\iff &  \Weight(C) - \nu\cdot \Timef(C)\geq 0 \iff \sum_{e\in C}(\Weight(e)-\Timef(e)\cdot \nu) \geq 0 \iff  \Weight_{\nu}(C) \geq 0
\end{align*}

with the equality holding iff $\Value(C)=\nu$.
\end{proof}

Hence, given a tree-decomposition $\Tree(G)$,
for any guess $\nu$ of the ratio value $\nu^{\ast}$, we can evaluate whether $\nu^{\ast}\geq \nu$ by
constructing the weight function $\Weight_{\nu}=\Weight-\nu$ and executing algorithm $\Cyclealgo$ on input $G_{\nu}=(V,E,\Weight_{\nu})$.
By \cref{item:them_neg_cycle1} of \cref{them:min_cycle} and Claim~\ref{claim:value} we have that
the returned value $c$ of $\Cyclealgo$ is $c\geq 0$ iff $\Weight_{\nu}(C)\geq 0$ for all cycles $C$, iff $\nu^{\ast}\geq \nu$
(and in fact $c=0$ iff $\nu^{\ast} = \nu$).

\begin{lemma}\label{lem:ratio_decision}
Let $G=(V,E,\Weight, \Timef)$ be a weighted graph of $n$ nodes with constant treewidth and minimum ratio value $\nu^{\ast}$.
Let $\Tree(G)$ be a given balanced, binary tree-decomposition of $G$ of constant width.
For any rational $\nu$, the decision problem of whether $\nu^{\ast} \geq \nu$ (or $\nu^{\ast} = \nu$) can be solved 
in $O(n)$ time and $O(\log n)$ extra space.
\end{lemma}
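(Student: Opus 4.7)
The plan is essentially spelled out by Claim~\ref{claim:value} and the remarks preceding the lemma; I would formalize it as follows.

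First, given the rational guess $\nu$, I would form the auxiliary weighted graph $G_\nu=(V,E,\Weight_\nu)$ where $\Weight_\nu(e)=\Weight(e)-\Timef(e)\cdot \nu$. This requires no preprocessing: the value $\Weight_\nu(e)$ can be computed from $\Weight(e),\Timef(e),\nu$ on the fly whenever $\Cyclealgo$ inspects an edge in line~\ref{line:children}, so no extra storage beyond $O(1)$ words for $\nu$ itself is incurred. I would then run $\Cyclealgo$ on $G_\nu$ using the given balanced binary tree-decomposition $\Tree(G)$ (which is also a valid tree-decomposition for $G_\nu$, since the two graphs share the same underlying $(V,E)$).

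Next I would argue correctness. By Claim~\ref{claim:value}, $\nu^{\ast}\geq \nu$ is equivalent to $\Weight_\nu(C)\geq 0$ for every cycle $C$ of $G$, i.e., $c^{\ast}_\nu\geq 0$ where $c^{\ast}_\nu$ denotes the minimum simple-cycle weight in $G_\nu$, and similarly $\nu^{\ast}=\nu$ is equivalent to $c^{\ast}_\nu=0$. By Theorem~\ref{them:min_cycle}, the value $c$ returned by $\Cyclealgo$ satisfies (i) $c=c^{\ast}_\nu$ whenever $G_\nu$ has no negative cycle, and (ii) $c\leq c^{\ast}_\nu$ together with $|c|=|c^{\ast}_\nu|\cdot n^{O(1)}$ whenever a negative cycle exists. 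In particular, $c\geq 0$ iff $G_\nu$ has no negative cycle iff $\nu^{\ast}\geq \nu$, and furthermore $c=0$ iff $c^{\ast}_\nu=0$ iff $\nu^{\ast}=\nu$. Thus the sign of $c$ (and the test $c=0$) yields the two decision answers.

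Finally, for the complexity bound I would invoke Theorem~\ref{them:min_cycle} directly: $\Cyclealgo$ runs in $O(n)$ time and uses $O(\log n)$ additional space on a balanced binary tree-decomposition of constant width. The only extra work is the on-the-fly evaluation of $\Weight_\nu(e)$, which is one arithmetic operation per edge access, and since each bag is visited a constant number of times and each bag has constant size, this adds only $O(n)$ time and no additional asymptotic space. The one subtlety worth checking is that $\nu$ may be rational, say $\nu=p/q$; clearing denominators in the comparison $c\gtreqless 0$ (equivalently, working with $q\cdot\Weight_\nu(e)=q\cdot\Weight(e)-p\cdot\Timef(e)$) keeps all computations in the integers and preserves the sign tests, so the claimed $O(n)$ time and $O(\log n)$ space bounds still hold. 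This is the main (very mild) obstacle; once addressed, the lemma follows directly.
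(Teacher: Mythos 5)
Your proposal is correct and matches the paper's proof: both reduce the decision $\nu^{\ast}\geq\nu$ (resp. $\nu^{\ast}=\nu$) via Claim~\ref{claim:value} to negative-cycle detection in $G_{\nu}=(V,E,\Weight_{\nu})$, run $\Cyclealgo$ on the same tree-decomposition, and invoke \cref{them:min_cycle} for the $O(n)$ time and $O(\log n)$ extra space bounds. Your added remark about handling rational $\nu$ by clearing denominators is a fine detail but not a different argument, since the paper counts arithmetic operations regardless of operand type.
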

\begin{proof}
By Claim~\ref{claim:value}, we can test whether $\nu^{\ast} \geq \nu$
by testing whether $G_{\nu}=(V,E,\Weight_{\nu})$ has a negative cycle.
By \cref{them:min_cycle}, a negative cycle in $G_{\nu}$ can be detected in $O(n)$ time and using $O(\log n)$ space.
\end{proof}

\subsection{Exact solution}
We now describe the method for determining the value $\nu^{\ast}$ of $G$ exactly.
This is done by making various guesses $\nu$ such that $\nu^{\ast}\geq \nu$ and testing for 
negative cycles in the graph $G_{\nu}=(V,E,\Weight_{\nu})$.
We first determine whether $\nu^{\ast} = 0$, using \cref{lem:ratio_decision}.
In the remaining of this section we assume that $\nu^{\ast}\neq 0$.

\noindent{\bf Solution overview.}
Consider that $\nu^{\ast}> 0$.
First, we either find that $\nu^{\ast}\in(0,1)$ (hence $\lfloor\nu^{\ast}\rfloor=0$),
or perform an \emph{exponential search} of $O(\log\nu^{\ast} )$ iterations
to determine $j\in \Nats^+$ such that $\nu^{\ast}\in[2^{j-1}, 2^{j}]$.
In the latter case, we perform a binary search of $O(\log \nu^{\ast})$ iterations
in the interval $[2^{j-1}, 2^{j}]$ to determine $\lfloor \nu^{\ast}\rfloor$
(see \cref{fig:search}).
Then, we can write $\nu^{\ast} = \lfloor \nu^{\ast}\rfloor + x$, 
where $x<1$ is an irreducible fraction $\frac{a'}{b}$.
It has been shown~\cite{Papadimitriou79} that such $x$ can be
determined by evaluating $O(\log b)$ inequalities of the form $x\geq\nu$.
The case for $\nu^{\ast}<0$ is handled similarly.

\begin{lemma}\label{lem:integer_part}
Let $\nu^{\ast}\neq 0$ be the ratio value of $G$. The value $\lfloor \nu^{\ast} \rfloor$ can be 
obtained by evaluating $O( \log  |\nu^{\ast}|)$ inequalities
of the form $\nu^{\ast}\geq \nu$.
\end{lemma}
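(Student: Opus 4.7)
The plan is to realize exactly the two-phase search sketched in the solution overview preceding the lemma, relying only on queries of the form ``$\nu^{\ast} \geq \nu$'' provided by \cref{lem:ratio_decision}. First I would issue the single query $\nu^{\ast} \geq 0$ to separate $\nu^{\ast} > 0$ from $\nu^{\ast} < 0$ (the value $\nu^{\ast} = 0$ is excluded by hypothesis). Since the two sign cases are entirely symmetric, I will describe the positive case; the negative case is handled by the same scheme with thresholds $-2^j$ in place of $2^j$, and a mirrored binary search on integer endpoints.

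In the first (exponential) phase, I would query $\nu^{\ast} \geq 2^j$ for $j = 0, 1, 2, \dots$ and stop at the first ``no''. If the very first query $\nu^{\ast} \geq 1$ already fails, then $\nu^{\ast} \in (0,1)$ and $\lfloor \nu^{\ast} \rfloor = 0$ is returned using $O(1)$ queries. Otherwise, letting $j^{\ast}$ be the last index with answer ``yes'', we have $\nu^{\ast} \in [2^{j^{\ast}}, 2^{j^{\ast}+1})$, and the phase uses $j^{\ast} + 1 \leq \lceil \log_2 \nu^{\ast} \rceil + 1 = O(\log \nu^{\ast})$ queries.

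In the second (binary) phase, I would binary-search for $\lfloor \nu^{\ast} \rfloor$ inside the integer range $\{2^{j^{\ast}}, 2^{j^{\ast}}+1, \dots, 2^{j^{\ast}+1}-1\}$, again using queries $\nu^{\ast} \geq k$ at integer midpoints $k$. Since this range has size $2^{j^{\ast}}$, binary search terminates after $j^{\ast} = O(\log \nu^{\ast})$ further queries and returns $\lfloor \nu^{\ast} \rfloor$ exactly.

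Summing the two phases yields $O(\log \nu^{\ast})$ queries in the positive case, and by symmetry $O(\log |\nu^{\ast}|)$ in the negative case. I do not expect any substantive obstacle: the argument is a textbook application of exponential-plus-binary search over the integers, and the only minor care needed is for the boundary $|\nu^{\ast}| < 1$, which is disposed of by the constant-query shortcut above and absorbed into the implicit constant factor.
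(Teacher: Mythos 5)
Your proposal is correct and follows essentially the same route as the paper: an exponential search over thresholds $2^j$ to bracket $\nu^{\ast}$ in an interval of length at most $|\nu^{\ast}|$, followed by a binary search within that interval, using $O(\log|\nu^{\ast}|)$ queries of the form $\nu^{\ast}\geq\nu$ in total, with the sign and the $|\nu^{\ast}|<1$ boundary handled the same way. The only cosmetic difference is that you binary-search directly over the integer endpoints while the paper narrows a real interval to length at most $1$ and finishes with one extra comparison; these are equivalent.
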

\begin{proof}
First determine whether $\nu^{\ast}> 0$, and assume w.l.o.g. that this is the case (the process is similar if $\nu^{\ast}<0$).
Perform an exponential search on the interval $(0,2\cdot \lfloor\nu^{\ast}\rfloor)$
by a sequence of evaluations of the inequality $\nu^{\ast}\geq \nu_i=2^i$.
After $\log \lfloor \nu^{\ast}\rfloor+1$ steps we either have $\lfloor\nu^{\ast}\rfloor \in (0,1)$,
or have determined a $j>0$ such that $\nu^{\ast}\in [\nu_{j-1}, \nu_j]$.
Then, perform a binary search in the interval $[\nu_{j-1}, \nu_j]$,
until the running interval $[\ell, r]$ has length at most $1$. 
Since $\nu_j - \nu_{j-1} = \nu_{j-1} \leq \nu^{\ast}$, 
this will happen after at most $\log \lceil \nu^{\ast} \rceil $ steps.
Then either $\lfloor \nu^{\ast} \rfloor = \lfloor \ell \rfloor $ or 
$\lfloor \nu^{\ast} \rfloor = \lfloor r \rfloor $, which can be determined by
evaluating the inequality $ \nu^{\ast} \geq \lfloor r \rfloor$.
A similar process can be carried out when $\nu^{\ast}<0$.
\cref{fig:search} shows an illustration of the search.
\end{proof}

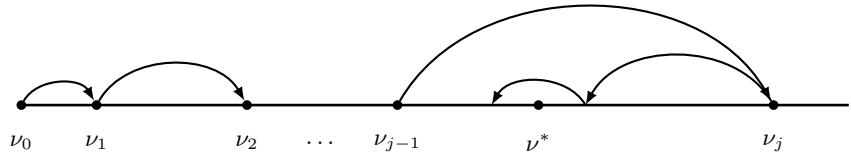
\begin{figure}[!h]
\centering
\begin{tikzpicture}[thick, >=latex,
pre/.style={<-,shorten >= 1pt, shorten <=1pt, thick},
post/.style={->,shorten >= 1pt, shorten <=1pt,  thick},
und/.style={very thick, draw=gray},
node/.style={circle, minimum size=1mm, draw=black!100, fill=black!100, inner sep=1},
rootbag/.style={ellipse, minimum height=7mm,minimum width=14mm,draw=black!80, line width=2.5pt},
virt/.style={circle,draw=black!50,fill=black!20, opacity=0}]

\def \ylabel{-0.5}
\def \step{1}
\def \bend{60}

\draw [draw=black, line width =1] (0,0) to (11,0);

\node	[node] (mu0)	at	(0*\step,0)	{};
\node	[]	at	(0,\ylabel)	{$\nu_0$};

\node	[node] (mu1)	at	(1*\step,0)	{};
\node	[]	at	(1*\step,\ylabel)	{$\nu_1$};

\node	[node] (mu2)	at	(3*\step,0)	{};
\node	[]	at	(3*\step,\ylabel)	{$\nu_2$};

\node	[]	at	(4*\step,\ylabel)	{$\dots$};

\node	[node] (mui_1)	at	(5*\step,0)	{};
\node	[]	at	(5*\step,\ylabel)	{$\nu_{j-1}$};

\node	[node] (mui)	at	(10*\step,0)	{};
\node	[]	at	(10*\step,\ylabel)	{$\nu_j$};

\draw [draw=black, line width =0.8, ->, bend left=\bend] (mu0) to (mu1);
\draw [draw=black, line width =0.8, ->, bend left=\bend] (mu1) to (mu2);
\draw [draw=black, line width =0.8, ->, bend left=\bend] (mui_1) to (mui);
\draw [draw=black, line width =0.8, ->, bend right=\bend] (mui) to ($(mui_1)+0.5*(mui)-0.5*(mui_1)$);
\draw [draw=black, line width =0.8, ->, bend right=\bend] ($(mui_1)+0.5*(mui)-0.5*(mui_1)$) to ($(mui_1)+0.25*(mui)-0.25*(mui_1)$);

\node	[node] (mu)	at	($(mui_1)+0.375*(mui)-0.375*(mui_1)$)	{};
\node	[]	at	($(mui_1)+0.375*(mui)-0.375*(mui_1)+(0,\ylabel)$)	{$\nu^{\ast}$};

\end{tikzpicture}
\caption{Exponential search followed by a binary search to determine $\lfloor \nu^{\ast}\rfloor$}\label{fig:search}
\end{figure}

Let $T_{\max}=\max_e\Timef(e)$ be the largest weight of an edge wrt $\Timef$.
Since $\nu^{\ast}$ is a number with denominator at most $(n-1)\cdot T_{\max}$, it can be determined
exactly by carrying the binary search of \cref{lem:integer_part} until the length
of the running interval becomes at most $\frac{1}{((n-1)\cdot T_{\max})^2}$ (thus containing a unique rational
with denominator at most $(n-1)\cdot T_{\max}$). 
Then $\nu^{\ast}$ can be obtained by using continued fractions, e.g. as in~\cite{Kwek03}.
We rely in the work of Papadimitriou~\cite{Papadimitriou79} to obtain a tighter bound.

\begin{lemma}\label{lem:fractional_part}
Let $\nu^{\ast}\neq 0$ be the ratio value of $G$, such that $\nu^{\ast}$ is the irreducible fraction $\frac{a}{b}\in (-1,1)$. 
Then $\nu^{\ast}$ can be determined by evaluating $O(\log b)$ inequalities of the form $\nu^{\ast}\geq \nu$.
\end{lemma}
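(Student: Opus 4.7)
The plan is to follow the Stern--Brocot search strategy of Papadimitriou~\cite{Papadimitriou79}, accelerated by an exponential-then-binary search at each run of the continued fraction expansion, so that the total number of queries of the form ``is $\nu^{\ast}\geq \nu$?'' is $O(\log b)$.

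First I would reduce to the case $\nu^{\ast}\in (0,1)$; the symmetric case $\nu^{\ast}\in(-1,0)$ is handled by negating the weight function. I would then maintain an invariant interval $[\tfrac{p_\ell}{q_\ell},\tfrac{p_r}{q_r}]$ containing $\nu^{\ast}$, initialized to $[\tfrac{0}{1},\tfrac{1}{1}]$, and with the property that $p_r\cdot q_\ell - p_\ell\cdot q_r = 1$. This Farey/Stern--Brocot invariant guarantees that the mediant $\tfrac{p_\ell+p_r}{q_\ell+q_r}$ is again irreducible and that every irreducible fraction strictly between $\tfrac{p_\ell}{q_\ell}$ and $\tfrac{p_r}{q_r}$ has denominator at least $q_\ell+q_r$. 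A naive refinement would replace one endpoint with the mediant at each step and query $\nu^{\ast}\geq \tfrac{p_\ell+p_r}{q_\ell+q_r}$, stopping when equality is detected.

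The main obstacle is that the naive Stern--Brocot refinement can take as many as $b$ queries, since a single ``run'' of the continued fraction of $\nu^{\ast}=[0;c_1,c_2,\dots]$ of length $c_i$ would cause $c_i$ successive replacements of the same endpoint. To overcome this I would process each such run with an exponential search followed by a binary search: whenever a refinement step would replace, say, $\tfrac{p_r}{q_r}$ with the mediant, I would instead query the fractions $\tfrac{p_\ell + 2^k p_r}{q_\ell + 2^k q_r}$ for $k=0,1,2,\ldots$ until the inequality $\nu^{\ast}\geq \tfrac{p_\ell + 2^k p_r}{q_\ell + 2^k q_r}$ flips, and then binary-search for the exact integer multiplier $c_i\leq 2^k$. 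This uses $O(\log c_i)$ queries to advance by $c_i$ mediants, and likewise on the other side.

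Finally I would argue termination and the bound. Since $\nu^{\ast}=\tfrac{a}{b}$ is irreducible, the Stern--Brocot refinement must eventually produce $\tfrac{a}{b}$ exactly as a mediant, at which point the query returns equality (recall from the discussion preceding \cref{lem:ratio_decision} that $\nu^{\ast}=\nu$ can be detected by $\Cyclealgo$ returning $c=0$). The number of runs in the continued fraction of $\tfrac{a}{b}$ is $O(\log b)$, and by a standard continued-fraction estimate $\prod_i(1+c_i)\leq 2^{O(\log b)}$, so $\sum_i \log c_i = O(\log b)$. Combined with the $O(\log c_i)$ queries per run from the accelerated search, the total query count is $O(\log b)$, giving the claim.
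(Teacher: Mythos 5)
Your proposal is correct, and it effectively supplies the argument that the paper delegates to a citation: the paper's own proof of this lemma is essentially ``this is shown in~\cite{Papadimitriou79}'', together with the remark that no a priori bound on the denominator $b$ is needed (Papadimitriou's presentation, also sketched in a commented-out block of the source via Farey sequences $F_{2^k}$ of exponentially growing order, assumes such a bound). You instead reconstruct the method self-containedly: maintain a Stern--Brocot interval with the determinant-one invariant, accelerate each continued-fraction run by exponential-then-binary search on the multiplier, detect equality via the $c=0$ test of $\Cyclealgo$ (as noted before \cref{lem:ratio_decision}), and bound the total query count by $r+\sum_i\log(1+c_i)=O(\log b)$ using $b\geq\prod_i c_i$ and $r=O(\log b)$ --- all of which is sound, and it also inherits the feature that $b$ need not be known in advance since termination is triggered by the detected equality rather than by a denominator bound. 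So the two routes buy the same thing; yours is more elementary and verifiable in place, the paper's is shorter by outsourcing the number theory. One small correction: when a run repeatedly replaces the right endpoint $\frac{p_r}{q_r}$, the iterated mediants form the family $\frac{2^k p_\ell + p_r}{2^k q_\ell + q_r}$ (the multiplier attaches to the \emph{retained} endpoint $\frac{p_\ell}{q_\ell}$); the family $\frac{p_\ell + 2^k p_r}{q_\ell + 2^k q_r}$ you wrote is the one for runs that replace the left endpoint. This is only a labeling slip --- the symmetric formula works and the complexity analysis is unaffected.
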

\begin{proof}
Consider that $\nu^{\ast} > 0$ (the proof is similar when $\nu^{\ast} < 0$).
It is shown in~\cite{Papadimitriou79} that a rational with denominator at most $b$
can be determined by evaluating $O(\log b)$ inequalities of the form $\nu^{\ast}\geq \nu$.
We remark that $b$ is not required to be known,  although the work of~\cite{Papadimitriou79} assumes
that a bound on the denominator of $\nu^{\ast}$ is known in advance.
\end{proof}

\begin{theorem}\label{them:min_ratio_exact}
Let $G=(V,E,\Weight, \Timef)$ be a weighted graph of $n$ nodes with constant treewidth,
and $\lambda=\max_u |a_u\cdot b_u|$ such that $\nu^{\ast}(u)$ is the irreducible fraction $\frac{a_u}{b_u}$.
Let $\Time(G)$ and $\Space(G)$ denote the required time and space for constructing a balanced binary tree-decomposition
$\Tree(G)$ of $G$ with constant width.
The minimum ratio cycle problem for $G$ can be computed in
\begin{compactenum}
\item\label{item:thm_ratio_time_space} $O(\Time(G)+n\cdot \log (\lambda))$ time and $O(\Space(G)+n)$ space; and
\item\label{item:thm_ratio_space} $O(\Space(G)+\log n)$ space.
\end{compactenum}
\end{theorem}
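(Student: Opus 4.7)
The plan is to combine the decision procedure of \cref{lem:ratio_decision} with the search schemes of \cref{lem:integer_part,lem:fractional_part}. First I would invoke \cref{rem:scc} to decompose $G$ into its strongly connected components in linear time, after which each component can be processed independently; within a single component $\nu^{\ast}$ is a single well-defined value, so I focus on that case.

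For a strongly connected graph I would: (i) test whether $\nu^{\ast}=0$ with a single call to \cref{lem:ratio_decision}; (ii) otherwise apply \cref{lem:integer_part} to extract $\lfloor \nu^{\ast}\rfloor$ through $O(\log|\nu^{\ast}|)$ sign queries, each implemented by \cref{lem:ratio_decision}; and (iii) apply \cref{lem:fractional_part} to the shifted value $\nu^{\ast}-\lfloor \nu^{\ast}\rfloor\in(-1,1)$ using $O(\log b_u)$ additional queries. Since $|\nu^{\ast}|=|a_u|/b_u$, the total number of queries in one component is $O(\log|\nu^{\ast}|+\log b_u)=O(\log|a_u|+\log b_u)=O(\log|a_u\cdot b_u|)\le O(\log\lambda)$. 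Each query runs in $O(n)$ time by \cref{lem:ratio_decision}, so summing over all components yields $O(n\cdot\log\lambda)$ query cost; adding the $\Time(G)$ required to build $\Tree(G)$ once gives the time bound of item~\ref{item:thm_ratio_time_space}. The matching space bound $O(\Space(G)+n)$ reflects that we store $\Tree(G)$ explicitly while each invocation of $\Cyclealgo$ reuses only the $O(\log n)$ bag-level storage of \cref{lem:cyclealgo_complexity}.

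For item~\ref{item:thm_ratio_space} I would exploit the deterministic logspace construction of $\Tree(G)$ given by \cref{them:tree_decomp}(2). Instead of materialising the decomposition, every access to a bag or parent/child pointer made by $\Cyclealgo$ is re-derived on the fly within $\Space(G)=O(\log n)$ space. The algorithm $\Cyclealgo$ itself adds only $O(\log n)$ extra words along its DFS stack, and the outer search loop needs only a constant number of machine words to track the current interval endpoints. Hence the overall space stays $O(\Space(G)+\log n)$, at the (unstated) cost of re-traversing the decomposition at every decision query.

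The main subtlety I anticipate lies in the query-count accounting of step~(iii): Papadimitriou's procedure must terminate without being told $b_u$ in advance, which is precisely the strengthening recorded in \cref{lem:fractional_part}. A smaller bookkeeping point is determining the sign of $\nu^{\ast}$ up front so that the exponential search of \cref{lem:integer_part} proceeds in the correct direction; this is settled by one additional comparison of $\nu^{\ast}$ against $0$ via \cref{lem:ratio_decision}, and is absorbed into the asymptotic count.
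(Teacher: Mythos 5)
Your proposal is correct and follows essentially the same route as the paper's proof: reduce to strongly connected graphs via \cref{rem:scc}, build $\Tree(G)$ once and answer each inequality $\nu^{\ast}\geq\nu$ with $\Cyclealgo$ on $G_\nu$ (\cref{lem:ratio_decision}), extract the integer part by exponential plus binary search (\cref{lem:integer_part}) and the fractional part via Papadimitriou's method (\cref{lem:fractional_part}), giving $O(\log\lambda)$ queries of $O(n)$ time each. Your treatment of the space-only bound—re-deriving the tree-decomposition on the fly instead of storing it, with $\Cyclealgo$ keeping only $O(\log n)$ words—also matches the paper's argument for item~\ref{item:thm_ratio_space}.
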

\begin{proof}
In view of \cref{rem:scc} the graph $G$ is strongly connected and has a minimum ratio value $\nu^{\ast}$.
Let $\nu^{\ast}=\lfloor \nu^{\ast} \rfloor + \frac{a'}{b}$ with $|\frac{a'}{b}|<1$.
By \cref{lem:integer_part}, $\lfloor \nu^{\ast} \rfloor$ can be determined by evaluating $O(\log |\nu^{\ast}|)=O(\log |a|)$ inequalities
of the form $\nu^{\ast} \geq \nu$, and by \cref{lem:fractional_part}, $\frac{a'}{b}$ can be determined by evaluating
$O(b)$ such inequalities.
A balanced binary tree-decomposition $\Tree(G)$ can be constructed once
in $\Time(G)$ time and $\Space(G)$ space, and stored in $O(n)$ space.
$\Tree(G)$ is also a tree-decomposition of every $G_{\nu}$ required by Claim~\ref{claim:value}.
By \cref{them:min_cycle}
a negative cycle in $G_{\nu}$ can be detected in $O(n)$ time and using $O(\log n)$ space.
This concludes \cref{item:thm_ratio_time_space}. 
Item~\ref{item:thm_ratio_space} is obtained by the same process, but with re-computing $\Tree(G)$
every time $\Cyclealgo$ traverses from a bag to a neighbor (thus not storing $\Tree(G)$ explicitly).
\end{proof}

Using \cref{them:tree_decomp} we obtain from \cref{them:min_ratio_exact} the following corollary.

\begin{corollary}\label{cor:min_ratio_exact}
Let $G=(V,E,\Weight, \Timef)$ be a weighted graph of $n$ nodes with constant treewidth,
and $\lambda=\max_u |a_u\cdot b_u|$ such that $\nu^{\ast}(u)$ is the irreducible fraction $\frac{a_u}{b_u}$.
The minimum ratio value problem for $G$ can be computed in
\begin{compactenum}
\item$O(n\cdot \log (\lambda))$ time and $O(n)$ space; and
\item $O(\log n)$ space.
\end{compactenum}
\end{corollary}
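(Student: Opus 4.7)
The plan is to obtain \cref{cor:min_ratio_exact} as a direct instantiation of \cref{them:min_ratio_exact}, choosing the tree-decomposition construction from \cref{them:tree_decomp} that is appropriate for each of the two complexity claims. The only content of the corollary beyond \cref{them:min_ratio_exact} is that the abstract resources $\Time(G)$ and $\Space(G)$ for constructing a balanced binary tree-decomposition of constant width can be replaced by concrete bounds, and that these bounds are absorbed into the stated complexities.

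For \cref{item:thm_ratio_time_space} (the $O(n\cdot \log\lambda)$ time, $O(n)$ space bound), I would invoke the first clause of \cref{them:tree_decomp}, which gives a balanced binary tree-decomposition of constant width with $O(n)$ bags in $\Time(G) = O(n)$ time and $\Space(G) = O(n)$ space. Substituting these values into the complexity expression of \cref{item:thm_ratio_time_space} of \cref{them:min_ratio_exact} yields total time $O(n + n\cdot \log \lambda) = O(n\cdot \log \lambda)$ (since $\lambda \geq 1$) and total space $O(n + n) = O(n)$.

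For the second item of the corollary (the $O(\log n)$ space bound), I would invoke the second clause of \cref{them:tree_decomp}, which states that a balanced binary tree-decomposition of constant width can be obtained in deterministic logspace, giving $\Space(G) = O(\log n)$. Plugging into \cref{item:thm_ratio_space} of \cref{them:min_ratio_exact} yields total space $O(\log n + \log n) = O(\log n)$, as required.

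Since the proof is a straightforward substitution, there is no real obstacle to overcome; the only subtlety worth explicitly noting in the written proof is that the logspace construction of \cref{them:tree_decomp} must be recomputed on demand when $\Cyclealgo$ moves between adjacent bags of $\Tree(G)$, as already pointed out in the proof of \cref{them:min_ratio_exact}. This is what allows the $O(\log n)$ additional working space used by $\Cyclealgo$ (from \cref{them:min_cycle}) to dominate and give the overall $O(\log n)$ space bound, rather than being inflated by storing the decomposition explicitly.
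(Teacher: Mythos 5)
Your proposal is correct and is essentially the paper's own argument: the corollary is obtained by instantiating \cref{them:min_ratio_exact} with the two constructions of \cref{them:tree_decomp} (the linear time-and-space construction for the first item, and the deterministic logspace construction, recomputed on demand, for the second). Your explicit remark about re-running the logspace construction when $\Cyclealgo$ moves between adjacent bags matches the note already made in the proof of \cref{them:min_ratio_exact}, so nothing further is needed.
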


By setting $\Timef(e)=1$ for each $e\in E$ in \cref{cor:min_ratio_exact} we obtain the following corollary
for the minimum mean cycle.

\begin{corollary}\label{cor:min_mean_exact}
Let $G=(V,E,\Weight)$ be a weighted graph of $n$ nodes with constant treewidth,
and $\lambda = \max_u |\mu^{\ast}(u)|$.
The minimum mean value problem for $G$ can be computed in
\begin{compactenum}
\item $O(n\cdot \log (\lambda))$ time and $O(n)$ space; and
\item $O(\log n)$ space.
\end{compactenum}
\end{corollary}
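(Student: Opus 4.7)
The plan is to derive \cref{cor:min_mean_exact} as an immediate specialization of \cref{cor:min_ratio_exact} obtained by setting $\Timef(e)=1$ on every edge $e\in E$. Under this choice, for any cycle $C$ we have $\Timef(C)=|C|$, so $\Value(C)=\Weight(C)/\Timef(C)=\Weight(C)/|C|$ coincides with the mean weight of $C$. Consequently the minimum ratio value $\nu^{\ast}(u)$ of every node $u$ equals its minimum mean value $\mu^{\ast}(u)$, and solving the minimum ratio cycle problem on the augmented graph $(V,E,\Weight,\Timef\equiv 1)$ solves the minimum mean cycle problem on $(V,E,\Weight)$.

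The only nontrivial point is to reconcile the parameter $\lambda=\max_u|\mu^{\ast}(u)|$ appearing in the statement with the parameter $\max_u|a_u\cdot b_u|$ that governs the complexity of \cref{cor:min_ratio_exact}. For each node $u$, the value $\mu^{\ast}(u)$ is achieved by a simple cycle $C_u$ of length at most $n$, so if $\mu^{\ast}(u)=a_u/b_u$ is written in lowest terms, then $b_u$ divides $|C_u|$ and hence $b_u\le n$. Therefore $|a_u\cdot b_u|=b_u^{2}\cdot|\mu^{\ast}(u)|\le n^{2}\cdot\lambda$, and the number of inequalities tested in the combined exponential/binary search of \cref{lem:integer_part} followed by the Papadimitriou-style search of \cref{lem:fractional_part} is $O(\log(n\cdot\lambda))$, which is subsumed by $O(\log\lambda)$ once the standard logarithmic overhead is absorbed. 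Each inequality $\nu^{\ast}\ge\nu$ is evaluated in $O(n)$ time and $O(\log n)$ additional space via \cref{lem:ratio_decision}, yielding the claimed $O(n\cdot\log\lambda)$ time and $O(n)$ space bound.

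For the space-efficient variant, I would repeat the argument of \cref{item:thm_ratio_space} in \cref{them:min_ratio_exact}: instead of storing $\Tree(G)$ explicitly, we recompute the relevant portion of it (in logspace, by \cref{them:tree_decomp}) each time $\Cyclealgo$ moves between neighboring bags. This blows up time but keeps the working memory at $O(\log n)$ throughout. Since the whole argument is a direct reduction to the ratio case together with the elementary denominator bound $b_u\le n$, I do not anticipate any real obstacle; the only subtlety worth double-checking is the folding of the $\log n$ term into $\log\lambda$, which is the standard convention used throughout this paper.
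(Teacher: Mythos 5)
Your proposal is correct and follows essentially the same route as the paper, whose entire proof is the one-line specialization of \cref{cor:min_ratio_exact} obtained by setting $\Timef(e)=1$ for every edge. Your extra step bounding the denominator by $b_u\leq n$ to reconcile $\max_u|a_u\cdot b_u|$ with $\lambda=\max_u|\mu^{\ast}(u)|$ is more explicit than the paper itself, which silently performs the same parameter translation.
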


\subsection{Approximating the minimum mean cycle}
We now focus on the minimum mean cycle problem, and present algorithms
for $\epsilon$-approximating the mean value $\mu^{\ast}$ of $G$ 
for any $0<\epsilon<1$ in $O(n\cdot \log (n/\epsilon))$ time, i.e., independent of $\mu^{\ast}$.

\noindent{\bf Approximate solution in the absence of negative cycles.}
We first consider graphs $G$ that do not have negative cycles. 
Let $C$ be a minimum mean value cycle, and $C'$ a minimum weight simple cycle in $G$,
and note that $\mu^{\ast}\in [0,\Weight(C')]$.
Additionally, we have

\[
\Weight(C')\leq \Weight(C) \implies \Weight(C')\leq \frac{n}{|C|}\cdot \Weight(C)\implies \Weight(C')\leq (n)\cdot \mu^{\ast}
\]

Consider a binary search in the interval $[0,\Weight(C')]$, 
which in step $i$ approximates $\mu^{\ast}$ by the right endpoint $\mu_i$ of its current interval.
The error is bounded by the length of the interval, hence $\mu_i-\mu^{\ast}\leq\Weight(C')\cdot 2^{-i} \leq (n-1)\cdot \mu^{\ast}\cdot 2^{-i}$.
To approximate within a factor $\epsilon$ we require 

\begin{equation}\label{eq:approximation}
2^{-i}\cdot (n-1)\leq \epsilon \implies i\geq \log(n)+\log(1/\epsilon)
\end{equation}

steps.

\begin{remark}\label{rem:no_approx}
Note that for the minimum ratio value we have  $\Weight(C')\leq W'\cdot n\cdot \nu^{\ast}$, 
where $W'=\max_{e\in E} \Timef(e)$.
For $\epsilon$-approximating $\nu^{\ast}$ we would need $i\geq \log(n\cdot W'/\epsilon)$ steps.
\end{remark}

\noindent{\bf Approximate solution in the presence of negative cycles.}
We now turn our attention to $\epsilon$-approximating $\mu^{\ast}$ in the presence of negative cycles in $G$.
Note that uniformly increasing the weight of each edge so that no negative edges exist
does not suffice, as the error can be of order $\epsilon\cdot |W^{-}|$ rather than $\epsilon\cdot \mu^{\ast}$,
where $W^{-}$ is the minimum edge weight.

Instead, let $c$ be the value returned by $\Cyclealgo$ on input $G$.
Item~\ref{item:them_neg_cycle1} of \cref{them:min_cycle}  guarantees that for the weight function
$\Weight_{-|c|}(e)=\Weight(e)+|c|$, the graph $G_{-|c|}=(V,E,\Weight_{-|c|})$ has no negative cycles
(although it might still have negative edges).
The following lemma states that $\mu^{\ast}$ can be $\epsilon$-approximated
by $\epsilon'$-approximating the value $\mu'^{\ast}$ of $G_{-|c|}$, 
for some $\epsilon'$ polynomially (in $n$) smaller than $\epsilon$.

\begin{lemma}\label{lem:approximation}
Let $\mu^{\ast}$ and $\mu'^{\ast}$ be the value of $G$ and $G_{-|c|}$ respectively, and $\epsilon$ 
some desired approximation factor of $\mu^{\ast}$, with $0<\epsilon<1$.
There exists an $\epsilon'=\epsilon/n^{O(1)}$ such that if $\mu'$ is an $\epsilon'$-approximation of $\mu'^{\ast}$ in $G_{-|c|}$,
then $\mu=\mu'-|c|$ is an $\epsilon$-approximation of $\mu^{\ast}$ in $G$.
\end{lemma}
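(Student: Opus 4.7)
The plan is to exploit the simple identity $\mu'^{\ast} = \mu^{\ast} + |c|$ and to bound the ratio $|\mu'^{\ast}|/|\mu^{\ast}|$ polynomially in $n$, so that the error guarantee on $\mu'$ transfers to a comparable guarantee on $\mu = \mu' - |c|$.

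First, I would observe that for any cycle $C$ in $G$, its mean in $G_{-|c|}$ is exactly $\Weight(C)/|C| + |c|$, since the weights of all $|C|$ edges are shifted by $+|c|$. Minimizing over cycles reachable from any node then gives $\mu'^{\ast} = \mu^{\ast} + |c|$. Consequently,
\[
|\mu - \mu^{\ast}| \;=\; |\mu' - |c| - \mu^{\ast}| \;=\; |\mu' - \mu'^{\ast}| \;\leq\; \epsilon' \cdot |\mu'^{\ast}|,
\]
so it suffices to exhibit a polynomial (in $n$) ratio $|\mu'^{\ast}|/|\mu^{\ast}|$ and set $\epsilon' = \epsilon / n^{O(1)}$.

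Second, I would establish the needed bound. Since the lemma is invoked only when $G$ has a negative cycle, $\mu^{\ast} < 0$, and by \cref{them:min_cycle} item~\ref{item:them_neg_cycle1} we have $c \leq c^{\ast} \leq \mu^{\ast} < 0$, together with $|c| = |c^{\ast}| \cdot n^{O(1)}$. For the other direction, note that every simple cycle has length at most $n$, so letting $\tilde{C}$ be a minimum-weight simple cycle and $C^{\ast}$ be a minimum mean cycle,
\[
|c^{\ast}| \;=\; |\Weight(\tilde{C})| \;\leq\; |\Weight(C^{\ast})| \cdot \tfrac{|\tilde{C}|}{|C^{\ast}|} \cdot \tfrac{|C^{\ast}|}{|\tilde{C}|} \;\leq\; n \cdot |\mu^{\ast}|,
\]
where the key step uses $|\mu^{\ast}| = |\Weight(C^{\ast})|/|C^{\ast}|$ and $|\tilde{C}| \leq n$, giving $|c^{\ast}|/|\mu^{\ast}| \leq n$. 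Finally, since $\mu^{\ast} + |c^{\ast}| \geq 0$ (because $|\mu^{\ast}| \leq |c^{\ast}|$), we have $\mu'^{\ast} \geq 0$ and $|\mu'^{\ast}| = \mu^{\ast} + |c| \leq |c|$, so
\[
\frac{|\mu'^{\ast}|}{|\mu^{\ast}|} \;\leq\; \frac{|c|}{|\mu^{\ast}|} \;=\; \frac{|c^{\ast}| \cdot n^{O(1)}}{|\mu^{\ast}|} \;\leq\; n^{O(1)}.
\]

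Third, I would conclude by choosing $\epsilon' = \epsilon / n^{O(1)}$ to be the reciprocal of the constant hidden in the bound above (times $\epsilon$). Then $\epsilon' \cdot |\mu'^{\ast}| \leq \epsilon \cdot |\mu^{\ast}|$, which by the first chain of equalities proves that $\mu$ is an $\epsilon$-approximation of $\mu^{\ast}$.

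The only delicate step is bounding $|\mu'^{\ast}|/|\mu^{\ast}|$, and in particular making sure both the lower bound $|\mu^{\ast}| \geq |c^{\ast}|/n$ (which needs the simple-cycle length bound) and the upper bound $|c| \leq |c^{\ast}| \cdot n^{O(1)}$ (which is exactly \cref{them:min_cycle} item~\ref{item:them_neg_cycle2}) are applied in the regime $\mu^{\ast} < 0$; the rest of the argument is essentially an algebraic substitution.
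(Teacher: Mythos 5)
Your proposal follows essentially the same route as the paper's proof: the shift identity $\mu'^{\ast}=\mu^{\ast}+|c|$, the bound $|c|=|c^{\ast}|\cdot n^{O(1)}$ from \cref{them:min_cycle}, the bound $|c^{\ast}|\leq n\cdot|\mu^{\ast}|$, and the choice $\epsilon'=\epsilon/\alpha$ with $\alpha=n^{O(1)}$; the conclusion and the final chain of implications are correct. One displayed step is garbled as written, though: in
\[
|c^{\ast}| = |\Weight(\tilde{C})| \leq |\Weight(C^{\ast})|\cdot\tfrac{|\tilde{C}|}{|C^{\ast}|}\cdot\tfrac{|C^{\ast}|}{|\tilde{C}|}
\]
the two fractions cancel, so the step literally asserts $|\Weight(\tilde{C})|\leq|\Weight(C^{\ast})|$, which is backwards: since $\tilde{C}$ is the minimum-weight simple cycle and both weights are negative, $\Weight(\tilde{C})\leq\Weight(C^{\ast})<0$ gives $|\Weight(\tilde{C})|\geq|\Weight(C^{\ast})|$. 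The inequality you actually need is true (it is exactly the paper's assertion $|c^{\ast}|\leq(n-1)\cdot|\mu^{\ast}|$), and the correct one-line justification is by minimality of $\mu^{\ast}$ over cycle means rather than by comparing cycle weights: since $\tilde{C}$ is a simple cycle, $\Weight(\tilde{C})/|\tilde{C}|\geq\mu^{\ast}$, hence $|c^{\ast}|=-\Weight(\tilde{C})\leq|\mu^{\ast}|\cdot|\tilde{C}|\leq n\cdot|\mu^{\ast}|$. With that fix your argument coincides with the paper's.
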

\begin{proof}
By construction, we have $\mu'^{\ast}=\mu^{\ast}+|c|$, where $c$ defined above is the
value returned by $\Cyclealgo$ on $G$. 
Let $c^{\ast}$ be the weight of a minimum-weight simple cycle in $G$. 
By~\cref{them:min_cycle} \cref{item:them_neg_cycle2}, we have that $|c| = |c^{\ast}|\cdot n^{O(1)}$.
Note that $|c^{\ast}|\leq (n-1)\cdot |\mu^{\ast}|$, hence $\mu'^{\ast}= \mu^{\ast}+|c^{\ast}|\cdot n^{O(1)}\leq |\mu^{\ast}|\cdot \alpha$
for $\alpha=n^{O(1)}$.
Let $\epsilon'=\epsilon/\alpha$.
By $\epsilon'$-approximating $\mu'^{\ast}$ by $\mu'$ we have

\[
|\mu'-\mu'^{\ast}|\leq \epsilon'\cdot |\mu'^{\ast}| \implies |(\mu'-|c|)-(\mu'^{\ast}-|c|)|\leq \epsilon'\cdot |\mu'^{\ast}| \implies |\mu-\mu^{\ast}|\leq \epsilon'\cdot |\mu^{\ast}|\cdot  \alpha  \leq \epsilon\cdot |\mu^{\ast}|
\]

The desired result follows.
\end{proof}

\begin{theorem}\label{them:min_mean_approx}
Let $G=(V,E,\Weight)$ be a weighted graph of $n$ nodes with constant treewidth.
For any $0<\epsilon<1$, the minimum mean value problem can be $\epsilon$-approximated in
$O(n\cdot \log(n/\epsilon))$ time and $O(n)$ space.
\end{theorem}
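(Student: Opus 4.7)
The plan is to reduce to the negative-cycle-free case and then apply the binary search established in the preceding discussion around~\cref{eq:approximation}. First, I would run $\Cyclealgo$ once on $G$ to obtain a value $c$ as in~\cref{them:min_cycle}. By \cref{item:them_neg_cycle1} of that theorem, the shifted graph $G_{-|c|}=(V,E,\Weight_{-|c|})$ with $\Weight_{-|c|}(e)=\Weight(e)+|c|$ contains no negative cycle, and its mean value $\mu'^{\ast}$ satisfies $\mu'^{\ast}=\mu^{\ast}+|c|$. This preprocessing costs $O(n)$ time and $O(n)$ space (storing $\Tree(G)$ once and reusing it).

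Next, I would apply the binary-search approximation scheme of the subsection in $G_{-|c|}$, using \cref{lem:ratio_decision} to evaluate each inequality $\mu'^{\ast}\geq \nu$ in $O(n)$ time. Since $G_{-|c|}$ has no negative cycles, the analysis culminating in~\cref{eq:approximation} applies: an $\epsilon'$-approximation of $\mu'^{\ast}$ is obtained after $O(\log(n/\epsilon'))$ rounds of binary search, each calling $\Cyclealgo$ on a further weight-shifted graph built implicitly from $\Tree(G)$.

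Finally, to convert an approximation of $\mu'^{\ast}$ into one of $\mu^{\ast}$, I would invoke~\cref{lem:approximation}: choosing $\epsilon'=\epsilon/n^{O(1)}$ ensures that $\mu=\mu'-|c|$ is an $\epsilon$-approximation of $\mu^{\ast}$. Substituting gives a number of iterations $O(\log(n/\epsilon'))=O(\log(n^{O(1)}/\epsilon))=O(\log(n/\epsilon))$. Multiplying by the $O(n)$ per-iteration cost of $\Cyclealgo$ yields the claimed $O(n\cdot \log(n/\epsilon))$ running time, while the space remains $O(n)$ because $\Tree(G)$ is built once and $\Cyclealgo$ uses only $O(\log n)$ additional space per invocation.

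The only delicate point is matching the approximation budgets: since $G$ may contain negative cycles, a naive uniform shift could inflate the error to $\epsilon\cdot|W^{-}|$ rather than $\epsilon\cdot|\mu^{\ast}|$; this pitfall is already resolved by the bound $|c|=|c^{\ast}|\cdot n^{O(1)}$ from \cref{them:min_cycle}, which in~\cref{lem:approximation} guarantees that the required tightening factor is only polynomial in $n$, so that $\log(1/\epsilon')=\log(1/\epsilon)+O(\log n)$ preserves the target complexity.
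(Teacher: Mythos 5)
Your proposal is correct and follows essentially the same route as the paper's proof: run $\Cyclealgo$ once to obtain $c$, shift the weights by $|c|$ so that \cref{them:min_cycle} guarantees no negative cycles, then binary search via \cref{eq:approximation} with the tightened budget $\epsilon'=\epsilon/n^{O(1)}$ from \cref{lem:approximation}, for $O(\log(n/\epsilon))$ calls of $O(n)$ each. The only cosmetic difference is that the paper handles $c\geq 0$ as a separate (shift-free) case and explicitly invokes \cref{rem:scc} to reduce to strongly connected graphs, neither of which changes the substance of your argument.
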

\begin{proof}
In view of \cref{rem:scc} the graph $G$ is strongly connected and has a minimum mean value $\mu^{\ast}$.
First, we construct a balanced binary tree-decomposition $\Tree(G)$ of $G$ in $O(n\cdot \log n)$
time and $O(n)$ space \cref{them:tree_decomp}.
Let $c$ be the value returned by $\Cyclealgo$ on the input graph $G$.
If $c\geq 0$, by \cref{lem:cyclealgo_correctness} we have $\mu^{\ast}\geq 0$, and by \cref{eq:approximation} $\mu^{\ast}$ can be $\epsilon$-approximated
in $O(\log(n/\epsilon))$ steps.
If $c<0$, we construct the graph $G_{-|c|}=(V,E,\Weight_{-|c|})$.
By \cref{lem:approximation}, $\mu^{\ast}$ can be $\epsilon$-approximated by $\epsilon'$ approximating
the mean value $\mu'^{\ast}$ of $G_{-|c|}$, where $\epsilon'=\frac{\epsilon}{n^{O(1)}}$. By construction, $G_{-|c|}$ does not contain negative cycles,
thus $\mu'^{\ast}\geq 0$, and by \cref{eq:approximation} $\mu'^{\ast}$ can be approximated in
$O(\log(n/\epsilon'))=O(\log(n/\epsilon))$ steps.
By \cref{lem:cyclealgo_complexity}, each step requires $O(n)$ time.
The statement follows.
\end{proof}

\section{The Minimum Initial Credit Problem}\label{sec:init_credit}
In the current section we present algorithms for solving the minimum initial credit
problem on weighted graphs $G=(V,E,\Weight)$.
We first deal with arbitrary graphs, and provide (i)~an $O(n\cdot m)$ algorithm
for the decision problem, and (ii)~ an  $O(n^2\cdot m)$ for the value problem,
improving the previously best upper bounds.
Afterwards we adapt our approach on graphs of constant treewidth  
to obtain an $O(n\cdot \log n)$ algorithm for the value problem.

\noindent{\bf Non-positive minimum initial credit.}
For technical convenience we focus on a variant of the minimum initial credit problem,
where energies are non-positive, and the goal is to keep partial sums of path prefixes non-positive.
Formally, given a weighted graph $G=(V,E,\Weight)$, the non-positive minimum initial credit value problem
asks to determine for each node $u\in V$ the largest energy value $\Energy(u)\in \Intsminus$
with the following property: there exists an infinite path $\mathcal{P}=(u_1, u_2\dots)$ with $u=u_1$,
such that for every finite prefix $P$ of $\mathcal{P}$ we have $\Energy(u)+\Weight(P)\leq 0$.
Conventionally, we let $\Energy(u)=-\infty$ if no finite such value exists.
The associated decision problem asks given a node $u$ and an initial credit $c\in \Ints_{\leq 0}$
whether $\Energy(u)\geq c$.
Hence, here minimality is wrt the absolute value of the energy.
A solution to the standard minimum initial credit problem can be obtained by
inverting the sign of each edge weight and solving the non-positive minimum initial credit problem in the resulting graph.

We start with some definitions and claims that will give the intuition for the algorithms to follow.
First, we define the minimum initial credit of a pair of nodes $u,v$, 
which is the energy to reach $v$ from $u$ (i.e., the energy is wrt a finite path).

\noindent{\bf Finite minimum initial credit.}
For nodes $u,v\in V$, we denote with $\Energy_v(u)\in \Intsminus$
the largest value with the following property:
there exists a path $P:u\Path v$
such that for every prefix $P'$ of $P$ we have $\Energy_v(u)+\Weight(P')\leq 0$.
Note that for every pair of nodes $u,v\in V$, we have $\Energy(u)\geq \Energy_v(u)+\Energy(v)$.
Conventionally, we let $\Energy_v(u)=-\infty$ if no such value exists
(i.e., there is no path $u\rightsquigarrow v$).

\begin{remark}\label{rem:energy_distance}
For any $u\in V$, let $P:u\Path v$ be a witness path for $\Energy_v(u)>-\infty$. Then
\[
\Energy_v(u)+\Weight(P)\leq 0 \implies \Energy_v(u)\leq -\Weight(P) \leq -\Distance(u,v)
\]
i.e., the energy to reach $v$ from $u$ is upper bounded by minus the distance from $u$ to $v$.
\end{remark}

\noindent{\bf Highest-energy nodes.}
Given a (possibly infinite) path $P$ with $\Weight(P)<\infty$, 
we say that a node $x\in P$ is a \emph{highest-energy node} of $P$
if there exists a \emph{highest-energy prefix} $P_1$ of $P$ ending in $x$ such that for any prefix $P_2$ of $P$
we have $\Weight(P_1)\geq \Weight(P_2)$.
Note that since the weights are integers,
for every pair of paths $P'_1$, $P'_2$, it is either $|\Weight(P'_1)-\Weight(P'_2)|=0$ 
or $|\Weight(P'_1)-\Weight(P'_2)|\geq 1$.
Therefore the set $\{\Weight(P_i)\}_i$ of weights of prefixes of  $P$
has a maximum, and thus a highest-energy node always exists when $\Weight(P)<\infty$.
The following properties are easy to verify:

\begin{compactenum}
\item If $x$ is a highest-energy node in a path $P:u\Path v$,
then $\Energy_v(x)=0$.
\item If $x$ is a highest-energy node in an infinite path $\mathcal{P}$, then $\Energy(x)=0$.
\end{compactenum}

The following claim states that the energy $\Energy(u)$
of a node $u$ is the maximum energy $\Energy_v(u)$ to reach a $0$-energy node $v$.
\begin{claim}\label{claim:zero_energy}
For every $u\in V$, we have $\Energy(u)=\max_{v:\Energy(v)=0}\Energy_v(u)$.
\end{claim}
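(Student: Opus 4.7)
The plan is to prove $\Energy(u) \geq \max_{v : \Energy(v) = 0} \Energy_v(u)$ and $\Energy(u) \leq \max_{v : \Energy(v) = 0} \Energy_v(u)$ separately. The first inequality is immediate from the subadditivity property $\Energy(u) \geq \Energy_v(u) + \Energy(v)$ stated just after the definition of finite minimum initial credit: for every $v$ with $\Energy(v) = 0$ we obtain $\Energy(u) \geq \Energy_v(u)$, and taking the maximum over all such $v$ gives the desired inequality.

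For the reverse direction I would produce a single node $v^{\ast}$ with $\Energy(v^{\ast}) = 0$ satisfying $\Energy_{v^{\ast}}(u) \geq \Energy(u)$. Assuming $\Energy(u) > -\infty$, fix a witness infinite path $\mathcal{P}$ starting at $u$, so every prefix $P'$ of $\mathcal{P}$ satisfies $\Energy(u) + \Weight(P') \leq 0$. Since prefix weights are then bounded above by $-\Energy(u)$ and weights are integers, the set of prefix weights attains its maximum, so $\mathcal{P}$ admits a highest-energy node $v^{\ast}$ with a highest-energy prefix $P_1 : u \Path v^{\ast}$. Two observations then finish the proof: (i) the infinite suffix of $\mathcal{P}$ beginning at $v^{\ast}$ has every prefix $Q$ satisfying $\Weight(P_1 \circ Q) \leq \Weight(P_1)$ and hence $\Weight(Q) \leq 0$, so the empty prefix makes $v^{\ast}$ a highest-energy node of that suffix, yielding $\Energy(v^{\ast}) = 0$ by the second of the two properties listed just before the claim; and (ii) using $P_1$ itself as a finite witness gives $\Energy_{v^{\ast}}(u) \geq -\Weight(P_1)$, because every prefix of $P_1$ has weight at most $\Weight(P_1)$ by the maximality of $P_1$. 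Combining this with $-\Weight(P_1) \geq \Energy(u)$, which holds because $\mathcal{P}$ witnesses $\Energy(u)$, completes the direction.

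The main subtlety is twofold: ensuring that a witness infinite path actually exists when $\Energy(u) > -\infty$, and handling the corner case $\Energy(u) = -\infty$. The first relies on the observation that the set of valid initial credits is downward closed (if $c$ works via some path and $c' \leq c$ then the same path works for $c'$) and, combined with integrality of weights, has an attained maximum whenever nonempty, so a corresponding witness infinite path exists. For the second, if $\Energy(u) = -\infty$ then no $v$ with $\Energy(v) = 0$ can be reachable from $u$, since otherwise the first direction would yield $\Energy(u) \geq \Energy_v(u) > -\infty$, a contradiction; thus the maximum is taken over the empty set and evaluates to $-\infty$ by the paper's convention, matching $\Energy(u)$.
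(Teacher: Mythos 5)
Your proposal is correct and follows essentially the same route as the paper's proof: the easy direction via $\Energy(u)\geq \Energy_v(u)+\Energy(v)$, and the reverse direction by taking a witness infinite path for $\Energy(u)$, extracting a highest-energy node $v^{\ast}$ (which has $\Energy(v^{\ast})=0$ by the property stated before the claim), and observing that the prefix up to $v^{\ast}$ witnesses $\Energy_{v^{\ast}}(u)\geq \Energy(u)$. You merely spell out details the paper treats as immediate (attainment of the witness, the suffix argument re-deriving $\Energy(v^{\ast})=0$, and the $\Energy(u)=-\infty$ case, which the paper dismisses since $-\infty$ is trivially a lower bound).
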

\begin{proof}
The direction $\Energy(u)\geq \max_{v:\Energy(v)=0}\Energy_v(u)$ is straightforward.
For the other direction, consider that $\Energy(u)>-\infty$
(trivially, $-\infty\leq \max_{v:\Energy(v)=0}\Energy_v(u)$)
and let $\mathcal{P}$ be a witness path for $\Energy(u)$.
Since $\Energy(u)>-\infty$, we have $\Weight(\mathcal{P})<\infty$, and
$\mathcal{P}$ has some highest-energy node $x$, thus $\Energy(x)=0$.
Since $x$ is on the witness $\mathcal{P}$ of $\Energy(u)$,
we have $\Energy(u)\leq \Energy_x(u)\leq \max_{v:\Energy(v)=0}\Energy_v(u)$.
The result follows.
\end{proof}

\subsection{The decision problem for general graphs}
Here we address the decision problem, namely, given some node $u\in V$
and an initial credit $c\in \Ints_{\leq 0}$, determine whether $\Energy(u)\geq c$.
The following claim states that if $\Energy(u)\geq c$, then
a non-positive cycle can be reached from $u$ with initial credit $c$,
by paths of length less than $n$.

\begin{claim}\label{claim:energy_decision}
For every $u\in V$ and $c\in \Ints_{\leq 0}$, we have that $\Energy(u) \geq c$ iff 
there exists a simple cycle $C$ such that (i)~$\Weight(C) \leq 0$ and
(ii)~for every $v\in C$ we have that $\Energy_v(u) \geq c$, which is 
witnessed by a path $P_v:u\Path v$ with $|P_v|<n$.
\end{claim}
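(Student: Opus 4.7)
The plan is to prove both directions separately, with the backward direction following from an explicit construction and the forward direction requiring a careful joint choice of cycle and witness paths.

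For the backward direction ($\Leftarrow$), I will construct an explicit infinite witness of $\Energy(u)\ge c$ from $C$ and the paths $P_v$. The key step is picking a ``low-point'' $v_0\in C$: writing $C=(v_0,v_1,\ldots,v_k=v_0)$, choose $v_0$ to be the node immediately following the maximum cumulative weight on some cyclic traversal, so that every partial sum $\sum_{i=0}^{j-1}\Weight(v_i,v_{i+1})$ is non-positive for $0\le j\le k$. Such a $v_0$ always exists by a standard cyclic-shift argument, exploiting that $\Weight(C)\le 0$. I then take $\mathcal{P}=P_{v_0}\cdot C\cdot C\cdots$ (the copies of $C$ oriented to start and end at $v_0$) and verify the prefix condition: prefixes inside $P_{v_0}$ satisfy $c+\Weight(\cdot)\le 0$ by assumption, and prefixes extending into the $j$-th cycle contribute $j\cdot\Weight(C)$ plus a non-positive partial sum, keeping $c+\Weight(\cdot)\le c+\Weight(P_{v_0})\le 0$.

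For the forward direction ($\Rightarrow$), let $\mathcal{P}$ be an infinite witness of $\Energy(u)\ge c$. I first extract a reachable simple non-positive cycle. Since the partial sums along $\mathcal{P}$ are integers bounded above by $-c$, and $V$ is finite, some node $y$ is visited infinitely often by $\mathcal{P}$. If every cycle between consecutive visits of $y$ had strictly positive weight (hence, by integrality, weight $\ge 1$), the partial sums would diverge to $+\infty$, contradicting the upper bound. Hence $\mathcal{P}$ contains a cycle with non-positive weight, and decomposing it into simple cycles yields at least one simple $C$ with $\Weight(C)\le 0$, reachable from $u$ while respecting the energy constraint.

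The technical core is to exhibit short witnesses $P_v$ of length $<n$ for every $v\in C$, which requires choosing $C$ carefully rather than using any arbitrary extracted cycle. I plan a joint minimization: among all triples $(C,y,Q)$ where $C$ is a simple non-positive cycle, $y\in C$, and $Q:u\Path y$ is a walk with $c+\Weight(\text{prefix})\le 0$ throughout, pick one minimizing $|Q|$ (breaking ties by $|C|$). By minimality, $Q$ must be simple: a repeated node on $Q$ yields a sub-cycle, which if positive can be removed to shorten $Q$, and if non-positive immediately furnishes a better candidate triple. A similar rerouting argument forces $Q\cap C=\{y\}$, so $|Q|+|C|\le n$. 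For each $v\in C$, take $P_v=Q\cdot W_v$, where $W_v$ is the portion of $C$ from $y$ to $v$ traversed in the ``descending'' direction guaranteed by the low-point argument of the backward direction. Then $|P_v|\le|Q|+|C|-1<n$, and the prefix-weight bound is preserved.

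The main obstacle is justifying the minimization rigorously: showing that the minimality of $(C,y,Q)$ forces both the simplicity of $Q$ and the disjointness $Q\cap C=\{y\}$, and that the traversal direction of $C$ picked by the low-point argument is compatible with the entry node $y$ produced by the minimization. This is the delicate interplay between the global choice of the non-positive cycle and the local structure of the energy-respecting walk reaching it.
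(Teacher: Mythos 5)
Your backward direction is correct, and it is a genuinely different, self-contained argument from the paper's: you build the infinite witness $P_{v_0}\circ C^{\omega}$ explicitly via the cyclic-shift (low-point) lemma, whereas the paper notes that $C^{\omega}$ has a highest-energy node $w\in C$ with $\Energy(w)=0$ and invokes Claim~\ref{claim:zero_energy}. The forward direction, however, has a genuine gap, and it is exactly the obstacle you flag at the end: your selection rule (minimize $|Q|$, tie-break on $|C|$) ignores the energy profile of the cycle itself. The structural consequences of minimality (simplicity of $Q$, $Q\cap C=\{y\}$, hence $|Q|+|C|\leq n$) do go through, but the witnesses $P_v=Q\circ W_v$ need not satisfy the prefix condition: a directed cycle has no choice of traversal direction, and the entry node is forced to be $y$, so the low-point argument cannot be applied --- the partial sums of $C$ starting at $y$ may be positive. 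Concretely, take $c<0$, an edge $u\to y$ of weight $-c$, a cycle $y\to a\to y$ with weights $+5,-5$, an edge $y\to b$ of weight $-10$, and a cycle $b\to d\to b$ with weights $0,0$. Then $\Energy(u)\geq c$ (via $u\to y\to b\to (d\,b)^{\omega}$), and your minimization selects the triple with $Q=u\to y$ and $C=(y\to a\to y)$, since $|Q|=1$ is minimal; but $\Energy_a(u)=c-5<c$, so this cycle violates condition~(ii) of the claim no matter how the witnesses are chosen. Thus the problem is not a missing justification of the minimization --- the minimization itself can pick a cycle for which the claim's conclusion is false.

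The fix is to couple the choice of the cycle to an actual witness path rather than to an abstract minimization, which is what the paper does: take a witness $\mathcal{P}$ of $\Energy(u)\geq c$, assume w.l.o.g.\ that it contains no positive cycles (removing them only decreases later prefix weights), and apply pigeonhole to the prefix of length $n$: it contains a simple cycle $C$, necessarily non-positive, and every $v\in C$ is reached by a prefix $P_v$ of $\mathcal{P}$ with $|P_v|<n$. These prefixes are automatically energy-respecting because they are prefixes of the witness, so the cycle and the short witnesses are obtained simultaneously --- precisely the coupling your joint minimization fails to enforce. Your initial extraction of a non-positive cycle via an infinitely-often visited node is fine but unnecessary once this route is taken.
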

\begin{proof}
For the one direction, if $\Weight(C)\leq 0$ we have $\Weight(C^{\omega})< \infty$, thus $C$
contains a $0$-energy node $w$. By Claim~\ref{claim:zero_energy}, $\Energy(u)= \max_{v:\Energy(v)=0}\Energy_v(u) \geq \Energy_w(u)\geq c$.
For the other direction, let $\mathcal{P}$ be a witness path for $\Energy(u)$, and we can assume w.l.o.g. that
$\mathcal{P}$ does not contain positive cycles. Then for every prefix $P_v:u\Path v$ of $\mathcal{P}$ we have
$\Energy(u) + \Weight(P_v)\leq 0$, thus $\Energy_v(u) \geq \Energy(u) \geq c$,
and the $n$-th such prefix contains a non-positive cycle $C$. The result follows.
\end{proof}

\noindent{\bf Algorithm $\Decisionenergyalgo$.}
Claim~\ref{claim:energy_decision} suggests a way to decide whether $\Energy(u) \geq c$.
First, we start with energy $c$ from $u$, and perform a sequence of $n-1$ relaxation steps,
similar to the Bellman-Ford algorithm, to discover the set $V_u^c$ of nodes that can be reached
from $u$ with initial credit $c$ by a path of length at most $n-1$. Afterwards, we perform a Bellman-Ford computation on
the subgraph $G\restr V_u^c$ induced by the set $V_u^c$. 
By Claim~\ref{claim:energy_decision}, we have that $\Energy(u) \geq c$ iff $G\restr V_u^c$
contains a non-positive cycle. Algorithm~\ref{algo:decisionenergyalgo} ($\Decisionenergyalgo$) gives a formal description.
The \emph{for} loop in lines~\ref{line:relaxation_loop_begin}-\ref{line:relaxation_loop_end} is similar to 
the procedure ROUND from the algorithm of~\cite{Bouyer08}.

\noindent{\bf Detecting non-positive cycles.}
It is known that the Bellman-Ford algorithm can detect negative cycles.
To detect non-positive cycles in a graph $G$ with $n$ nodes and weight function $\Weight$,
we execute Bellman-Ford on $G$
with a slightly modified weight function $\Weight'$ for which $\Weight'(e)=\Weight(e)-\frac{1}{n}$.
Then for any simple cycle $C$ in $G$ we have $\Weight(C)\leq 0$ iff $\Weight'(C)<0$. 
Indeed,
\[
\Weight'(C)<0 \iff \sum_{e\in C} \Weight(e) -\sum_{e\in C} \frac{1}{n}<0
\iff \Weight(C)<\frac{|C|}{n} \iff \Weight(C)\leq 0
\]

since $|C|\leq n$ and $\Weight(C)\in \Ints$.

\begin{algorithm}
\small
\DontPrintSemicolon
\caption{$\Decisionenergyalgo$}\label{algo:decisionenergyalgo}
\KwIn{A weighted graph $G=(V,E,\Weight)$, a node $u\in V$, an initial energy $c\in \Ints_{\leq 0}$}
\KwOut{$\True$ iff $\Energy(u)\geq c$}
\BlankLine
\tcp{Initialization}
\ForEach{$v \in V$}{
Assign $D(s)\leftarrow \infty$\\
}
Assign $D(u)\leftarrow c$\\
Assign $V_u^c\leftarrow \{u\}$\\
\tcp{$n-1$ relaxation steps to discover $V_u^c$}
\For{$i\leftarrow 1$ \KwTo $n-1$}{\label{line:relaxation_loop_begin}
\ForEach{$(v,w) \in E$}{
\uIf{$D(w)\geq D(v) + \Weight(v,w)$ and $D(v) + \Weight(v,w) \leq 0$ }{
Assign $D(w)\leftarrow D(v) + \Weight(v,w)$\\
Assign $V_u^c \leftarrow V_u^c\cup \{w\}$\\
}
}
}
\label{line:relaxation_loop_end}
Execute Bellman-Ford on $G\restr V_u^c$\\
\Return $\True$ iff a non-positive cycle is discovered
\end{algorithm}

The correctness of $\Decisionenergyalgo$ follows directly from Claim~\ref{claim:energy_decision}.
The time complexity is $O(n\cdot m)$ time spent in the \emph{for} loop of lines \ref{line:relaxation_loop_begin}-\ref{line:relaxation_loop_end},
plus $O(n\cdot m)$ time for the Bellman-Ford. We thus obtain the following theorem.

\begin{theorem}\label{them:energy_decision}
Let $G=(V,E,\Weight)$ be a weighted graph of $n$ nodes and $m$ edges.
Let $u\in V$ be an initial node, and $c\in \Ints_{\leq 0}$ be an initial credit.
The decision problem of whether $\Energy(u)\geq c$ can be solved in $O(n\cdot m)$ time
and $O(n)$ space.
\end{theorem}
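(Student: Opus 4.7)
The plan is to verify the three properties of $\Decisionenergyalgo$ separately, using Claim~\ref{claim:energy_decision} as the backbone for correctness. Let me outline the soundness, completeness, and resource bounds in that order.

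For \emph{soundness}, I would show that if $\Decisionenergyalgo$ returns $\True$, then $\Energy(u)\geq c$. The relaxation loop maintains the invariant that if $v\in V_u^c$ with $D(v)=c+\Weight(P_v)$ for some path $P_v:u\Path v$ of length at most the current iteration count, then every prefix $P'$ of $P_v$ satisfies $c+\Weight(P')\leq 0$. (This follows because a node is added to $V_u^c$ only through a relaxation that keeps $D(v)+\Weight(v,w)\leq 0$, and because the existing $D(v)$ itself was the partial sum of such a valid prefix.) Hence any $v\in V_u^c$ is reachable from $u$ with initial credit $c$ by a non-positively-valued prefix, i.e., $\Energy_v(u)\geq c$. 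Now if Bellman-Ford on $G\restr V_u^c$ (run with the $-\tfrac{1}{n}$ shift described in the paper) detects a non-positive cycle $C$, then $C$ is a simple cycle all of whose nodes are reachable from $u$ with credit $c$; the first direction of Claim~\ref{claim:energy_decision} then yields $\Energy(u)\geq c$.

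For \emph{completeness}, suppose $\Energy(u)\geq c$. By the second direction of Claim~\ref{claim:energy_decision}, there exists a simple cycle $C$ with $\Weight(C)\leq 0$ such that every $v\in C$ is the endpoint of some path $P_v:u\Path v$ of length less than $n$ with every prefix $P'$ satisfying $c+\Weight(P')\leq 0$. A straightforward induction on the iteration index $i$ in lines~\ref{line:relaxation_loop_begin}-\ref{line:relaxation_loop_end} shows that after $i$ rounds, every node reachable from $u$ by such a valid path of length at most $i$ is in $V_u^c$. Consequently, after $n-1$ rounds, $C\subseteq V_u^c$, so the cycle $C$ survives in $G\restr V_u^c$ and the subsequent Bellman-Ford detects a non-positive cycle, causing the algorithm to return $\True$.

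For the \emph{resource bounds}, the initialization costs $O(n)$. Each of the $n-1$ rounds of the relaxation loop scans every edge in $O(1)$ time, contributing $O(n\cdot m)$ total. The Bellman-Ford call on $G\restr V_u^c$ with the fractional shift likewise runs in $O(n\cdot m)$ time. Space is dominated by the array $D$ and the set $V_u^c$, both of size $O(n)$; storing the perturbed weights implicitly (as $\Weight(e)$ with the $-\tfrac{1}{n}$ applied lazily) avoids any blow-up. Summing gives $O(n\cdot m)$ time and $O(n)$ space.

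The only subtle point is the correctness of the relaxation loop as a substitute for the standard Bellman-Ford reachability: unlike the usual shortest-path relaxation, our update is gated by the $D(v)+\Weight(v,w)\leq 0$ condition, so we are not computing shortest paths but rather witnessing reachability under the non-positivity constraint. The key observation is that the constraint is \emph{prefix-closed}: any valid path from $u$ to $w$ decomposes as a valid path to its predecessor $v$ followed by the edge $(v,w)$, so the standard layered-rounds induction still pushes through. This is the step I would write most carefully; the remainder is bookkeeping.
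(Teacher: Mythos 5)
Your proposal is correct and takes essentially the same route as the paper: the paper proves this theorem via $\Decisionenergyalgo$, stating that correctness follows directly from Claim~\ref{claim:energy_decision} and that the $O(n\cdot m)$ time comes from the $n-1$ gated relaxation rounds plus one Bellman-Ford run (with the $-\frac{1}{n}$ shift) on $G\restr V_u^c$, so you are merely filling in the soundness/completeness details the paper leaves implicit. One small refinement for the step you flag as subtle: carry the quantitative invariant that after $i$ rounds $D(v)\leq c+\Weight(P)$ for every valid path $P:u\Path v$ of length at most $i$, since membership of $v$ in $V_u^c$ alone does not guarantee that the gate $D(v)+\Weight(v,w)\leq 0$ passes when extending such a path.
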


\subsection{The value problem for general graphs}
We now turn our attention to the value version of the minimum initial credit problem,
where the task is to determine $\Energy(u)$ for every node $u$.
The following claim establishes that if for all energies to reach some node $v$ we have $\Energy_v(w)<0$,
then $\Energy_v(u)=-\Distance(u,v)$, i.e., the energy to reach $v$ from every node $u$ is minus the distance
from $u$ to $v$.

\begin{claim}\label{claim:energy_distance}
If for all $w\in V\setminus\{v\}$ we have $\Energy_v(w)<0$, then for each $u\in V\setminus\{v\}$ we have $\Energy_v(u)=-\Distance(u,v)$.
\end{claim}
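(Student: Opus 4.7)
The plan is to prove $\Energy_v(u)=-\Distance(u,v)$ by establishing both inequalities separately, after disposing of the unreachable case. The upper bound $\Energy_v(u)\leq -\Distance(u,v)$ is immediate from Remark~\ref{rem:energy_distance} applied to any witness path; and if $v$ is unreachable from $u$ then both sides are $-\infty$ by the definitions of $\Energy_v$ and $\Distance$. So the real content is the lower bound $\Energy_v(u)\geq -\Distance(u,v)$, which amounts to exhibiting a single path $P:u\Path v$ whose every prefix $P'$ satisfies $\Weight(P')\leq \Distance(u,v)$; such a $P$ witnesses $-\Distance(u,v)$ as an admissible initial credit at $u$.

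The natural candidate is a shortest simple path $P=(u=u_1,u_2,\dots,u_k=v)$. Such a $P$ exists whenever $\Distance(u,v)>-\infty$, a condition implicitly forced by the hypothesis since $\Distance(u,v)=-\infty$ would give $-\Distance(u,v)=\infty$, contradicting $\Energy_v(u)\leq 0$. Let $\sigma_j=\Weight(u_1,\dots,u_j)$, so that $\sigma_1=0$ and $\sigma_k=\Distance(u,v)$, and let $j^{\ast}\in\{1,\dots,k\}$ attain $\max_j \sigma_j$. The target is to show $j^{\ast}=k$, as this forces $\sigma_j\leq \Distance(u,v)$ for every prefix and hence the desired lower bound.

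The crux of the argument, and the main obstacle, is ruling out $j^{\ast}<k$ by invoking the hypothesis. Assume toward contradiction that $j^{\ast}<k$; then by simplicity of $P$ we have $u_{j^{\ast}}\neq v$, and so by hypothesis $\Energy_v(u_{j^{\ast}})<0$. Since $\Energy_v(u_{j^{\ast}})\in\Intsminus$ and is finite (the suffix of $P$ starting at $u_{j^{\ast}}$ witnesses reachability of $v$), we have $\Energy_v(u_{j^{\ast}})\leq -1$. In particular $0$ is not an admissible initial credit at $u_{j^{\ast}}$, meaning that on every path $Q:u_{j^{\ast}}\Path v$ some prefix $Q'$ must satisfy $\Weight(Q')>0$, i.e., $\Weight(Q')\geq 1$ by integrality. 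Applying this to the suffix $Q=(u_{j^{\ast}},u_{j^{\ast}+1},\dots,u_k)$ of $P$ yields an index $m>j^{\ast}$ with $\sigma_m-\sigma_{j^{\ast}}\geq 1$, and hence $\sigma_m\geq \sigma_{j^{\ast}}+1$, contradicting the maximality of $\sigma_{j^{\ast}}$. Therefore $j^{\ast}=k$, which means $P$ is a witness for $\Energy_v(u)\geq -\Weight(P)=-\Distance(u,v)$ and completes the proof.
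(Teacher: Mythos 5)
Your core argument is correct and is essentially the paper's proof in expanded form: the paper takes a witness path $P$ for $\Distance(u,v)$ and invokes the stated property that a highest-energy node $x$ of a path $P:u\Path v$ has $\Energy_v(x)=0$, so the hypothesis forces $x=v$ and hence every prefix of $P$ weighs at most $\Weight(P)$; your max-prefix-index argument on a shortest simple path re-derives exactly this property inline (and taking $P$ simple is a good touch, since it is what guarantees $u_{j^{\ast}}\neq v$).

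The step that does not hold up is your disposal of the case $\Distance(u,v)=-\infty$. Arguing that this case is ``implicitly forced by the hypothesis'' not to occur because it would contradict $\Energy_v(u)\leq 0$ is circular: it uses the very equality $\Energy_v(u)=-\Distance(u,v)$ that is being proved. In fact the hypothesis alone does not exclude it: take $V=\{v,w\}$ with edges $(w,v)$ of weight $+1$ and $(v,w)$ of weight $-2$. Every path $w\Path v$ begins with the weight-$(+1)$ edge, so $\Energy_v(w)=-1<0$ and the hypothesis holds; yet the negative cycle through $v$ gives $\Distance(w,v)=-\infty$, so the stated equality fails. The claim must therefore be read with the implicit assumption that $\Distance(u,v)>-\infty$ (equivalently, no negative cycle meets a $u\Path v$ path); the paper's own proof makes the same silent assumption by positing a finite witness path for the distance, and the assumption does hold wherever the claim is applied (e.g., \cref{lem:distance_z} uses it on $\Graphstruct^{|X|}$, which has no non-positive cycles). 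So you should state this as an assumption (or restrict the claim accordingly) rather than derive it from the hypothesis; with it in place, your shortest simple path exists and the rest of your argument goes through.
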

\begin{proof}
Let $P:u\Path v$ be a witness path to the distance, i.e., $\Weight(P)=\Distance(u,v)<\infty$ 
(if $\Distance(u,v)=\infty$ the statement is trivially true).
Since every highest-energy node $x$ of $P$ has $\Energy_v(x)=0$, we have that $x=v$.
Hence, $P$ is a highest-energy prefix of itself, and for each prefix $P'$ of $P$ we have
$ -\Weight(P) + \Weight(P') \leq 0$  and thus $\Energy_v(u)\geq -\Weight(P)= -\Distance(u,v)$.
By \cref{rem:energy_distance}, it is $\Energy_v(u)\leq -\Distance(u,v)$. The result follows.
\end{proof}

\noindent{\bf An $O(n^2\cdot m)$ time solution to the value problem.}
Claim~\ref{claim:energy_distance} together with \cref{them:energy_decision} lead to an 
$O(n^2\cdot m)$ method for solving the minimum initial credit value problem. 
First, we compute the set $X=\{v\in V: \Energy(v)=0\}$ in $O(n^2\cdot m)$
time, by testing whether $\Energy(u)\geq 0$ for each node $u$. 
Afterwards, we contract the set $X$ to a new node $z$, and by Claim~\ref{claim:zero_energy}
for every remaining node $u$ we have  $\Energy(u)=\max_{v\in X}\Energy_v(u)=\Energy_z(u)$.
Since $u\not \in X$, the energy of $u$ is strictly negative, and thus $\Energy_z(u)<0$.
Finally, by Claim~\ref{claim:energy_distance}, we have $\Energy_z(u)=-\Distance(u,z)$.
Hence it suffices to compute the distance of each node $u$ to $z$, which can be obtained
in $O(n\cdot m)$ time. 

In the remaining of this subsection we provide a refined solution of $O(k\cdot n\cdot m)$ time,
where $k=|X|+1$ is the number of $0$-energy nodes (plus one).
Hence this solution is faster in graphs where $k=o(n)$.
This is achieved by algorithm $\Energynodesalgo$ for computing the set
$X$ faster.

\noindent{\bf Determining the $0$-energy nodes.}
The first step for solving the minimum initial credit problem
is determining the set $X$ of all $0$-energy nodes of $G$.
To achieve this, we construct the graph $G_2=(V_2, E_2, \Weight_2)$
with a fresh node $z\not\in V$ as follows: 

\begin{compactenum}
\item The node set is $V_2=V\cup\{z\}$,
\item The edge set is $E_2=E\cup  (\{z\}\times V)$,
\item The weight function $\Weight_2:E_2\rightarrow \Ints$ is

\[
\Weight_2(u,v) =
\left\{
	\begin{array}{ll}
		0  & \mbox{if } u=z \\
		\Weight(u,v) & \mbox{otherwise}
	\end{array}
\right.
\]
\end{compactenum}

\begin{remark}\label{rem:highest_energy_node}
Since for every outgoing edge $(z,x)$ of $z$ we have $\Weight_2(z,x)=0$,
if $z$ is a highest-energy node in a path of $G_2$, so is $x$.
Hence every non-positive cycle in $G_2$ has a highest-energy node other than $z$.
\end{remark}

Note that for every $u\in V$, the energy $\Energy(u)$ is the same in $G$ and $G_2$.

\noindent{\bf Algorithm $\Energynodesalgo$.}
Algorithm~\ref{algo:energynodesalgo} describes $\Energynodesalgo$ for obtaining the
set of all $0$-energy nodes in $G_2$.
Informally, the algorithm performs a sequence of modifications on a graph $\Graphstruct$,
initially identical to $G_2$.
In each step, the algorithm executes a Bellman-Ford computation on the current graph $\Graphstruct$ with $z$ as the source node,
as long as a non-positive cycle $C$ is discovered. 
For every such $C$, it determines a highest-energy node $w$
of $C$, and modifies $\Graphstruct$ by replacing every
incoming edge $(x,w)$ with an edge $(x,z)$ of the same weight,
and then removing $w$. 
See \cref{fig:credit} for an illustration.

\begin{algorithm}
\small
\DontPrintSemicolon
\caption{$\Energynodesalgo$}\label{algo:energynodesalgo}
\KwIn{A weighted graph $G_2=(V_2,E_2,\Weight_2)$}
\KwOut{The set $\{v\in V_2\setminus\{z\}: \Energy(v)=0\}$}
\BlankLine
Initialize sets $\Nodeset\leftarrow V_2$, $\Edgeset\leftarrow E_2$ and map $\Weightmap\leftarrow \Weight_2$\\
Let $\Graphstruct=(\Nodeset, \Edgeset, \Weightmap)$\\
Initialize set $X\leftarrow \emptyset$\\
\While{$\True$}{\label{line:while_true}
Execute Bellman-Ford from source node $z$ in $\Graphstruct$\label{line:bellman_ford}\\
\eIf{exists non-positive cycle $C$}{
Determine a highest-energy node $w\neq z$ in $C$\label{line:zero_energy}\\
Assign $X\leftarrow X\cup\{w\}$\\
\ForEach{edge $(x,w)\in \Edgeset$}{\label{line:for_each_edge}
\eIf{$(x,z)\not\in\Edgeset$}{
Assign $\Edgeset\leftarrow \Edgeset \cup \{(x,z)\}$\label{line:mod1}\\
Assign $\Weightmap(x,z)\leftarrow \Weight_2(x,w)$
}
{
Assign $\Weightmap(x,z)\leftarrow \min(\Weight_2(x,w),\Weightmap(x,z))$\label{line:mod2}\\
}
}
Assign $\Nodeset\leftarrow \Nodeset\setminus\{w\}$\\
}
{\Return{$X$}}
}
\end{algorithm}

As $0$-energy nodes are discovered,  $\Energynodesalgo$ performs a sequence of modifications to the 
graph $\Graphstruct$.
We denote with $\Graphstruct^k$ the graph $\Graphstruct$ after the $k$-th node
has been added to $X$ (and $\Graphstruct^0=G_2$).
We also use the superscript-$k$ in our graph notation to make it specific to $\Graphstruct^k$
(e.g. $\Distance^k(u,z)$ and $\Energy_z^k(u)$ denote respectively the distance from $u$ to $z$, and the
energy to reach $z$ from $u$ in $\Graphstruct^k$).
The following two lemmas establish the correctness of $\Energynodesalgo$.

\begin{lemma}\label{lem:energynodesalgo_cor1}
For every $w\in X$ we have $\Energy(w)=0$.
\end{lemma}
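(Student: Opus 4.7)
The plan is to prove the lemma by induction on the order in which nodes are added to $X$, using the inductive hypothesis that every previously added node $w''$ satisfies $\Energy(w'')=0$ in $G_2$ (equivalently in $G$). For the inductive step at iteration $k$, let $C$ be the non-positive cycle discovered by Bellman-Ford in $\Graphstruct^{k-1}$, and $w\neq z$ a highest-energy node in $C$ (such a $w$ exists by the same reasoning as \cref{rem:highest_energy_node}, since the outgoing edges from $z$ all still have weight $0$ in $\Graphstruct^{k-1}$). Our goal is to exhibit an infinite witness path $\pi$ from $w$ in $G_2$ whose every finite prefix has weight $\leq 0$, since this yields $\Energy(w)\geq 0$, and combined with the universal upper bound $\Energy(\cdot)\leq 0$ in the non-positive setting gives $\Energy(w)=0$.

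First I would establish a standard cyclic-shift argument: letting $C_w$ denote the cyclic shift of $C$ starting at $w$, the infinite concatenation $C_w^\omega$ is an infinite walk in $\Graphstruct^{k-1}$ starting at $w$, and because $w$ achieves the maximum weight among prefixes of $C$ and $\Weight(C)\le 0$, every finite prefix of $C_w^\omega$ has weight $\leq 0$ in $\Graphstruct^{k-1}$. This step is essentially the same calculation already used to argue that a highest-energy node of a non-positive cycle has energy $0$ in the original graph.

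The substantive step is lifting $C_w^\omega$ from $\Graphstruct^{k-1}$ to $G_2$. If $C_w^\omega$ never visits $z$, then every edge it uses is unmodified by the algorithm (modifications only add incoming edges to $z$ and delete nodes already placed in $X$), so $C_w^\omega$ is itself a valid infinite walk in $G_2$ and we are done. Otherwise, follow $C_w^\omega$ until it first reaches $z$ through some edge $(y,z)$ of weight $a=\Weightmap^{k-1}(y,z)$. By the construction in lines~\ref{line:mod1}--\ref{line:mod2}, this weight is attained by some original edge $(y,w'')$ in $G_2$ with $w''\in X$ added in an iteration strictly less than $k$. By the inductive hypothesis, $\Energy(w'')=0$, so a witness infinite path $\pi''$ from $w''$ in $G_2$ exists with all prefix weights $\leq 0$. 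I would define $\pi$ as the concatenation of the prefix of $C_w^\omega$ up to $y$, the edge $(y,w'')\in E_2$ of weight $a$, and $\pi''$.

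The bookkeeping to verify the bound on prefix weights of $\pi$ is then routine: prefixes ending at or before $y$ have the same weight as the corresponding prefix of $C_w^\omega$ (hence $\leq 0$); the prefix ending at $w''$ has weight equal to the prefix of $C_w^\omega$ ending at $z$ (hence $\leq 0$); and prefixes that extend into $\pi''$ add a non-positive quantity. The main obstacle I expect is precisely the lifting step, in particular being careful that the min-aggregation of weights in line~\ref{line:mod2} is handled by selecting the minimizing original target $w''$, and that the induction is set up so the use of $\Energy(w'')=0$ is sound (which it is, since $w''$ was added strictly before $w$). The base case $k=1$ is trivial because $\Graphstruct^0=G_2$ has no edges into $z$, so Case A applies automatically.
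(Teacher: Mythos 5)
Your proof is correct and follows essentially the same route as the paper: induction on the order of insertion into $X$, a case split on whether the discovered non-positive cycle visits $z$, and in the latter case unfolding the modified edge $(y,z)$ back to the original edge into an earlier-added node $w''$ so the inductive hypothesis applies. The only (cosmetic) difference is that you build the infinite witness from $w$ explicitly by concatenation, whereas the paper phrases the same step via the finite energies $\Energy_{w'}(w)=0$ and Claim~\ref{claim:zero_energy}.
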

\begin{proof}
The proof is by induction on the size of $X$. It is trivially true when $|X|=0$.
For the inductive step, let $w$ be the $k+1$-th node added in $X$.
By line~\ref{line:zero_energy}, $w$ is a highest-energy node in a non-positive cycle $C$ of $\Graphstruct^{k}$.
We split into two cases.
\begin{compactenum}
\item If $z\not\in C$, then $C$ is also a cycle of $G$, hence $w$ is a highest-energy node in the infinite path $\mathcal{P}=C^\omega$ of $G$,
and $\Energy(w)=0$.
\item If $z\in C$, let $x$ be the node before $z$ in $C$. 
By the modifications of lines~\ref{line:mod1} and \ref{line:mod2},
it is $\Weightmap^k(x,z)=\Weight_2(x,w')$, where $w'$ is a node that has been added to $X$
when the algorithm run on $\Graphstruct^i$ for some $i< k$.
It follows that $w$ is a highest-energy node in a path $P:z\Path w'$ in $G_2$,
and thus a highest-energy node in a suffix $P':w\Path w'$ of $P$,
where $P'$ is a path in $G$. Hence $\Energy_{w'}(w)=0$.
By the induction hypothesis, $w'$ is a $0$-energy node, i.e., $\Energy(w')=0$,
thus by Claim~\ref{claim:zero_energy} we have $\Energy(w)\geq \Energy_{w'}(w)=0$.
\end{compactenum}
The result follows.
\end{proof}

\begin{lemma}\label{lem:energynodesalgo_cor2}
For every $w\in V: \Energy(w)=0$ we have $w\in X$.
\end{lemma}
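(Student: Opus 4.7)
I will argue by contradiction. Suppose $\Energy(w)=0$ but $w\notin X$ at termination of $\Energynodesalgo$, and let $K=|X|$ so that termination happens on $\Graphstruct^K$, in which Bellman-Ford from $z$ detects no non-positive cycle. Since $\Energy(w)=0$, there exists an infinite witness path $\mathcal{P}=(w=u_1,u_2,\dots)$ in $G$ such that $\Weight(P_i)\le 0$ for every finite prefix $P_i=(u_1,\dots,u_i)$. The strategy is to exhibit, using $\mathcal{P}$, a closed walk of total weight $\leq 0$ reachable from $z$ in $\Graphstruct^K$; such a walk decomposes into simple cycles, at least one of which is non-positive and reachable from $z$, contradicting termination.

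The closed walk is built as follows. First, traverse the edge $(z,w)$ of $\Graphstruct^K$, which still exists with weight $0$ since $w\notin X$ implies $w\in\Nodeset^K$ and this edge was never removed. Then follow $\mathcal{P}$ step by step. Two cases arise. \emph{(i)} If every node of $\mathcal{P}$ lies in $V\setminus X$, then $\mathcal{P}$ eventually repeats a node (since $V$ is finite), yielding a simple cycle $C$ in $\mathcal{P}$ which belongs to $\Graphstruct^K$. Because $\Weight(P_i)\le 0$ for all prefixes, $C$ must be non-positive (otherwise the prefix sums would tend to $+\infty$), and $C$ is reachable from $z$ via $z\to w\to\cdots$. \emph{(ii)} Otherwise, let $j$ be the largest index such that $u_1,\dots,u_j\notin X$ but $u_{j+1}\in X$. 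When $u_{j+1}$ was added to $X$, the algorithm processed the edge $(u_j,u_{j+1})$ in the loop at line~\ref{line:for_each_edge}, creating or updating the edge $(u_j,z)$ with weight at most $\Weight_2(u_j,u_{j+1})=\Weight(u_j,u_{j+1})$ (lines~\ref{line:mod1}--\ref{line:mod2}); subsequent removals can only further decrease $\Weightmap(u_j,z)$ via the min in line~\ref{line:mod2}, so in $\Graphstruct^K$ we still have $\Weightmap^K(u_j,z)\le \Weight(u_j,u_{j+1})$. The closed walk $z\to u_1\to\cdots\to u_j\to z$ then has total weight at most
\[
0+\sum_{i=1}^{j-1}\Weight(u_i,u_{i+1})+\Weight(u_j,u_{j+1})=\Weight(P_{j+1})\le 0,
\]
so it contains a simple non-positive cycle reachable from $z$.

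In either case, Bellman-Ford at line~\ref{line:bellman_ford} of the last iteration would have found a non-positive cycle in $\Graphstruct^K$, contradicting the fact that the algorithm terminated and returned $X$. Hence $w\in X$.

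\textbf{Main obstacle.} The delicate point is the bookkeeping for edge modifications across the sequence $\Graphstruct^0,\Graphstruct^1,\dots,\Graphstruct^K$: one must verify that the interior edges $(u_i,u_{i+1})$ for $i<j$ are preserved intact in $\Graphstruct^K$ (which holds because their target $u_{i+1}$ is never removed, so those edges are never rewritten), and that the rerouted edge $(u_j,z)$ retains a weight no larger than $\Weight(u_j,u_{j+1})$ throughout all subsequent removals (which holds because line~\ref{line:mod2} only takes minima). Everything else is a direct consequence of the witness-path definition of $\Energy(w)=0$ and of the structure of $\Graphstruct^K$. \qed
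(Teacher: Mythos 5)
Your argument is correct and reaches the paper's conclusion by a genuinely different route. The paper maintains a forward invariant: it shows that from some iteration $k$ on, every intermediate graph $\Graphstruct^i$ ``is aware of'' $w$, i.e., either $w\in X$ or $\Graphstruct^i$ contains a non-positive cycle through $w$ (created via the rerouted edge into $z$, using $\Distance^{j+1}(w,z)\leq \Distance^{j}(w,w')\leq 0$ and the fact that $\Distance^{i}(w,z)$ never increases), so that termination without non-positive cycles forces $w\in X$. You instead argue by contradiction directly in the terminal graph $\Graphstruct^{K}$: you cut the witness $\mathcal{P}$ at its first node $u_{j+1}\in X$ and exhibit the explicit closed walk $z\to u_1\to\cdots\to u_j\to z$ of weight at most $\Weight(P_{j+1})\leq 0$, using exactly the same mechanism the paper uses (the rerouted edge satisfies $\Weightmap^{K}(u_j,z)\leq \Weight(u_j,u_{j+1})$ and only decreases, interior edges are untouched since their endpoints are never removed), which contradicts the final Bellman--Ford finding no non-positive cycle. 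Your version avoids the per-iteration bookkeeping and the monotonicity claim about $\Distance^{i}(w,z)$, replacing it with monotonicity of a single edge weight, which is arguably cleaner; the paper's invariant formulation is what generalizes more directly to the treewidth variant ($\Energynodesalgotw$), where the same correctness argument is reused. One small imprecision in your case (i): the simple cycle obtained from the \emph{first} node repetition of $\mathcal{P}$ need not itself be non-positive (a single positive cycle does not force prefix sums to $+\infty$); what is true, and what you need, is that \emph{some} simple cycle on nodes of $\mathcal{P}$ is non-positive --- e.g., take a node occurring infinitely often in $\mathcal{P}$ and note that if every closed sub-walk between consecutive occurrences were positive, the prefix weights at those occurrences would grow unboundedly, contradicting $\Weight(P_i)\leq 0$; a non-positive closed walk then yields a non-positive simple cycle in its decomposition, all of whose nodes avoid $X$ and are reachable from $z$. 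With that repair the proof is complete.
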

\begin{proof}
Consider any $w\in V: \Energy(w)=0$. For some $i\in \Nats$, we say that
$\Graphstruct^i$ ``is aware of $w$'' if either $\Graphstruct^i$
has a non-positive cycle $C:w\Path w$, or $w\in X$ when $|X|=i$.
Note that when $\Energynodesalgo$ terminates there are no non-positive cycles in $\Graphstruct^{|X|}$.
Hence, it suffices to argue that there exists a $k\in\Nats$ such that
for each $i\geq k$, $\Graphstruct^i$ is aware of $w$.
We first argue that there exists a $k$ for which $\Graphstruct^k$ is aware of $w$.

Let $\mathcal{P}$ be a witness for $\Energy(w)=0$,
hence $\mathcal{P}$ traverses a non-positive cycle $C_1$ in $G$, 
thus $C_1$ exists in $\Graphstruct^0$.
Then there exists a smallest $j\in \Nats$ such that 
some node $w'$ of $\mathcal{P}$ is identified as a highest-energy node
in a non-positive cycle $C_2$ (possibly $C_1=C_2$), and inserted to $X$.
If $w=w'$, we have that $\Graphstruct^j$ is aware of $w$.
Otherwise, since $\Energy(w)=0$ and $w'$ is a node in the witness $\mathcal{P}$,
we have $\Energy_{w'}(w)=0$.
By the choice of $w'$, the path $\mathcal{P}$ exists in $\Graphstruct^j$,
therefore $\Energy^j_{w'}(w)=\Energy_{w'}(w)=0$,
and by \cref{rem:energy_distance}, we have $\Distance^j(w,w')\leq 0$.
It is straightforward that after the modifications in lines~\ref{line:mod1} and \ref{line:mod2},
we have that $\Distance^{j+1}(w,z)\leq \Distance^j(w,w')\leq 0$, and since $\Weightmap^j(z,w)=\Weight_2(z,w)=0$,
we have a non-positive cycle $C:w\Path w$ in $\Graphstruct^{j+1}$ through $z$.
Hence either $\Graphstruct^{j}$ or $\Graphstruct^{j+1}$ is aware of $w$,
thus there exists a $k\in\Nats$ for which $\Graphstruct^k$ is aware of $w$.

Finally, observe that the distance $\Distance^i(w,z)$ does not increase in any $\Graphstruct^i$ for $i\geq k$
until $w$ is inserted to $X$, hence for each $i\geq k$, the graph $\Graphstruct^i$ is aware of $w$.
The desired result follows.
\end{proof}

\noindent{\bf Determining the negative-energy nodes.}
Having computed the set $X$ of all the $0$-energy nodes of $G$,
the second step for solving the minimum initial value credit problem
is to determine the energy of every other node $u\in V\setminus X$.
Recall the graph $\Graphstruct^{|X|}=(\Nodeset^{|X|},\Edgeset^{|X|},\Weightmap^{|X|})$ after the end of $\Energynodesalgo$.

\begin{lemma}\label{lem:distance_z}
For every $u\in V\setminus X$ we have $\Energy(u)=-\Distance^{|X|}(u,z)$.
\end{lemma}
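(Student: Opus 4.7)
The plan is to reduce the claim to an application of Claim~\ref{claim:energy_distance} inside the modified graph $\Graphstruct^{|X|}$. The key intermediate step I would prove is that $\Energy^{|X|}_z(u) = \Energy(u)$ for every $u\in V\setminus X$, where $\Energy^{|X|}_z(u)$ denotes the analogue of $\Energy_z(u)$ computed inside $\Graphstruct^{|X|}$. Intuitively, $\Graphstruct^{|X|}$ is $G$ with all $0$-energy nodes contracted onto $z$, so reaching $z$ in $\Graphstruct^{|X|}$ corresponds to reaching some $0$-energy node in $G$, which by Claim~\ref{claim:zero_energy} is exactly what determines $\Energy(u)$.

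To establish the equality, I would translate witness paths in both directions. For $\Energy^{|X|}_z(u)\geq \Energy(u)$, invoke Claim~\ref{claim:zero_energy} to obtain a witness $P: u\Path v^*$ in $G$ with $v^*\in X$ and $\Weight(P)=-\Energy(u)$, and truncate at the first $X$-node so that $v^*$ is the unique $X$-node of $P$. By the modification rules of lines~\ref{line:mod1}--\ref{line:mod2}, the last edge $(y,v^*)$ is replaced in $\Graphstruct^{|X|}$ by an edge $(y,z)$ of weight at most $\Weight(y,v^*)$, while every other edge of $P$ (lying between nodes of $V\setminus X$) is preserved unchanged; the resulting path witnesses $\Energy^{|X|}_z(u)\geq \Energy(u)$. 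Conversely, for $\Energy^{|X|}_z(u)\leq \Energy(u)$, take a witness $Q:u\Path z$ in $\Graphstruct^{|X|}$ truncated so that $z$ appears only at the end. Its last edge $(y,z)$ is realized by some $(y,w^*)\in E$ with $w^*\in X$ and $\Weight(y,w^*)=\Weightmap^{|X|}(y,z)$, and replacing it yields a path $Q^*:u\Path w^*$ in $G$ with identical weight and identical prefix weights; hence $\Energy_{w^*}(u)\geq \Energy^{|X|}_z(u)$, and Claim~\ref{claim:zero_energy} gives $\Energy(u)\geq \Energy^{|X|}_z(u)$.

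With $\Energy^{|X|}_z(u)=\Energy(u)$ established, the lemma follows by applying Claim~\ref{claim:energy_distance} inside $\Graphstruct^{|X|}$ with target node $z$. Its hypothesis requires $\Energy^{|X|}_z(w)<0$ for every $w\in \Nodeset^{|X|}\setminus\{z\} = V\setminus X$, which holds because $w\notin X$ forces $\Energy(w)<0$ by Lemma~\ref{lem:energynodesalgo_cor2}, and this transfers through the equality. The claim then yields $\Energy^{|X|}_z(u)=-\Distance^{|X|}(u,z)$, and composing with the equality completes the proof. (Applicability of the claim is underpinned by the fact that $\Graphstruct^{|X|}$ contains no non-positive cycles, since the while loop of $\Energynodesalgo$ exits only after Bellman-Ford reports none.)

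The main obstacle lies in the path-translation step: each edge $(y,z)$ in $\Graphstruct^{|X|}$ carries a \emph{minimum} weight over all original edges $(y,w)$ with $w\in X$ that were redirected, possibly in different iterations of $\Energynodesalgo$, so the witness $w^*$ realizing this minimum must be picked with care; and one must arrange that the chosen witness paths visit $X$ (respectively $z$) only at the final position, so that every intermediate edge is provably untouched by the modifications of lines~\ref{line:mod1}--\ref{line:mod2}. Once these bookkeeping details are settled, the rest of the argument is routine.
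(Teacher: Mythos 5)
Your proposal is correct and follows essentially the same route as the paper: both establish that the energy of $u$ in $G$ equals $\Energy^{|X|}_z(u)$ in $\Graphstruct^{|X|}$ (via Claim~\ref{claim:zero_energy} and the correctness lemmas for $\Energynodesalgo$), check that all these energies are strictly negative, and then apply Claim~\ref{claim:energy_distance} inside $\Graphstruct^{|X|}$. The only difference is that you spell out the two-way witness-path translation (truncation at the first $X$-node, the minimum over redirected edges $(y,z)$) that the paper dismisses as ``straightforward to verify,'' which is a welcome elaboration rather than a deviation.
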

\begin{proof}
Consider any node $u\in V\setminus X=\Nodeset^{|X|}\setminus\{z\}$.
By Claim~\ref{claim:energy_distance}, in the graph $G$ we have $\Energy(u)=\max_{v:\Energy(v)=0}\Energy_v(u)$,
and by the correctness of $\Energynodesalgo$ from
\cref{lem:energynodesalgo_cor1} and \cref{lem:energynodesalgo_cor2} we have $X=\{v:\Energy(v)=0\}$,
thus $\Energy(u)=\max_{v\in X}\Energy_v(u)$.
It is straightforward to verify that at the end of $\Energynodesalgo$,
we have $\max_{v\in X}\Energy_v(u)= \Energy^{|X|}_z(u)$, i.e.,
the maximum energy to reach the set $X$ in $G$ is the energy to reach $z$ in $\Graphstruct^{|X|}$.
For all $v\in \Nodeset^{|X|}\setminus\{z\}$ it is $\Energy^{|X|}_z(v)<0$,
otherwise we would have $\Energy(v)=0$ and thus $v\in X$ and $v\not\in \Nodeset^{|X|}$.
Then by Claim~\ref{claim:energy_distance}, $\Energy^{|X|}_z(u)=-\Distance^{|X|}(u,z)$.
We conclude that $\Energy(u)=-\Distance^{|X|}(u,z)$.
\end{proof}

Hence, to compute the energy $\Energy(u)$ of every node $u\in V\setminus X$,
it suffices to compute its distance to $z$ in $\Graphstruct^{|X|}$.	
This is straightforward by reversing the edges of $\Graphstruct^{|X|}$ and
performing a Bellman-Ford computation with $z$ as the source node.
\cref{fig:credit} illustrates the algorithms on a small example.
We obtain the following theorem.

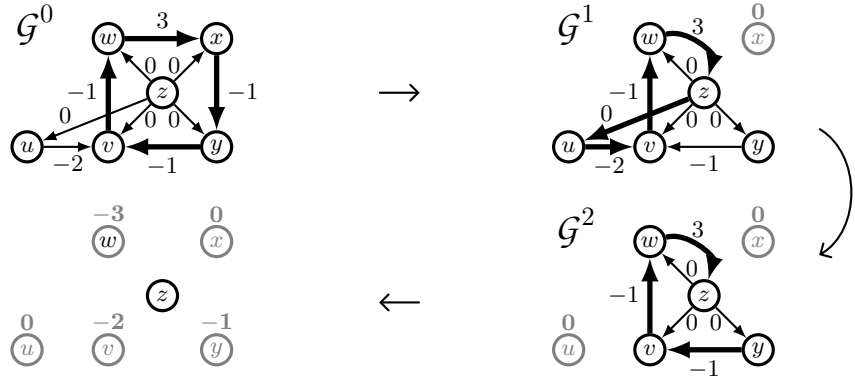
\begin{figure}[!h]
\centering
\begin{tikzpicture}[scale=0.9,
thick, >=latex,
pre/.style={<-,shorten >= 1pt, shorten <=1pt, thick},
post/.style={->,shorten >= 1pt, shorten <=1pt,  thick},
und/.style={very thick, draw=gray},
node/.style={circle, minimum size=4mm, draw=black!100, very thick, inner sep=1},
rootbag/.style={ellipse, minimum height=7mm,minimum width=14mm,draw=black!80, line width=2.5pt},
virt/.style={circle,draw=black!50,fill=black!20, opacity=0}]

\def \step{1.6}
\def \bend{0}

\def \xdisp{0}
\def \ydisp{0}
\def \uxoffset{0.4}

\node []		(G0)	at	(\xdisp+0.5, \ydisp+1.8)	{\Large $\Graphstruct^0$};	

\node	[node]	(u)	at	(0*\step\xdisp+\uxoffset, 0*\step+\ydisp)	{$u$};
\node	[node]	(v)	at	(1*\step+\xdisp, 0*\step+\ydisp)	{$v$};
\node	[node]	(w)	at	(1*\step+\xdisp, 1*\step+\ydisp)	{$w$};
\node	[node]	(x)	at	(2*\step+\xdisp, 1*\step+\ydisp)	{$x$};
\node	[node]	(y)	at	(2*\step+\xdisp, 0*\step+\ydisp)	{$y$};
\node	[node]	(z)	at	(3*\step/2+\xdisp, 1*\step/2+\ydisp)	{$z$};

\draw [draw=black,  thick, ->,] (u) to node[below]{$-2$} (v);
\draw [draw=black, line width=2, ->,] (v) to node[left]{$-1$} (w);
\draw [draw=black, line width=2, ->,] (w) to node[above]{$3$} (x);
\draw [draw=black, line width=2, ->,] (x) to node[right]{$-1$} (y);
\draw [draw=black, line width=2, ->,] (y) to node[below]{$-1$} (v);

\draw [draw=black,  thick, ->, bend right=\bend] (z) to node[above, pos=0.8]{$0$} (u);
\draw [draw=black,  thick, ->,] (z) to node[right]{$0$} (v);
\draw [draw=black,  thick, ->,] (z) to node[right]{$0$} (w);
\draw [draw=black,  thick, ->,] (z) to node[left]{$0$} (x);
\draw [draw=black,  thick, ->,] (z) to node[left]{$0$} (y);

\draw [draw=black,  thick, -angle 90,] (4+\step,\step/2) to (4+\step+0.6,\step/2);

\def \xdisp{8}
\def \ydisp{0}

\node []		(G1)	at	(\xdisp+0.5, \ydisp+1.8)	{\Large $\Graphstruct^1$};	

\node	[node]	(u)	at	(0*\step + \xdisp+\uxoffset, 0*\step+\ydisp)	{$u$};
\node	[node]	(v)	at	(1*\step+\xdisp, 0*\step+\ydisp)	{$v$};
\node	[node]	(w)	at	(1*\step+\xdisp, 1*\step+\ydisp)	{$w$};
\node	[]	(xe)	at	(2*\step+\xdisp, 1*\step+\ydisp+0.4)	{\textcolor{gray}{$\mathbf{0}$}};
\node	[node, draw=gray]	(x)	at	(2*\step+\xdisp, 1*\step+\ydisp)	{\textcolor{gray}{$x$}};
\node	[node]	(y)	at	(2*\step+\xdisp, 0*\step+\ydisp)	{$y$};
\node	[node]	(z)	at	(3*\step/2+\xdisp, 1*\step/2+\ydisp)	{$z$};

\draw [draw=black,  line width=2, ->,bend right=-60] (w) to node[above]{$3$} (z);
\draw [draw=black,  line width=2, ->,] (u) to node[below]{$-2$} (v);
\draw [draw=black,  line width=2, ->,] (v) to node[left]{$-1$} (w);
\draw [draw=black,  thick, ->,] (y) to node[below]{$-1$} (v);

\draw [draw=black,  line width=2, ->, bend right=\bend] (z) to node[above, pos=0.8]{$0$} (u);
\draw [draw=black,  thick, ->,] (z) to node[right]{$0$} (v);
\draw [draw=black,  thick, ->,] (z) to node[right]{$0$} (w);
\draw [draw=black,  thick, ->,] (z) to node[left]{$0$} (y);


\def \xdisp{8}
\def \ydisp{-3}

\node []		(G0)	at	(\xdisp+0.5, \ydisp+1.8)	{\Large $\Graphstruct^2$};	

\node	[]	(ue)	at	(0*\step + \xdisp+\uxoffset, 0*\step+\ydisp+0.4)	{\textcolor{gray}{$\mathbf{0}$}};
\node	[node, draw=gray]	(u)	at	(0*\step + \xdisp+\uxoffset, 0*\step+\ydisp)	{\textcolor{gray}{$u$}};
\node	[node]	(v)	at	(1*\step+\xdisp, 0*\step+\ydisp)	{$v$};
\node	[node]	(w)	at	(1*\step+\xdisp, 1*\step+\ydisp)	{$w$};
\node	[]	(xe)	at	(2*\step+\xdisp, 1*\step+\ydisp+0.4)	{\textcolor{gray}{$\mathbf{0}$}};
\node	[node, draw=gray]	(x)	at	(2*\step+\xdisp, 1*\step+\ydisp)	{\textcolor{gray}{$x$}};
\node	[node]	(y)	at	(2*\step+\xdisp, 0*\step+\ydisp)	{$y$};
\node	[node]	(z)	at	(3*\step/2+\xdisp, 1*\step/2+\ydisp)	{$z$};

\draw [draw=black, line width=2, ->,bend right=-60] (w) to node[above]{$3$} (z);
\draw [draw=black, line width=2, ->,] (v) to node[left]{$-1$} (w);
\draw [draw=black, line width=2, ->,] (y) to node[below]{$-1$} (v);

\draw [draw=black,  thick, ->,] (z) to node[right]{$0$} (v);
\draw [draw=black,  thick, ->,] (z) to node[right]{$0$} (w);
\draw [draw=black,  thick, ->,] (z) to node[left]{$0$} (y);

\draw [draw=black,  thick, angle 90-,] (4+\step,-2.3) to (4+\step+0.6,-2.3);

\def \xdisp{0}
\def \ydisp{-3}

\node	[]	(ue)	at	(0*\step + \xdisp+\uxoffset, 0*\step+\ydisp+0.4)	{\textcolor{gray}{$\mathbf{0}$}};
\node	[node, draw=gray]	(u)	at	(0*\step + \xdisp+\uxoffset, 0*\step+\ydisp)	{\textcolor{gray}{$u$}};
\node	[]	(ve)	at	(1*\step+\xdisp, 0*\step+\ydisp+0.4)	{\textcolor{gray}{$\mathbf{-2}$}};
\node	[node, draw=gray]	(v)	at	(1*\step+\xdisp, 0*\step+\ydisp)	{\textcolor{gray}{$v$}};
\node	[]	(we)	at	(1*\step+\xdisp, 1*\step+\ydisp+0.4)	{\textcolor{gray}{$\mathbf{-3}$}};
\node	[node, draw=gray]	(w)	at	(1*\step+\xdisp, 1*\step+\ydisp)	{$w$};
\node	[]	(xe)	at	(2*\step+\xdisp, 1*\step+\ydisp+0.4)	{\textcolor{gray}{$\mathbf{0}$}};
\node	[node, draw=gray]	(x)	at	(2*\step+\xdisp, 1*\step+\ydisp)	{\textcolor{gray}{$x$}};
\node	[]	(ye)	at	(2*\step+\xdisp, 0*\step+\ydisp+0.4)	{\textcolor{gray}{$\mathbf{-1}$}};
\node	[node, draw=gray]	(y)	at	(2*\step+\xdisp, 0*\step+\ydisp)	{\textcolor{gray}{$y$}};
\node	[node]	(z)	at	(3*\step/2+\xdisp, 1*\step/2+\ydisp)	{$z$};

\draw [draw=black,  thick, -angle 90, bend left=60] (10.5+\step,\step/6) to (10.5+\step,-1.6);

\end{tikzpicture}
\caption{
Solving the value problem using operations on the graph $\Graphstruct$. 
Initially we examine $\Graphstruct^0$,
and a non-positive cycle is found (boldface edges) with highest-energy node $x$.
Thus $\Energy(x)=0$, and we proceed with $\Graphstruct^1$, to discover $\Energy(u)=0$.
In $\Graphstruct^2$ all cycles are positive, and the energy of each remaining node
is minus its distance to $z$.
}\label{fig:credit}
\end{figure}

\begin{theorem}\label{them:energy}
Let $G=(V,E,\Weight)$ be a weighted graph of $n$ nodes and $m$ edges,
and $k=|\{v\in V: \Energy(v)=0\}|+1$. 
The minimum initial credit value problem for $G$ can be solved in $O(k\cdot n\cdot m)$ time
and $O(n)$ space.
\end{theorem}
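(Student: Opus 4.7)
The plan is to combine the properties of Algorithm $\Energynodesalgo$ established in the preceding lemmas with one additional shortest-path computation, then carefully account for the number of iterations the algorithm performs in terms of $k$.

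For \emph{correctness}, the argument is already essentially assembled: by \cref{lem:energynodesalgo_cor1} and \cref{lem:energynodesalgo_cor2}, $\Energynodesalgo$ returns exactly the set $X = \{v \in V : \Energy(v) = 0\}$. For the remaining nodes $u \in V \setminus X$, \cref{lem:distance_z} gives $\Energy(u) = -\Distance^{|X|}(u,z)$, which is readily computed by reversing the edges of $\Graphstruct^{|X|}$ and running a single-source Bellman-Ford from $z$. Thus the full value function $\Energy(\cdot)$ is obtained.

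For the \emph{time bound}, I would argue as follows. Each iteration of the while loop in $\Energynodesalgo$ (line~\ref{line:while_true}) executes one Bellman-Ford computation on $\Graphstruct$ (line~\ref{line:bellman_ford}), which has at most $n+1$ nodes and at most $m + n$ edges (the extra edges being the outgoing edges from the fresh node $z$, at most one to each original node). Thus each iteration costs $O(n \cdot m)$ time, including the constant-time per-edge work of lines~\ref{line:for_each_edge}--\ref{line:mod2}. Except possibly the final iteration, each iteration inserts exactly one new node into $X$ (since the graph has at least one non-positive cycle, from which a highest-energy node is extracted in line~\ref{line:zero_energy}). Hence the loop performs at most $|X| + 1 = k$ iterations, for a total of $O(k \cdot n \cdot m)$ time. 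The final reversed Bellman-Ford used for \cref{lem:distance_z} costs an additional $O(n \cdot m)$, which is absorbed into $O(k \cdot n \cdot m)$.

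For the \emph{space bound}, note that the algorithm maintains only the current graph $\Graphstruct$ and the Bellman-Ford distance array; the modifications in lines~\ref{line:mod1}--\ref{line:mod2} add at most one outgoing edge from each original node to $z$, i.e., at most $n$ new edges in total, and nodes are only removed. Hence $O(n)$ auxiliary space beyond the input suffices, and the Bellman-Ford itself uses $O(n)$ space for its distance array. The only real subtlety in the proof is reconciling the count of iterations with $k$ and ensuring that extending $\Graphstruct$ by the edges to $z$ does not inflate the complexity of each Bellman-Ford call beyond $O(n \cdot m)$; both points are immediate from the construction and the bound $|\Edgeset| \leq m + n = O(m)$ at all times.
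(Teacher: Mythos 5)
Your proposal is correct and follows essentially the same route as the paper's proof: correctness from \cref{lem:energynodesalgo_cor1}, \cref{lem:energynodesalgo_cor2} and \cref{lem:distance_z}, at most $k$ iterations of the \emph{while} loop each dominated by an $O(n\cdot m)$ Bellman-Ford (the paper additionally notes that extracting a highest-energy node via backpointers costs only $O(n)$, which your per-iteration bound absorbs anyway), plus one final $O(n\cdot m)$ shortest-path computation. No gaps.
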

\begin{proof}
\cref{lem:energynodesalgo_cor1}, \cref{lem:energynodesalgo_cor2} and \cref{lem:distance_z} establish the correctness,
so it remains to argue about the complexity.
The \emph{while} block of line~\ref{line:while_true} is executed at most once for each
$0$-energy node, hence at most $k$ times.
Inside the block, the execution of Bellman-Ford in line~\ref{line:bellman_ford}
requires $O(n\cdot m)$ time and $O(m)$ space.
Since the Bellman-Ford algorithm uses backpointers to remember
predecessors of nodes in distances, a highest-energy node $w$ 
of a non-positive cycle $C$ in line~\ref{line:zero_energy} can be determined
in $O(n)$.
Finally, the \emph{for} loop of line \ref{line:for_each_edge}
will consider each edge $(x,w)$ at most once,
hence it requires $O(m)$ for all iterations of the \emph{while} loop.
Thus $\Energynodesalgo$ uses $O(k\cdot n\cdot m)$ time and $O(n)$ space in total.
The last execution of Bellman-Ford to determine the energy of
negative-energy nodes does not affect the complexity.
The result follows.
\end{proof}

\begin{corollary}\label{cor:energy}
Let $G=(V,E,\Weight)$ be a weighted graph of $n$ nodes and $m$ edges. 
The minimum initial credit value problem for $G$ can be solved in $O( n^2\cdot m)$ time
and $O(n)$ space.
\end{corollary}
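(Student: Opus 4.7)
The plan is to obtain Corollary~\ref{cor:energy} as an immediate consequence of Theorem~\ref{them:energy}. Since the parameter $k$ in Theorem~\ref{them:energy} is defined as the number of zero-energy nodes plus one, and the graph has only $n$ nodes in total, I trivially have $k \leq n+1 = O(n)$. Substituting this bound into the $O(k \cdot n \cdot m)$ time complexity of Theorem~\ref{them:energy} yields the claimed $O(n^2 \cdot m)$ bound, while the $O(n)$ space bound carries over verbatim.

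In other words, the only step is the worst-case estimate $|\{v \in V : \Energy(v) = 0\}| \leq n$, which holds because this is a subset of $V$. There is no real obstacle here; the corollary is just a coarser restatement of the refined, instance-sensitive bound established in Theorem~\ref{them:energy} and is useful for comparison with the previously known worst-case bounds discussed in the introduction (in particular the $O(n^3 \cdot m \cdot \log(n \cdot W))$ bound of~\cite{Bouyer08} that it improves upon).
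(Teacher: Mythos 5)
Your proposal is correct and matches the paper's (implicit) reasoning: the corollary is stated as a direct consequence of Theorem~\ref{them:energy}, obtained by bounding $k=|\{v\in V:\Energy(v)=0\}|+1\leq n+1=O(n)$ in the $O(k\cdot n\cdot m)$ bound, with the $O(n)$ space bound unchanged. No gap here.
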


\subsection{The value problem for constant-treewidth graphs}
We now turn our attention to the minimum initial credit value problem
for constant-treewidth graphs $G=(V,E,\Weight)$.
Note that in such graphs $m=O(n)$, thus \cref{them:energy}
gives an $O(n^3)$ time solution as compared to the existing $O(n^4\cdot \log (n\cdot W))$ time solution.
This section shows that we can do significantly better, 
namely reduce the time complexity to $O(n\cdot \log n)$.
This is mainly achieved by algorithm $\Energynodesalgotw$
for computing the set $X$ of $0$-energy nodes fast in constant-treewidth graphs.

\noindent{\bf Extended $\Plus$ and $\Min$ operators.}
Recall the graph $G_2=(V_2,E_2,\Weight_2)$ from the last section.
Given $\Tree(G)$, a balanced and binary tree-decomposition $\Tree(G_2)$
of $G_2$ with width increased by $1$ can be easily constructed by
(i)~inserting $z$ to every bag of $\Tree(G)$, and
(ii)~adding a new root bag that contains only $z$.
Let $\Intstriplets=\Ints\times V\times \Ints$.
For a map $f:V_2\times V_2\rightarrow \Ints$,
define the map $g_{f}:V_2\times V_2\rightarrow \Intstriplets$ as

\[
g_{f}(u,v) =
\left\{
	\begin{array}{ll}
		(f(u,v), u, 0) & ~~\mbox{if } f(u,v)<0 \mbox{ or } v=z\\
		(f(u,v), v, f(u,v))  & ~~\mbox{otherwise } \\
	\end{array}
\right.
\]

and for triplets of elements $\alpha_1=(a_1,b_1, c_1),\alpha_2=(a_2,b_2, c_2)\in \Intstriplets$,
define the operations
\begin{compactenum}
\item $\Min(\alpha_1,\alpha_2)=\alpha_i$ with $i=\arg\min_{j\in \{1,2\}}a_j$
\item $\alpha_1\Plus \alpha_2 = (a_1+a_2, b, c)$, where $c=\max(c_1,a_1+c_2)$ and $b=b_1$ if $c=c_1$ else $b=b_2$.
\end{compactenum}

In words, if $f$ is a weight function, then $g_{f}(u,v)$ selects the weight of the edge $(u,v)$, 
and its highest-energy node (i.e., $u$ if $f(u,v)<0$,
and $v$ otherwise, except when $v=z$), together with the weight to reach 
that highest energy node node from $u$.
Recall that algorithm $\Cyclealgo$ from \cref{sec:min_cycle}
traverses a tree-decomposition bottom-up, and for each encountered bag
$\Bag$ stores a map $\LD_{\Bag}$ such that $\LD_{\Bag}(u,v)$ is upper bounded by
the weight of the shortest $\Ushape$-shaped simple path $u\Path v$ (or simple cycle, if $u=v$).
Our algorithm $\Energynodesalgotw$ for determining all $0$-energy nodes is similar,
only that now $\LD_{\Bag}$ stores triplets $(a,b,c)$ where $a$ is the weight of a $\Ushape$-shaped path $P$,
$b$ is a highest-energy node of $P$, and $c$ the weight of a highest-energy prefix of $P$.
For two triplets $\alpha_1=(a_1,b_1, c_1),\alpha_2=(a_2,b_2, c_2)\in \Intstriplets$
corresponding to $\Ushape$-shaped paths $P_1$ and $P_2$,
$\Min(\alpha_1,\alpha_2)$ selects the path with the smallest weight,
and $\alpha_1\Plus \alpha_2$ determines the weight, a highest-energy node,
and the weight of a highest-energy prefix of the path $P_1\circ P_2$ (see \cref{fig:credit2}).

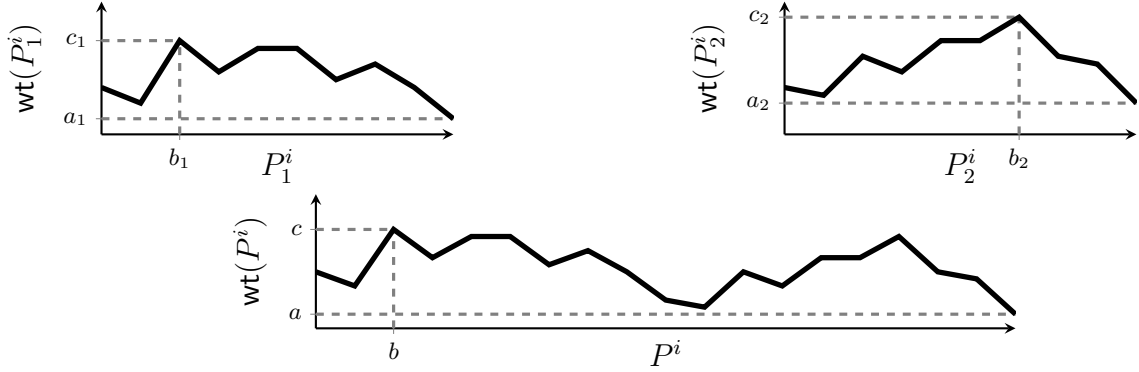
\begin{figure}
\def \scale{0.8}
\centering
\begin{subfigure}{0.45\textwidth}
\begin{tikzpicture}
	\begin{axis}[scale=\scale,
	axis line style = thick,
	compat=newest,
	 xlabel style={yshift=1.7em},
	unit vector ratio*=5 1 1,
	width=\textwidth,
		axis lines=left,
		ymax=11,
		ymin=-6,
		xtick=\empty, ytick=\empty,
		extra x ticks={2},
		extra x tick labels={$b_1$},
		extra y ticks={6,-4},
		extra y tick labels={$c_1$, $a_1$},
		xlabel={\large $P_1^i$},
		ylabel={\large $\Weight(P_1^i)$}]
		\draw[dashed,gray, very thick] (axis cs: 2,-6)--(axis cs: 2,6);
	\draw[dashed,gray, very thick] (axis cs: 0,6)--(axis cs: 2,6);
	\draw[dashed,gray, very thick] (axis cs: 0,-4)--(axis cs: 9,-4);
	\addplot[color=black,mark=.,line width=2] coordinates {
		(0,0)
		(1,-2)
		(2,6)
		(3,2)
		(4,5)
		(5,5)
		(6,1)
		(7,3)
		(8,-0)
		(9,-4)
	};
	\end{axis}
	\end{tikzpicture}
	\end{subfigure}%
	\hfill
	\begin{subfigure}{0.45\textwidth}
	\begin{tikzpicture}
	\begin{axis}[
	scale=\scale,
	axis line style = thick,
	compat=newest,
	unit vector ratio*=5 1 1,
	width=\textwidth,
		axis lines=left,
		xlabel style={yshift=1.7em},
		ymax=11,
		ymin=-6,
		xtick=\empty, ytick=\empty,
		extra x ticks={6},
		extra x tick labels={$b_2$},
		extra y ticks={9,-2},
		extra y tick labels={$c_2$, $a_2$},
		xlabel={\large $P_2^i$},
		ylabel={\large $\Weight(P_2^i)$}]
		\draw[dashed,gray, very thick] (axis cs: 6,-6)--(axis cs: 6,9);
	\draw[dashed,gray, very thick] (axis cs: 0,9)--(axis cs: 6,9);
	\draw[dashed,gray, very thick] (axis cs: 0,-2)--(axis cs: 9,-2);
	\addplot[color=black,mark=.,line width=2] coordinates {
		(0,0)
		(1,-1)
		(2,4)
		(3,2)
		(4,6)
		(5,6)
		(6,9)
		(7,4)
		(8,3)
		(9,-2)
	};
	\end{axis}
	\end{tikzpicture}
	\end{subfigure}%
	\hfill
	\begin{subfigure}{\textwidth}
	\centering
	\begin{tikzpicture}
	\begin{axis}[
	scale=\scale,
	axis line style = thick,
	compat=newest,
	unit vector ratio*=5.5 1 1,
	width=0.8\textwidth,
		axis lines=left,
		xlabel style={yshift=1.7em},
		ymax=11,
		ymin=-8,
		xtick=\empty, ytick=\empty,
		extra x ticks={2},
		extra x tick labels={$b$},
		extra y ticks={6, -6},
		extra y tick labels={$c$, $a$},
		xlabel={\large $P^i$},
		ylabel={\large $\Weight(P^i)$}]
		\draw[dashed,gray, very thick] (axis cs: 2,-8)--(axis cs: 2,6);
	\draw[dashed,gray, very thick] (axis cs: 0,6)--(axis cs: 2,6);
	\draw[dashed,gray, very thick] (axis cs: 0,-6)--(axis cs: 18,-6);
	\addplot[color=black,mark=.,line width=2] coordinates {
	(0,0)
		(1,-2)
		(2,6)
		(3,2)
		(4,5)
		(5,5)
		(6,1)
		(7,3)
		(8,-0)
		(9,-4)
		(10,-5)
		(11,0)
		(12,-2)
		(13,2)
		(14,2)
		(15,5)
		(16,0)
		(17,-1)
		(18,-6)
	};
	\end{axis}
	\end{tikzpicture}
	\end{subfigure}
\caption{
Illustration of the $\alpha_1\Plus \alpha_2$ operation, corresponding to concatenating paths $P_1$ and $P_2$.
The path $P_j^i$ denotes the $i$-th prefix of $P_j$. We have $P=P_1\circ P_2$, and the corresponding tripplet
$\alpha=(a,b,c)$ denotes the weight $a$ of $P$, its highest-energy node $b$, and the weight $c$ of a highest-energy prefix.
}\label{fig:credit2}
\end{figure}

\noindent{\bf Algorithm $\Energynodesalgotw$.}
The algorithm $\Energynodesalgotw$ for computing the set
of $0$-energy nodes in constant-treewidth graphs follows the same principle as $\Energynodesalgo$
for general graphs. 
It stores a map of edge weights $\Weightmap:E_2\rightarrow \Ints\cup\{\infty\}$, and initially
$\Weightmap(u,v)=\Weight_2(u,v)$ for each $(u,v)\in E_2$.
The algorithm performs a bottom-up pass, 
and computes in each bag the local distance map 
$\LD_{\Bag}:\Bag\times\Bag\rightarrow \Intstriplets$
that captures $\Ushape$-shaped $u\Path v$ paths, together with their highest-energy nodes.
When a non-positive cycle $C$ is found in some bag $\Bag$, the method $\Killcyclealgo$ is called
to modify the edges of a highest-energy node $w$ of $C$
and its incoming neighbors by updating the map $\Weightmap$.
These updates generally affect the distances between
the rest of the nodes in the graph, hence some local distance maps $\LD_{\Bag}$
need to be corrected.
However, each such edge modification only affects the local distance map
of bags that appear in a path from a bag $\Bag'$ to some
ancestor $\Bag''$ of $\Bag'$.
Instead of restarting the computation as in $\Energynodesalgo$,
the method $\Updatealgo$ is called to correct those local distance maps
along the path $\Bag'\Path\Bag''$.

\begin{algorithm}
\small
\DontPrintSemicolon
\caption{$\Energynodesalgotw$}\label{algo:energynodesalgotw}
\KwIn{A weighted graph $G_2=(V_2,E_2,\Weight_2)$ and a binary tree-decomposition $\Tree(G_2)$}
\KwOut{The set $\{v\in V_2\setminus\{z\}: \Energy(v)=0\}$}
\BlankLine
\tcp{Initialization}
Assign $X\leftarrow\emptyset$\\
\ForEach{$u,v\in V_2$}{
\eIf{$(u,v)\in E_2$}{
Assign $\Weightmap(u,v)\leftarrow \Weight_2(u,v)$
}
{
Assign $\Weightmap(u,v)\leftarrow \infty$
}
}
\tcp{Computation}
Apply a post-order traversal on $\Tree(G)$, and examine each bag $\Bag$ with children $\Bag_1, \Bag_2$\\
\Begin{
\ForEach{$u, v\in \Bag$}{
Assign $\LD_{\Bag}(u,v)\leftarrow\Min(\LD_{\Bag_1}(u,v), \LD_{\Bag_2}(u,v), g_{\Weightmap}(u,v))$\\
}
\uIf{$\Bag$ is the root bag of a node $x$}{
\ForEach{$u, v\in \Bag$}{
Assign $\LD'_{\Bag}(u,v)\leftarrow \Min(\LD_{\Bag}(u,v), \LD_{\Bag}(u,x)\Plus\LD_{\Bag}(x,v))$\\
}
Assign $\LD_{\Bag}\leftarrow \LD'_{\Bag}$\\
\uIf{$\exists u\in \Bag$ with $\LD_{\Bag}(u,u)=(a,b,c)$ where $a\leq 0$}{\label{line:if_negative}
Assign $X\leftarrow X\cup \{b\}$\label{line:add_zero_node}\\
Execute $\Killcyclealgo$ on $b$ and $\Bag$ 
}
}
}
\Return $X$
\end{algorithm}

\begin{algorithm}
\small
\SetAlgorithmName{Method}{method}
\DontPrintSemicolon
\caption{$\Killcyclealgo$}\label{algo:killcyclealgo}
\KwIn{A $0$-energy node $w$ and a bag $\Bag$ of $\Tree(G_2)$}
\KwOut{Updates the local distance function $\LD_{\Bag}$}
\BlankLine
\ForEach{edge $(x,w)\in E_2$}{
Assign $\Weightmap(x,z)\leftarrow \min(\Weight_2(x,w), \Weightmap(x,z))$\label{line:tw_mod1}\\
Assign $\Weightmap(x,w)\leftarrow \infty$\label{line:tw_mod2}\\
Assign $y\leftarrow \arg\max_{u\in\{x,w\}}\Level(u)$\label{line:y}\\
Let $\Bag'$ be the smallest-level ancestor of $\Bag_y$ examined by $\Energynodesalgotw$ so far\\
Execute $\Updatealgo$ on $\Bag_y$ and its ancestor $\Bag'$
}
\Return $\LD_{\Bag}$
\end{algorithm}

\begin{algorithm}
\small
\SetAlgorithmName{Method}{method}
\DontPrintSemicolon
\caption{$\Updatealgo$}\label{algo:updatealgo}
\KwIn{A bag $\Bag'$ and an ancestor $\Bag''$}
\KwOut{The local distances $\LD_{\Bag}$ along the path $\Bag'\Path \Bag''$}
\BlankLine
Traverse the path $\Bag'\Path \Bag''$ bottom-up, and examine each bag $\Bag$ with children $\Bag_1, \Bag_2$\\
\Begin{
\ForEach{$u, v\in \Bag$}{
Assign $\LD_{\Bag}(u,v)\leftarrow\Min(\LD_{\Bag_1}(u,v), \LD_{\Bag_2}(u,v), g_{\Weightmap}(u,v))$\\
}
\uIf{$\Bag$ is the root bag of a node $x$}{
\ForEach{$u, v\in \Bag$}{
Assign $\LD'_{\Bag}(u,v)\leftarrow \Min(\LD_{\Bag}(u,v), \LD_{\Bag}(u,x)\Plus\LD_{\Bag}(x,v))$\\
}
Assign $\LD_{\Bag} \leftarrow \LD'_{\Bag}$\\
\uIf{$\exists u\in \Bag$ with $\LD_{\Bag}(u,u)=(a,b,c)$ where $a\leq 0$}{\label{line:update_if_negative}
Assign $X\leftarrow X\cup \{b\}$\label{line:update_add_zero_node}\\
Execute $\Killcyclealgo$ on $b$ and $\Bag$ 
}
}
}
\end{algorithm}

The following lemma establishes the correctness of $\Energynodesalgotw$.
Similarly as for \cref{lem:energynodesalgo_cor1} and \cref{lem:energynodesalgo_cor2} 
we denote with $\Graphstruct^k$ the graph obtained by considering the edges $(u,v)$
for which $\Weightmap(u,v)<\infty$ when $|X|=k$.

\begin{lemma}\label{lem:energynodesalgotw_cor}
For every $v\in V\setminus\{z\}$ we have $v\in X$ iff $\Energy(v)=0$.
\end{lemma}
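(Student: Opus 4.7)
The plan is to mirror the two directions of \cref{lem:energynodesalgo_cor1} and \cref{lem:energynodesalgo_cor2}, replacing the Bellman--Ford based cycle detection on the current graph $\Graphstruct^k$ with the $\Ushape$-shaped path machinery of $\Cyclealgo$, now lifted to the triplet semiring equipped with the operators $\Min$ and $\Plus$. The central invariant I would maintain is a strict analogue of \cref{lem:ushape}: whenever $\Energynodesalgotw$ has freshly (re)computed $\LD_{\Bag}$ for a bag $\Bag$ while the current working graph is $\Graphstruct^k$, then for every $u,v\in\Bag$ the triplet $\LD_{\Bag}(u,v)=(a,b,c)$ upper-bounds by $a$ the weight of $\Ushape$-shaped simple $u\Path v$ paths (and simple cycles if $u=v$) of $\Graphstruct^k$, while $b$ and $c$ are respectively a highest-energy node of the witness path and the weight of a highest-energy prefix ending at $b$. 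The soundness of $\Min$ and $\Plus$ on these triplets reduces to a direct case-analysis of which of the two concatenated subpaths contributes the highest-energy prefix.

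For the forward direction, I would induct on the number of nodes already in $X$. When the $(k+1)$-th node $b$ is appended in line~\ref{line:add_zero_node} or line~\ref{line:update_add_zero_node}, the invariant says that $\Graphstruct^k$ contains a $\Ushape$-shaped non-positive cycle $C$ of which $b$ is a highest-energy node. If $z\notin C$ then $C$ is a cycle of the original $G$, so $b$ is a highest-energy node of the infinite path $C^\omega$ in $G$ and $\Energy(b)=0$. If $z\in C$, then the edge $(x,z)$ on $C$ was installed in line~\ref{line:tw_mod1} of $\Killcyclealgo$ out of an edge $(x,w')$ with $w'\in X$ added in an earlier round; the argument of \cref{lem:energynodesalgo_cor1}, combined with the induction hypothesis and Claim~\ref{claim:zero_energy}, then yields $\Energy(b)\geq \Energy_{w'}(b)=0$.

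For the backward direction, I would mirror \cref{lem:energynodesalgo_cor2}. Given $v$ with $\Energy(v)=0$, pick a witness infinite path $\mathcal{P}$; it must traverse a non-positive simple cycle $C$ of $G$, and by \cref{rem:ushape_cycle} the cycle $C$ appears as a $\Ushape$-shaped path $u\Path u$ in $\Bag_u$, where $u\in C$ has smallest level. By the invariant, the first time $\Bag_u$ is (re)processed after the last modification of an edge of $C$, $\LD_{\Bag_u}(u,u)$ has first coordinate at most $\Weight(C)\leq 0$, the test in line~\ref{line:if_negative} or line~\ref{line:update_if_negative} fires, and some highest-energy node of a current non-positive $\Ushape$-cycle enters $X$. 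If it is not $v$, an argument parallel to \cref{lem:energynodesalgo_cor2} (using that after the update a non-positive path from $v$ going through the newly installed $(x,z)$ edge remains in $\Graphstruct^{k+1}$ and forms a $\Ushape$-shaped cycle at some ancestor of $\Bag_v$) shows that $v$ itself must eventually be added by a subsequent $\Killcyclealgo$/$\Updatealgo$ cascade before the traversal finishes.

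The hard part will be verifying that the invariant is preserved across the interleaving of the main post-order pass with the local re-computations triggered by $\Killcyclealgo$. The crux is that each edge modification in lines~\ref{line:tw_mod1}--\ref{line:tw_mod2} only invalidates $\Ushape$-shaped paths going through a modified edge incident to the killed node $w$, and by \cref{lem:separator_property} any such path must pass through $\Bag_y$ with $y$ as in line~\ref{line:y}. Hence it suffices to repair the bags on the root-path from $\Bag_y$ up to the currently smallest-level already-examined ancestor, which is exactly what $\Updatealgo$ does bottom-up. A careful induction on the number of kills, combined with \cref{lem:ushape} applied to $\Graphstruct^k$, then shows that every $\LD_{\Bag}$ is correct at the moment it is consulted, closing both directions of the equivalence.
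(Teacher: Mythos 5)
Your proposal is correct and takes essentially the same route as the paper: an invariant lifting \cref{lem:ushape} to the triplet domain, the observation that the modifications of $\Killcyclealgo$ only affect bags on the root-path from $\Bag_y$ to the smallest-level already-examined ancestor (which $\Updatealgo$ repairs), and then the correctness arguments of \cref{lem:energynodesalgo_cor1} and \cref{lem:energynodesalgo_cor2} for the general-graph algorithm, which the paper cites wholesale rather than re-running as you sketch. The only slight imprecision is attributing the localization to paths through ``a modified edge incident to $w$'': the new edge $(x,z)$ from line~\ref{line:tw_mod1} is not incident to $w$, and the paper instead notes that the bags affected by it lie along $\Bag_x\Path\Bag'$, which is a subpath of $\Bag_y\Path\Bag'$, so your conclusion still stands.
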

\begin{proof}
We only need to argue that $\Energynodesalgotw$ correctly computes the 
non-positive cycles in every $\Graphstruct^k$, as then the correctness follows
from the correctness Lemma~\ref{lem:energynodesalgo_cor1} and Lemma~ \ref{lem:energynodesalgo_cor2}
of $\Energynodesalgo$.
Since by \cref{rem:ushape_cycle} every cycle is a $\Ushape$-shaped path in some bag,
it suffices to argue that whenever $\Energynodesalgotw$ examines a bag $\Bag$
(either directly, or through $\Updatealgo$),
every $\Ushape$-shaped simple cycle in $\Bag$ has been considered by 
the algorithm. 
This is true if no calls to $\Killcyclealgo$ are made (\emph{if} block in line~\ref{line:if_negative}),
as then $\Energynodesalgotw$ is the same as $\Cyclealgo$, and hence it follows from \cref{lem:ushape}.

Now consider that $\Killcyclealgo$ is called and $\Bag'$ is the smallest-level bag examined by
$\Energynodesalgotw$ so far.
Let $w$ be the $0$-energy node, $x$ an incoming neighbor of $w$, and $y=\arg\max_{u\in\{x,w\}}\Level(u)$
(as in line~\ref{line:y} of $\Killcyclealgo$).
By the definition of $\Ushape$-shaped paths, the edge $(x,w)$ appears only in paths
that are $\Ushape$-shaped in bags along the path $\Bag_y\Path \Bag'$.
Hence, after setting $\Weightmap(x,w)=\infty$ (line~\ref{line:tw_mod2} of $\Killcyclealgo$),
it suffices to update the local distance maps of these bags.
Similarly, after setting $\Weightmap(x,z)\leftarrow \min(\Weight_2(x,w), \Weightmap(x,z))$ (line~\ref{line:tw_mod1} of $\Killcyclealgo$),
since $\Bag_z$ is the root of $\Tree(G_2)$,
it suffices to update the local distance maps in the bags along the path $\Bag_x\Path\Bag'$.
Either $x=y$, or, by the properties of tree-decompositions, $\Bag_x$ is an ancestor of $\Bag_y$.
Hence in either case $\Bag_x\Path \Bag'$ is a subpath of $\Bag_y\Path \Bag'$,
and both edge modifications in lines~\ref{line:tw_mod1} and \ref{line:tw_mod2} are handled
correctly by calling $\Updatealgo$ on $\Bag_y$ and its ancestor $\Bag'$.
The result follows.
\end{proof}

\begin{lemma}\label{lem:energynodesalgotw_compl}
Algorithm $\Energynodesalgotw$ runs in $O(n\cdot \log n)$ time and $O(n)$ space.
\end{lemma}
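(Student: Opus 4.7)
The plan is to decompose the total running time into three contributions: (i) the base post-order traversal of $\Tree(G_2)$, (ii) all invocations of $\Killcyclealgo$, and (iii) all invocations of $\Updatealgo$, including those spawned recursively. Since $\Tree(G_2)$ is balanced and binary with height $O(\log n)$ and has $O(n)$ bags of constant width, a single visit to a bag requires only $O(1)$ operations on the triplet maps $\LD_{\Bag}$. Hence the base traversal costs $O(n)$ time, exactly as in $\Cyclealgo$ by \cref{lem:cyclealgo_complexity}.

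The crucial amortization concerns how often each edge of $G_2$ is touched by $\Killcyclealgo$. After its for-loop processes an incoming edge $(x,w)$, line~\ref{line:tw_mod2} sets $\Weightmap(x,w) \leftarrow \infty$, so the edge is effectively deleted and will never be processed by $\Killcyclealgo$ again. Since $G$ has constant treewidth we have $|E_2|=O(n)$, and therefore the total number of edge processings across all invocations of $\Killcyclealgo$ is at most $O(n)$. For each such edge, $\Updatealgo$ is invoked on a path $\Bag_y \Path \Bag'$ in $\Tree(G_2)$; balancedness bounds the length of this path by $O(\log n)$, and each bag on it requires only $O(1)$ work (constant-width local distance maps, constant-time $\Min$ and $\Plus$ operations). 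Although $\Updatealgo$ may itself trigger further calls to $\Killcyclealgo$ through line~\ref{line:update_if_negative}, those calls process a disjoint set of edges (each edge is deleted upon its first such processing), and hence the edge-budget of $O(n)$ still absorbs all recursive activity. Multiplying gives $O(n \log n)$ total work for all $\Updatealgo$ executions, and combining the three components yields the stated time bound of $O(n \log n)$.

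For space, each local distance map $\LD_{\Bag}$ holds a constant number of triplets in $\Intstriplets$, so the collection of maps across all $O(n)$ bags occupies $O(n)$ words in total; $\Weightmap$ is stored sparsely over the surviving edges (of which there are $O(n)$), and $X$ contains at most $n$ nodes, giving total space $O(n)$. The main obstacle in formalising this argument lies in justifying that the recursive interplay between $\Updatealgo$ and $\Killcyclealgo$ does not inflate the per-edge cost beyond $O(\log n)$: one must exploit that $\Killcyclealgo$ is correct to update only the bags whose $\LD$ could actually change upon deleting $(x,w)$, namely those on the path $\Bag_y \Path \Bag'$, as already argued in \cref{lem:energynodesalgotw_cor} via the $\Ushape$-shape property of tree decompositions, so that no additional ``re-traversal'' of already stable bags is ever required.
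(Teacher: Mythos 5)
Your proof is correct and follows essentially the same amortization as the paper's: an $O(n)$ base traversal, $O(\log n)$ work per $\Updatealgo$ call with constant work per bag, and a bound of $O(n)$ on the number of such calls because edges whose weight is set to $\infty$ are never reprocessed. The paper phrases the charge per node rather than per edge (each node triggers $\Killcyclealgo$ at most once, at cost $O(h\cdot k_w)$ for its $k_w$ incoming edges, since after its in-edges are removed it can never again be a highest-energy node of a non-positive cycle), but the resulting total $O(n + h\cdot |E_2|) = O(n\cdot\log n)$ and the $O(n)$ space accounting coincide with yours.
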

\begin{proof}
Let $h=O(\log n)$ be the height of $\Tree(G_2)$.
\begin{compactenum}
\item The method $\Updatealgo$ performs a constant number of operations
to each bag in the path $\Bag'\Path \Bag''$ where $\Bag''$ is ancestor of $\Bag'$,
hence each call to $\Updatealgo$ requires $O(h)$ time.
\item The method $\Killcyclealgo$ performs a constant number of operations locally
and one call to $\Updatealgo$ for each incoming edge of $w$. Hence if $w$ has 
$k_w$ incoming edges, $\Killcyclealgo$ requires $O(h\cdot k_w)$ time.
Since $\Killcyclealgo$ sets $\Weightmap(x,w)=\infty$ for all incoming edges of $w$,
the node $w$ will not appear in non-positive cycles thereafter.
\item The algorithm $\Energynodesalgotw$ is similar to $\Cyclealgo$
which runs in $O(n)$ time and space (\cref{lem:cyclealgo_complexity}).
The difference is in the additional \emph{if} block in line~\ref{line:if_negative}.
Since $\Killcyclealgo$ is called when a non-positive cycle is detected, 
it will be called at most once for each node $u\in V_2\setminus\{z\}$
(from either $\Energynodesalgotw$ or $\Updatealgo$).
It follows that the total time of $\Energynodesalgotw$ is
\[
O\left(n+\sum_u (h\cdot k_u)\right)=O(n+h\cdot |E_2|)=O(n\cdot \log n)
\]
where $k_u$ is the number of incoming edges of node $u$.
Since $\Killcyclealgo$ stores constant size of information in
each bag of $\Tree(G_2)$, the $O(n)$ space bound follows.
\end{compactenum}
\end{proof}

After the set $X$ of $0$-energy nodes has been computed, it remains
to execute one instance of the single-source shortest path problem on the graph $\Graphstruct^{|X|}$
(similarly as for our solution on general graphs).
It is known that single-source distances in tree-decompositions of constant
treewidth can be computed in $O(n)$ time~\cite{Chaudhuri95,Chatterjee15}.
We thus obtain the following theorem.

\begin{theorem}\label{them:energy_tw}
Let $G=(V,E,\Weight)$ be a weighted graph of $n$ nodes with constant treewidth.
The minimum initial credit value problem for $G$ can be solved in $O(n\cdot \log n)$ time
and $O(n)$ space.
\end{theorem}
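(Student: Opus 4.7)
The plan is to assemble the three components developed in the preceding subsections: (i)~constructing a tree-decomposition for $G_2$, (ii)~running $\Energynodesalgotw$ to compute the set $X$ of $0$-energy nodes, and (iii)~computing single-source distances from $z$ in the residual graph $\Graphstruct^{|X|}$. Each component fits within the claimed $O(n\log n)$ time and $O(n)$ space budget.

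First, I would build the augmented graph $G_2=(V_2, E_2, \Weight_2)$ exactly as in \cref{sec:init_credit} and a balanced binary tree-decomposition $\Tree(G_2)$ of constant width: by \cref{them:tree_decomp} a tree-decomposition $\Tree(G)$ of $G$ can be constructed in $O(n)$ time and space, and the simple extension described just before $\Energynodesalgotw$ (inserting $z$ into every bag and adding a fresh root bag $\{z\}$) yields such a $\Tree(G_2)$ with $O(n)$ bags and height $O(\log n)$. Then I would invoke $\Energynodesalgotw$ on $G_2$ with this decomposition. By \cref{lem:energynodesalgotw_cor}, the returned set $X$ is exactly $\{v\in V\setminus\{z\}:\Energy(v)=0\}$, and by \cref{lem:energynodesalgotw_compl} the whole call takes $O(n\log n)$ time and $O(n)$ space.

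Next, for every $v\in X$ I would output $\Energy(v)=0$. For every remaining node $u\in V\setminus X$, \cref{lem:distance_z} (whose hypothesis is satisfied by the correctness of $\Energynodesalgotw$) reduces the problem to computing $\Distance^{|X|}(u,z)$ in the residual graph $\Graphstruct^{|X|}$ produced by $\Energynodesalgotw$, and then setting $\Energy(u) = -\Distance^{|X|}(u,z)$. Since $\Graphstruct^{|X|}$ is obtained from $G_2$ only by deleting some edges and adjusting weights of edges to $z$, its treewidth is still bounded by that of $G_2$, so $\Tree(G_2)$ remains a valid constant-width tree-decomposition of (an appropriate subgraph of) $\Graphstruct^{|X|}$ after deletions. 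I would then reverse the edges of $\Graphstruct^{|X|}$ and run the single-source shortest-path algorithm for constant-treewidth graphs from source $z$, which by the results cited (\cite{Chaudhuri95,Chatterjee15}) runs in $O(n)$ time.

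Summing up the three phases gives $O(n)+O(n\log n)+O(n)=O(n\log n)$ time and $O(n)$ space, establishing the theorem. The only delicate point, and what I expect to be the main bookkeeping obstacle, is ensuring that $\Graphstruct^{|X|}$ as produced by $\Energynodesalgotw$ is exactly the same residual object whose distances characterise $\Energy(u)$ via \cref{lem:distance_z}; this is immediate from the fact that $\Killcyclealgo$ performs the same edge modifications (lines~\ref{line:tw_mod1}-\ref{line:tw_mod2}) as the corresponding rewrites (lines~\ref{line:mod1}-\ref{line:mod2}) in $\Energynodesalgo$, just in a different traversal order, so the final weight map $\Weightmap$ and the set $X$ coincide with those produced by the general-graph algorithm on which \cref{lem:distance_z} was proved.
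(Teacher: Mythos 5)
Your proposal is correct and follows essentially the same route as the paper: compute $X$ with $\Energynodesalgotw$ (correctness and the $O(n\cdot\log n)$, $O(n)$ bounds from \cref{lem:energynodesalgotw_cor} and \cref{lem:energynodesalgotw_compl}), then recover the remaining energies via \cref{lem:distance_z} by one single-source shortest-path computation on $\Graphstruct^{|X|}$, which takes $O(n)$ time for constant-treewidth graphs. Your closing remark about why the final residual graph and set $X$ coincide with those of the general-graph algorithm (order-independence of the edge rewrites) is exactly the point the paper leaves implicit with ``similarly as for our solution on general graphs.''
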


\section{Experimental Results}\label{sec:results}
In the current section we report on preliminary experimental evaluation of our algorithms,
and compare them to existing methods.
Our algorithm for the minimum mean cycle problem provides improvement
for constant-treewidth graphs, and has thus been evaluated on
low-treewidth graphs obtained from the control-flow graphs of programs.
For the minimum initial credit problem, we have implemented our algorithm
for arbitrary graphs, thus the benchmarks used in this case are general graphs
(i.e., not constant-treewidth graphs).

\subsection{Minimum mean cycle}
We have implemented our approximation algorithm for the minimum mean cycle problem,
and we let the algorithm run for as many iterations until a minimum mean cycle was discovered, 
instead of terminating after $O(\log(n/\epsilon))$ iterations required by \cref{them:min_mean_approx}.
We have tested its performance in running time and space against
six other minimum mean cycle algorithms from \cref{tab:algos}
in control-flow graphs of programs.
The algorithms of Burns and Lawler solve the more general ratio cycle problem,
and have been adapted to the mean cycle problem as in~\cite{DIG98}.

\begin{table}[h]
\small
\renewcommand{\arraystretch}{1.3}
\centerline{
\begin{tabular}{|c||c|c|c|c|c|c|}
\hline
& Madani~\cite{Mad02} &  Burns~\cite{B91} & Lawler~\cite{L76} & Dasdan-Gupta~\cite{Dasdan98} & Hartmann-Orlin~\cite{Hartmann93} & Karp~\cite{Karp78}\\
\hline
\hline
Time & $O(n^2)$ & $O(n^3)$ & $O(n^2\cdot \log (n\cdot W))$  & $O(n^2)$ & $O(n^2 )$ &   $O(n^2)$\\
\hline
Space & $O(n)$ & $O(n)$ & $O(n)$ &  $O(n^2)$ & $O(n^2)$ & $O(n^2)$\\
\hline
\end{tabular}
}
\caption{Asymptotic complexity of compared minimum mean cycle algorithms.}\label{tab:algos}
\end{table}

\noindent{\bf Setup.}
The algorithms were executed on control-flow graphs of methods of programs
from the DaCapo benchmark suit~\cite{Blackburn06},
obtained using the Soot framework~\cite{Soot}.
For each benchmark we focused on graphs of at least $500$ nodes.
This supplied a set of medium sized graphs (between $500$ and $1300$ nodes), 
in which integer weights were assigned uniformly at random in the range $\{-10^3,\dots, 10^3\}$.
Memory usage was measured with \cite{jamm}.

\noindent{\bf Results.}
\cref{fig:fig_cycle} shows the average time and space performance of the examined algorithms
(bars that exceeded the maximum value in the y-axis have been truncated).
Our algorithm has much smaller running time than each other algorithm, in almost all cases.
In terms of space, our algorithm significantly outperforms all others, except for the algorithms of Lawler, Burns, and Madani.
Both ours and these three algorithms have linear space complexity, but ours also suffers some
constant factor overhead from the tree-decomposition 
(i.e., the same node generally appears in multiple bags).
Note that the strong performance of these three algorithms in space is followed by
poor performance in running time.

\begin{figure}
\centerline{
\includegraphics[scale=0.37]{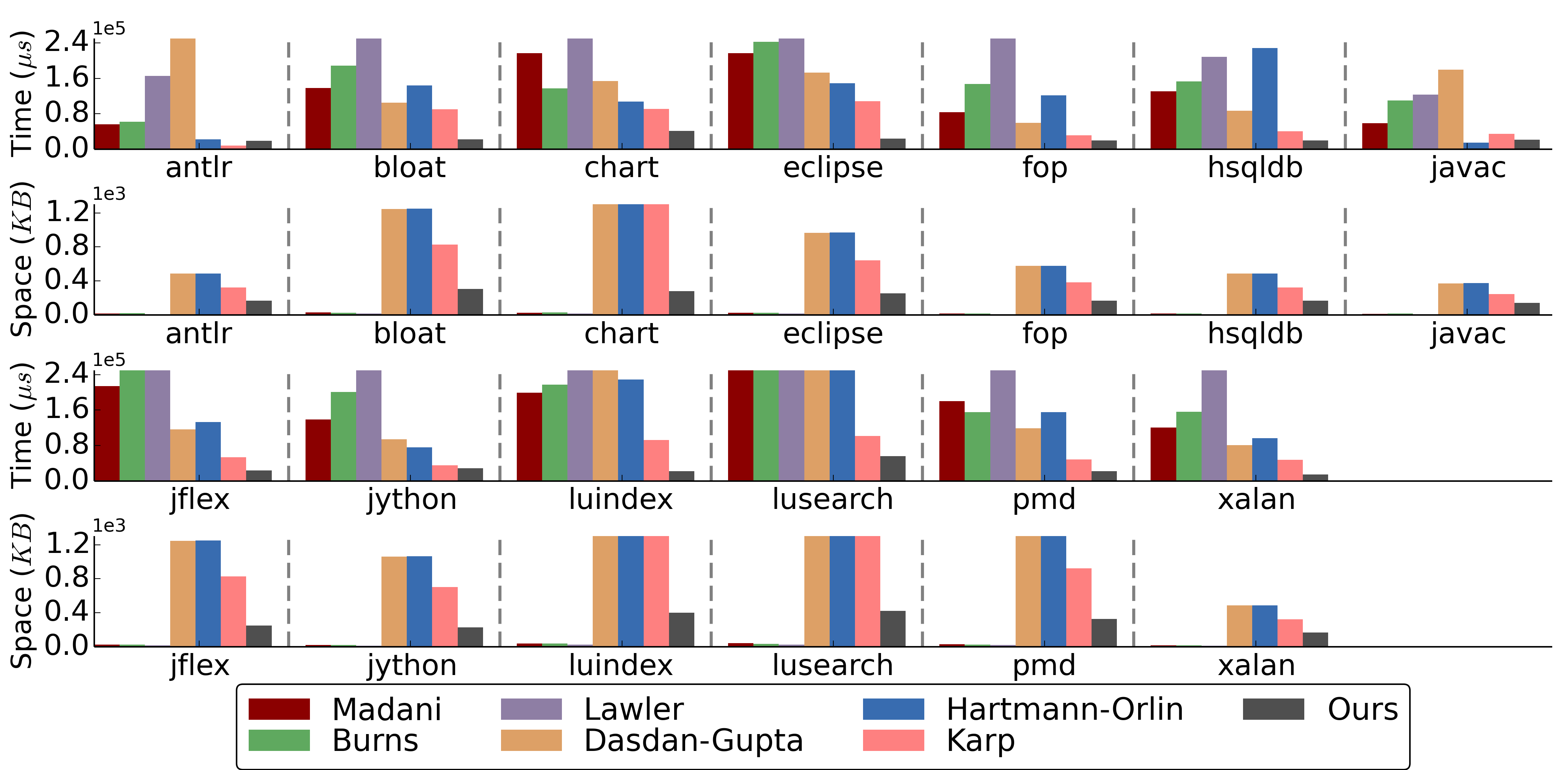}
}
\caption{Average performance of minimum mean cycle algorithms.}\label{fig:fig_cycle}
\end{figure}

\begin{table}[h]
\small
\renewcommand{\arraystretch}{1.2}
\centerline{
\begin{tabular}{|c||c|c|c|c|c|c|c|c|}
\hline
& ~Madani~&~~Burns~~&~Lawler~&~Dasdan-Gupta~&~Hartmann-Orlin~&~~Karp~~&~~\textbf{Ours}~~\\
\hline
\hline
antlr & 55814 & 61571 & 165789 & 284996 & 21893 & 7824 & \textbf{18402} \\
\hline
bloat & 138416 & 188356 & 350302 & 105145 & 144171 & 89949 & \textbf{22391} \\
\hline
chart & 216962 & 137112 & 573767 & 154062 & 107229 & 90717 & \textbf{40890} \\
\hline
eclipse & 216859 & 242323 & 667869 & 172792 & 148523 & 107864 & \textbf{23486} \\
\hline
fop & 83080 & 147384 & 406371 & 59176 & 121742 & 31557 & \textbf{19306} \\
\hline
hsqldb & 131041 & 153232 & 208328 & 86840 & 228632 & 40486 & \textbf{19957} \\
\hline
javac & 58443 & 110149 & 122996 & 179647 & 14719 & 34188 & \textbf{20874} \\
\hline
jflex & 214297 & 524822 & 554093 & 116820 & 133323 & 53329 & \textbf{23860} \\
\hline
jython & 139106 & 200922 & 503766 & 94052 & 75569 & 34864 & \textbf{28760} \\
\hline
luindex & 199650 & 217980 & 1240411 & 274319 & 228856 & 92379 & \textbf{22142} \\
\hline
lusearch & 433211 & 447280 & 1180051 & 263467 & 333297 & 101584 & \textbf{55652} \\
\hline
pmd & 180551 & 155118 & 585315 & 118578 & 155682 & 48326 & \textbf{21978} \\
\hline
xalan & 120897 & 156111 & 394458 & 81103 & 96873 & 47996 & \textbf{14493} \\
\hline
\end{tabular}
}
\caption{The time performance of \cref{fig:fig_cycle} (in $\mu s$).}\label{tab:results_time}
\end{table}

\begin{table}[h]
\small
\renewcommand{\arraystretch}{1.2}
\centerline{
\begin{tabular}{|c||c|c|c|c|c|c|c|c|}
\hline
&~Madani~&~~Burns~~&~Lawler~&~Dasdan-Gupta~&~Hartmann-Orlin~&~~Karp~~&~~\textbf{Ours}~~ \\
\hline
\hline
antlr & 16805 & 21018 & 11144 & 486435 & 489176 & 322384 & \textbf{168648} \\
\hline
bloat & 29723 & 24500 & 19458 & 1245272 & 1249444 & 826645 & \textbf{306026} \\
\hline
chart & 27130 & 30567 & 18172 & 2025448 & 2029294 & 1347048 & \textbf{278586} \\
\hline
eclipse & 24215 & 26488 & 16293 & 965063 & 968595 & 640720 & \textbf{254393} \\
\hline
fop & 16845 & 17975 & 11052 & 576174 & 578646 & 382338 & \textbf{169738} \\
\hline
hsqldb & 16798 & 19309 & 11144 & 486435 & 489096 & 322384 & \textbf{168648} \\
\hline
javac & 14681 & 17047 & 9664 & 372697 & 375453 & 247019 & \textbf{144721} \\
\hline
jflex & 24561 & 26946 & 16322 & 1244495 & 1248036 & 826743 & \textbf{251549} \\
\hline
jython & 22518 & 23337 & 14899 & 1059291 & 1062570 & 703581 & \textbf{228207} \\
\hline
luindex & 39309 & 40223 & 25604 & 3521607 & 3526792 & 2342833 & \textbf{399076} \\
\hline
lusearch & 41488 & 33350 & 26991 & 3387914 & 3393343 & 2253403 & \textbf{422679} \\
\hline
pmd & 32204 & 24481 & 21021 & 1391551 & 1395786 & 923975 & \textbf{326137} \\
\hline
xalan & 16798 & 17763 & 11144 & 486435 & 489102 & 322384 & \textbf{168648} \\
\hline
\end{tabular}
}
\caption{The space performance of \cref{fig:fig_cycle} (in KB).}\label{tab:results_space}
\end{table}

\subsection{Minimum initial credit}
We have implemented our algorithm for the minimum initial credit problem on general graphs
and experimentally evaluated its performance on a subset of benchmark weighted graphs
from the DIMACS implementation challenges~\cite{dimacs}.
Our algorithm was tested against the existing method of~\cite{Bouyer08}.
The direct implementation of the algorithm of~\cite{Bouyer08} performed poorly,
and for this we also implemented an optimized version
(using techniques such as caching of intermediate results and early loop termination).
Note that we compare algorithms for general graphs, without the low-treewidth restriction.

\noindent{\bf Setup.}
For each input graph we first computed its minimum mean value $\mu^{\ast}$ using Karp's algorithm,
and then subtracted $\mu^{\ast}$ from the weight of each edge to ensure that
at least one non-positive cycle exists (thus the energies are finite).

\noindent{\bf Results.}
\cref{fig:fig_credit} depicts the running time of the algorithm of~\cite{Bouyer08} (with and without optimizations) vs our algorithm. 
A timeout was forced at $10^{10} \mu s$.
Our algorithm is orders of magnitude faster, and scales better than the existing method.

\begin{figure}
\centerline{
\includegraphics[scale=0.6]{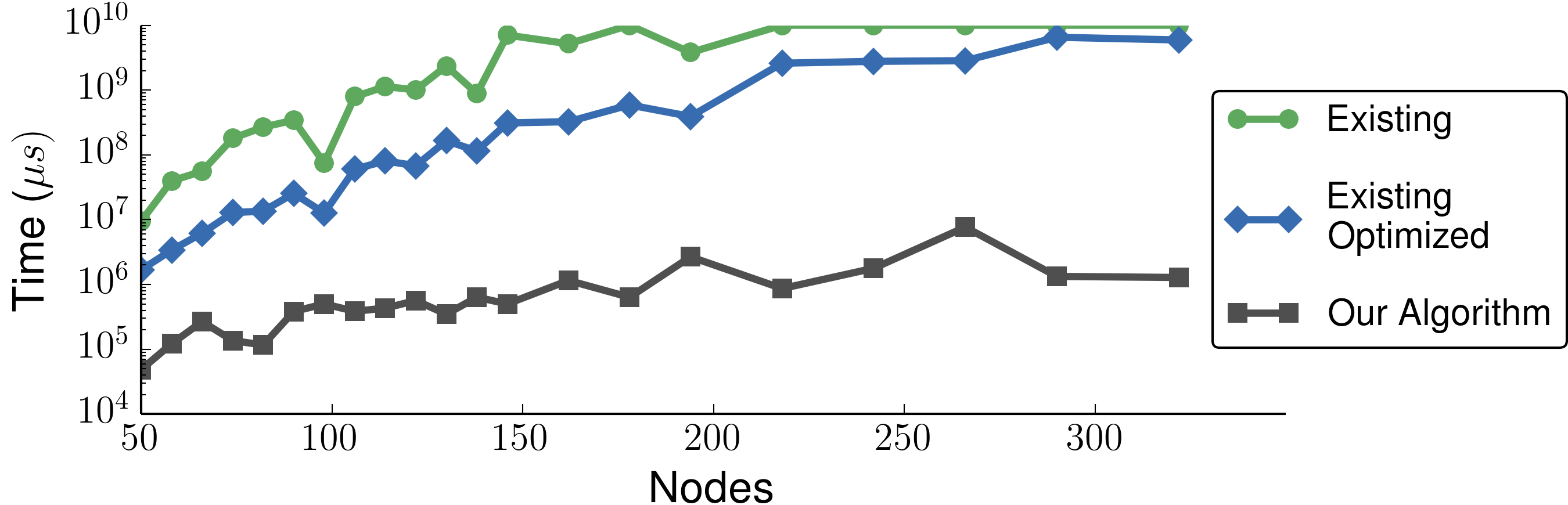}
}
\caption{Comparison of running times for the minimum initial credit problem.}\label{fig:fig_credit}
\end{figure}

\begin{table}[h]
\small
\renewcommand{\arraystretch}{1.3}
\centerline{
\begin{tabular}{|c||c|c|c|}
\hline
$n$ &~~~~~~Existing~~~~~~&~~Existing Optimized~~&~~~~~~~~\textbf{Ours}~~~~~~~~\\
\hline
\hline
50 & 9453565 & 1680924 & \textbf{48635} \\
\hline
58 & 39744129 & 3394193 & \textbf{121774} \\
\hline
66 & 55766874 & 6201044 & \textbf{267825} \\
\hline
74 & 180080064 & 12833610 & \textbf{136239} \\
\hline
82 & 267993314 & 13563936 & \textbf{116518} \\
\hline
90 & 342779026 & 25453589 & \textbf{383292} \\
\hline
98 & 74622910 & 12648395 & \textbf{501365} \\
\hline
106 & 791441986 & 60294150 & \textbf{385799} \\
\hline
114 & 1133055323 & 80584700 & \textbf{432290} \\
\hline
122 & 1004898322 & 67982455 & \textbf{564838} \\
\hline
130 & 2354354250 & 165193753 & \textbf{348112} \\
\hline
138 & 881117317 & 114743182 & \textbf{636481} \\
\hline
146 & 7050113907 & 311146051 & \textbf{501314} \\
\hline
162 & 5179877563 & 324877384 & \textbf{1154447} \\
\hline
178 & Timeout & 589873640 & \textbf{635155} \\
\hline
194 & 3799301931 & 391240954 & \textbf{2672127} \\
\hline
218 & Timeout & 2596083382 & \textbf{866213} \\
\hline
242 & Timeout & 2774469734 & \textbf{1779512} \\
\hline
266 & Timeout & 2839496222 & \textbf{7676638} \\
\hline
290 & Timeout & 6526762301 & \textbf{1332403} \\
\hline
322 & Timeout & 5929433611 & \textbf{1282258} \\
\hline
\end{tabular}
}
\caption{The time performance of \cref{fig:fig_credit} in $\mu s$.}\label{tab:results_time2}
\end{table}

\clearpage

{
\bibliographystyle{splncs}
\bibliography{bibliography}
}

\end{document}